\newtheorem{defn}{Definition}
\newtheorem{lemme}[defn]{Lemma}
\newtheorem{cor}[defn]{Corollary}
\newcommand{\T}[1]{T(#1)}
\newcommand{\I}[1]{ I_{#1}=(X_{i_{#1}}, S_{I_{#1}}, \int {I_{#1}},\ext {I_{#1}},\paire {I_{#1})}}
\renewcommand{\int}[1]{D_{int}{{(#1)}}}
\newcommand{\ext}[1]{D_{ext}{{(#1)}}}
\newcommand{\paire}[1]{D_{pair}{{(#1)}}}
\newcommand{\dimx}[1]{\dim(#1)}
\newcommand{\ddeux}[1]{d_{\leq 2}(#1)}
\newcommand{\distvec}[2]{\vec{\mathbf{{d_{#1}}(#2)}}}
\renewcommand{\vec}{\mathbf}
\begin{document}
\title{Metric dimension parameterized by treewidth in chordal graphs\thanks{This work was supported by ANR project GrR (ANR-18-CE40-0032)}}
\author{Nicolas Bousquet\inst{1} \and Quentin Deschamps\inst{1} \and Aline Parreau\inst{1}}
\institute{Univ. Lyon, Universit\'e Lyon 1, CNRS, LIRIS UMR 5205, F-69621, Lyon, France.}
\maketitle

\begin{abstract}
The metric dimension has been introduced independently by Harary, Melter~\cite{harray1975} and Slater~\cite{slater1975} in 1975 to identify vertices of a graph $G$ using its distances to a subset of vertices of $G$. A \emph{resolving set} $X$ of a graph $G$ is a subset of vertices such that, for every pair $(u,v)$ of vertices of $G$, there is a vertex $x$ in $X$ such that the distance between $x$ and $u$ and the distance between $x$ and $v$ are distinct. The metric dimension of the graph is the minimum size of a resolving set. Computing the metric dimension of a graph is NP-hard even on split graphs and interval graphs. 

Bonnet and Purohit~\cite{bonnet2021} proved that the metric dimension problem is W[1]-hard parameterized by treewidth. Li and Pilipczuk strenghtened this result by showing that it is NP-hard for graphs of treewidth $24$ in~\cite{li2022}.
In this article, we prove that that metric dimension is FPT parameterized by treewidth in chordal graphs.
\end{abstract}

\section{Introduction}\label{intro}

Determining the position of an agent on a network is a central problem. One way to determine his position is to place sensors on nodes of the network and the agents try to determine their positions using their positions with respect to these sensors. More formally, assume that the agent knows the topology of the graph. Can he, by simply looking at his position with respect to the sensors determine for sure his position in the network? Conversely, where do sensors have to be placed to ensure that any agent at any possible position can easily determine for sure its position? These questions received a considerable attention in the last decades and have been studied in combinatorics under different names such as metric dimension, identifying codes, locating dominating sets...

Let $G=(V,E)$ be a graph and $s,u,v$ be three vertices of $G$. We say that $s$ \emph{resolves} the pair $(u,v)$ if the distance between $s$ and $u$ is different from the distance between $s$ and $v$.
A \emph{resolving set} of a graph $G=(V,E)$ is a subset $S$ of vertices of $G$ such that any vertex of $G$ is identified by its distances to the vertices of the resolving set. In other words, $S$ is a resolving set if for every pair $(u,v)$ of vertices of $G$, there is a vertex $s$ of $S$ such that $s$ resolves $(u,v)$.  
The \emph{metric dimension} of $G$, denoted by $\dim(G)$, is the smallest size of a resolving set of $G$.
%Again differently, the set of distance vectors of $S$ are pairwise distinct for all the vertices of $V$. 
This notion has been introduced in 1975 by Slater \cite{slater1975} for trees and by Harary and Melter \cite{harray1975} for graphs to simulate the moves of a sonar. The \emph{metric dimension} of $G$ is the smallest size of a resolving set of $G$. The associated decision problem, called the \textsc{Metric Dimension} problem, is defined as follows: given a graph $G$ and an integer $k$, is the metric dimension of~$G$ is at most $k$?

The \textsc{Metric Dimension} problem is NP-complete~\cite{garey1979} even for restricted classes of graphs like planar graphs~\cite{Diaz2012}.
Epstein et al.~\cite{epstein2015} proved that this problem is NP-complete on split graphs, bipartite and co-bipartite graphs. The problem also is NP-complete on interval graphs~\cite{foucaud2017} or sub-cubic graphs~\cite{hartung2013}. On the positive side, computing the metric dimension is linear on trees~\cite{harray1975,slater1975} and polynomial in outer-planar graphs~\cite{Diaz2012}. 
%Understanding and computing efficiently the metric dimension beyond trees is extremely hard and many algorithms or upper bounds on metric dimension actually builds up from trees, see e.g.~\cite{sedlar2022} or~\cite{eroh2017} in which the authors study how the metric dimension can be modified when a small number of edges is added to a tree.

\paragraph{Parameterized algorithms.} 
In this paper, we consider the \textsc{Metric Dimension} problem from a parameterized point of view.  We say a problem $\Pi$ is \emph{fixed parameter tractable} (FPT) for a parameter $k$ if any instance of size $n$ and parameter $k$ can be decided in time $f(k) \cdot n^{O(1)}$. Two types of parameters received a considerable attention in the litterature: the size of the solution and the "width" of the graph (for various widths, the most classical being the treewidth). 

Hartung and Nichterlein proved in~\cite{hartung2013} that the \textsc{Metric Dimension} problem is W[2]-hard parameterized by the size of the solution. Foucaud et al. proved that it is FPT parameterized by the solution size in interval graphs in~\cite{foucaud2017}. This result was extended by Belmonte et al. who proved in~\cite{BelmonteFGR16} that \textsc{Metric Dimension} is FPT parameterized by the size of the solution plus the tree-length of the graph. In particular, it implies that computing the metric dimension for chordal graph is FPT parameterized by the size of the solution.

%In this paper, we study the \textsc{Metric Dimension} problem parameterized by treewidth on chordal graphs, a question that received some attention in the last few years. 
\textsc{Metric Dimension} is FPT paramerized by the modular width~\cite{BelmonteFGR16}. Using Courcelle's theorem, one can also remark that it is FPT paramerized by the treedepth of the graph as observed in~\cite{gima2022}.
\textsc{Metric dimension} has been proven W[1]-hard parameterized by the treewidth by Bonnet and Purohit in~\cite{bonnet2021}. Li and Pilipczuk strenghtened this result by showing that it is NP-complete for graphs of treewidth, and even pathwidth, $24$ in~\cite{li2022}. While \textsc{Metric dimension} is polynomial on graphs of treewidth~$1$ (forests), its complexity is unknown for graphs of treewidth~$2$ is open (even if it is known to be polynomial for outerplanar graphs). 
Our main result is the following:

\begin{theorem}\label{thm:main}
\textsc{Metric Dimension} is FPT parameterized by treewidth on chordal graphs. That is, \textsc{Metric Dimension} can be decided in time $O(n^3+n^2 \cdot f(\omega))$ on chordal graphs of clique number $\omega$.
\end{theorem}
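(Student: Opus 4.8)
The plan is to run a bottom-up dynamic program on a clique tree of $G$. The decisive point is that in a chordal graph the only global distance information a subtree must expose to the rest of the graph is carried by an amount of data bounded by a function of $\omega$ alone, so a tree-decomposition DP can succeed here even though \textsc{Metric Dimension} is W[1]-hard by treewidth in general. First I would fix and root a clique tree $T$ of $G$: its bags are the maximal cliques, each of size at most $\omega$, there are at most $n$ of them, and $T$ is computable in linear time. I would precompute all pairwise distances, which is the source of the $O(n^3)$ term. The engine of the whole argument is the elementary remark that for any vertex $u$ and any clique $C$ the values $d(u,c)$, $c\in C$, pairwise differ by at most one (endpoints of an edge), hence lie in $\{m,m+1\}$ with $m=\min_{c\in C} d(u,c)$. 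Thus the distance vector of $u$ to $C$ is encoded by a pair $(m,\sigma)$, where $\sigma\subseteq C$ is the set realizing the minimum; there are only $2^{\abs{C}}\le 2^{\omega}$ possible shapes $\sigma$.

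Next I would prove a separator lemma describing exactly when a vertex $s$ lying strictly above a bag $C$ resolves a pair $(u,v)$ lying in the subtree below $C$. As $C$ separates them, $d(s,x)=\min_{c\in C}(d(s,c)+d(c,x))$, and writing each distance vector as its minimum plus a shape one obtains $d(s,u)-d(s,v)=(m_u-m_v)+B$, where $B$ is an integer depending only on the three shapes $\sigma_s,\sigma_u,\sigma_v$ and satisfying $\abs{B}\le 2$. Three consequences drive the DP: (i) if the two distance vectors coincide, no vertex above resolves the pair, so it must be resolved inside the subtree; (ii) if $\sigma_u=\sigma_v$ but $m_u\neq m_v$, or more generally if $\abs{m_u-m_v}\ge 3$, then every vertex above resolves the pair; (iii) in the remaining cases whether $s$ resolves the pair depends only on the shapes and on $m_u-m_v$ truncated at $\pm 2$. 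This is exactly why only a capped distance (the $\ddeux{\cdot}$-type information) must be stored, which keeps the state finite.

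The DP then processes $T$ from the leaves. For a subtree with root bag $C$ the state records which vertices of $C$ are selected ($\le 2^{\omega}$ choices) together with a \emph{demand profile}: for each triple made of two shapes relative to $C$ and a truncated value of $m_u-m_v$, a bit saying whether some still-unresolved pair of that type survives below $C$. By the separator lemma the number of such triples is $O(4^{\omega})$, so the profile is one of at most $2^{O(4^{\omega})}=f(\omega)$ possibilities, and the table stores for each state the minimum number of selected vertices. Merging the children at $C$ consists of combining their profiles, resolving across-children pairs (separated by $C$, hence governed by the same lemma), crediting the pairs newly resolved by the vertices chosen in $C$, and reprojecting every distance vector from $C$ to the parent bag. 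This reprojection is the operation carried by the $\Up{\cdot},\Down{\cdot},\Aug{\cdot},\Red{\cdot}$ maps: since distances to the parent factor through $C$, a shape and a capped minimum-difference relative to $C$ determine the corresponding data relative to the parent up to a bounded correction. At the root I would require the empty demand and read off the optimum. Each bag contributes $f(\omega)$ work for the profile bookkeeping and $O(n)$ work to classify its vertices and the pairs they create, giving $O(n^2 f(\omega))$ and $O(n^3+n^2 f(\omega))$ with preprocessing.

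The hard part is the design and correctness of the demand profile: it must be simultaneously a \emph{faithful} summary — retaining enough to decide every resolution ever required higher in $T$ — and bounded by $f(\omega)$. The subtleties I expect to fight with are that a pair merely \emph{resolvable} from outside may fail to be resolved if the optimum selects no vertex outside, so the profile must carry the obligation upward rather than discharge it optimistically; that the vertices of $C$ play simultaneously the roles of separator, of potential resolver, and of pair endpoints, which must be accounted for without double counting; and that the capped minimum-differences remain a correct invariant under reprojection to the parent, for which the $\pm 2$ truncation must be taken slightly generously so as to absorb the bounded correction incurred at each step.
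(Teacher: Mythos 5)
Your overall strategy is the one the paper follows: a bottom-up DP on a clique tree, encoding each vertex's distances to a bag as a minimum plus a $\{0,1\}$-shape (the paper's \emph{trace}), and using the fact that a bag is a clique separator so that resolution of a pair from across the bag depends only on shapes and a capped difference of minima. Your separator calculus is essentially the paper's Lemmas~\ref{resolve_far} and the resolution-by-vectors Definition~\ref{def_resolve_vec} (and in fact the correct additive error is $\pm1$, not $\pm2$, so a slightly larger cap than your $\pm2$ is not needed once the bound is tightened; the paper instead restricts attention to vertices at distance at most $2$ from the bag and proves Lemma~\ref{resolve_close}/Corollary~\ref{resolve_dec} to show nothing deeper must be remembered). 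Your reprojection step corresponds to the paper's introduce/forget rules with the maps $f,f^-,f^+$.

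There are, however, two concrete gaps. First, your DP state records only the selected vertices of the current bag plus a demand profile of unresolved pair types; it omits the \emph{supply} side, namely the set of traces on the bag of solution vertices chosen anywhere inside the subtree (the paper's $D_{int}(I)$) and the set of traces promised from outside (the paper's $D_{ext}(I)$). Without these you cannot perform the merge you describe: at a join node, a pair left unresolved in one child may be resolved only by a deep solution vertex of the other child, and the only data that determines this is that vertex's trace on the common bag; symmetrically, a pair may legitimately be left for a resolver that will be selected far above, which must be guessed and later certified. Adding these sets keeps the state bounded by a function of $\omega$, but as written the table is not a sufficient statistic and the combination step is not well defined. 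Second, you correctly flag that a pair resolved by \emph{every} vertex outside the subtree is only actually resolved if the solution contains such a vertex, and you propose to ``carry the obligation upward,'' but you do not say how to do this with bounded state once the difference of minima has been capped. The paper's resolution is different and simpler: it guesses one vertex $v$ of an optimal solution, roots the clique tree at a bag $\{v\}$, and computes the minimum resolving set containing the root; then every proper subtree has a solution vertex outside it, so Lemma~\ref{resolve_far} discharges all such pairs unconditionally. This guess is rerun for all $n$ choices of $v$, and that repetition (not all-pairs shortest paths) is what produces the $O(n^3)$ term in the stated running time; your single-pass accounting would need to be adjusted accordingly.
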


Recall that, on chordal graphs, the treewidth is equal to the size of a maximum clique minus one.
Our proof is based on a dynamic programming algorithm.
One of the main difficulty to compute the metric dimension is that a pair of vertices might be resolved by a vertex far from them in the graph. This non-locality, implies that it is not simple to use classical algorithmic strategies like divide-and-conquer, induction or dynamic programming since a single edge or vertex modification somewhere in the graph might change the whole solution\footnote{The addition of a single edge in a graph might modify the metric dimension by $\Omega(n)$, see e.g.~\cite{eroh2015}.}. 

The first ingredient of our algorithm consists in proving that, given a chordal graph, if we are using a clique tree of a desirable form and make some simple assumptions on the shape of an optimal solution, we can ensure that resolving a pairs of vertices close to a separator implies that we resolve all the pairs of vertices in the graph. 
Using this lemma, we build a dynamic programming algorithm 
 that computes the minimum size of a resolving set containing a given vertex in FPT-time parameterized by treewdith.
%that computes a generalization of the metric dimension in FPT-time parameterized by treewidth.
%The generalization of metric dimension is defined as follows: given a graph $G$, a vertex $v$, and a clique tree $T$ rooted in $v$ with some desirable properties (which will be detailed later), we return the minimum size of a resolving set containing $v$. 

The special type of clique tree used in the paper, inspired from~\cite{kloks1994}, is presented in Section~\ref{sec:treedec}. We then give some properties of resolving sets  in chordal graphs in Section~\ref{sec:chordal}. These properties will be needed to prove the correction and the running time of the algorithm. Then, we present the definition of the extended problem in Section~\ref{sec:defpb} and the rules of the dynamic programming in Section~\ref{sec:rules} where we also prove the correction of the algorithm. We end by an analysis of the complexity of the algorithm in Section~\ref{sec:complexity}.

\paragraph{Further work.}
The function of the treewidth in our algorithm is probbly not optimal and we did not try to optimize it to keep the algorithm as simple as possible. A first natural question is the existence of an algorithm running in time $2^{\omega} \cdot Poly(n)$ for chordal graphs.

We know that Theorem~\ref{thm:main} cannot be extended to bounded treewidth graphs since \textsc{Metric Dimension} is NP-hard on graphs of treewidth at most $24$~\cite{li2022}. One can nevertheless wonder if our proof technique can be adapted to design polynomial time algorithms for graphs of treewidth at most $2$ on which the complexity status of \textsc{Metric Dimension} is still open. 

Our proof nevertheless crucially relies on the fact that a separators $X$ of a chordal graphs is a clique and then the way a vertex in a component of $G\setminus X$ is interacting with vertices in another component of $G \setminus X$ is simple. One can wonder if there is a tree decomposition in $G$ where all the bags have diameter at most $C$, is it true that \textsc{Metric Dimension} is FPT parameterized by the size of the bags plus $C$.  Note that, since \textsc{Metric Dimension} is NP-complete on chordal graphs, the problem is indeed hard parameterized by the diameter of the bags only.

\section{Preliminaries}

\subsection{Clique trees}\label{sec:treedec}

%We present in this part the general definitions and basic results that will be used in the rest of the article. 
Unless otherwise stated, all the graphs considered in this paper are undirected, simple, finite and  connected. For standard terminology and notations on graphs, we refer the reader to~\cite{bookgraph}. Let us first define some notations we use throughout the article.

Let $G=(V,E)$ be a graph where $V$ is the set of vertices of $G$ and $E$ the set of edges; we let $n=|V|$. For two vertices $x$ and $y$ in $G$, we denote by $d(x,y)$ the length of a shortest path between $x$ and $y$ and call it \emph{distance between $x$ and $y$}. For every $x \in V$ and $U \subseteq V$, the  \emph{distance between $x$ and $U$}, denoted by $d(x,U)$, is the minimum distance between $x$ and a vertex of $U$.
Two vertices $x$ and $y$ are \emph{adjacent} if $xy \in E$. A \emph{clique} is a graph where all the pairs of vertices are adjacent. We denote by $\omega$ the size of a maximum clique.
Let $U$ be a set of vertices of $G$. We denote by $G \setminus U$ the subgraph of $G$ induced by the set of vertices $V \setminus U$. We say that $U$ is a \emph{separator} of $G$ if $G \setminus U$ is not connected. If two vertices $x$ and $y$ of $V \setminus U$ belong to two different connected components in $G'$, we say that $U$ \emph{separates} $x$ and $y$. If the set $U$ induces a clique, we say that $U$ is a \emph{clique separator} of $G$.

%\begin{defn}\label{def_resolve}
%Let $G$ be a graph and $(u,v)$ a pair of vertices of $G$. A vertex $s$ of $G$ \emph{resolves} the pair $(u,v)$ if $d(u,s) \neq d(v,s)$. A set of vertices $S$ \emph{resolves} the pair $(u,v)$ if at least one of the vertices of $S$ resolves $(u,v)$. We say that $S$ is a \emph{resolving set} of $G$ if $S$ resolves all the pairs of vertices of $G$. The \emph{metric dimension} of $G$, denoted by $\dim(G)$, is the smallest size of a resolving set of $G$.
%\end{defn}

%Let us define another important concept in this article, the treewidth.

\begin{defn}\label{def_tree_decom}
A \emph{tree-decomposition} of a graph $G$ is a pair $(X , T)$ where $T$ is a tree and $X = \{X_i| i \in V (T)\}$ is a collection of subsets (called bags) of $V (G)$ such that:
\begin{itemize}
\item $ \bigcup_{i  \in V (T)} X_i = V (G)$.
\item For each edge $xy \in E(G), x, y \in X_i$ for some $i \in V (T)$.
\item For each $x \in V (G)$, the set $\{i | x \in X_i\}$ induces a connected sub-tree of $T$.
\end{itemize}
\end{defn}

Let $G$ be a graph and  $(X,T)$ a tree decomposition of $G$. The \emph{width} of the tree-decomposition $(X,T)$ is the biggest size of a bag minus one. The \emph{treewidth} of $G$ is the smallest width of $(X,T)$  amongst all the tree-decompositions $(X,T)$ of $G$.

Chordal graphs are graphs with no induced cycle of length at least $4$. A characterization given by Dirac in~\cite{dirac1961} ensures chordal graphs are graphs where minimal vertex separators are cliques. Chordal graphs admit clique trees which are tree-decompositions such that all the bags are cliques.

Our dynamic programming algorithm is performed in a bottom-up on a clique tree of the graph with more properties than the ones given by Definition~\ref{def_tree_decom}. These properties permits to simplify the analysis of the algorithm. We adapt the decomposition of~\cite[Lemma 13.1.2]{kloks1994} to get this tree-decomposition. 

\begin{lemme}\label{tree_dec_inter}
    Let $G=(V,E)$ be a chordal graph. There exists a clique tree  $(X,T)$ of $G$ such that, (i) $T$ is a rooted tree that contains at most $4n$ nodes, (ii) for every bag $i \in V(T)$, the set of vertices $X_i$ induces a clique in $G$ and (iii) $T$ contains four types of nodes which are:
\begin{itemize}
    \item Leaf nodes which satisfy $|X_i|= 1$ or,
    \item Introduce nodes $i$ which have exactly one child $j$, and that child satisfies $X_i = X_j \cup \{v\}$ for some vertex $v \in V (G) \setminus X_j$ or,
    \item Forget nodes $i$ which have exactly one child $j$, and that child satisfies $X_i = X_j \setminus \{v\}$ for some vertex $v \in X_j$ or,
    \item Join node $i$ which have exactly two children $i_1$ and $i_2$ satisfying $X_i = X_{i_1} = X_{i_2}$.
\end{itemize}
Moreover, such a clique tree can be found in linear time.
\end{lemme}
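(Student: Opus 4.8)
# Proof Proposal for Lemma~\ref{tree_dec_inter}

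The plan is to start from a standard clique tree of the chordal graph $G$ and transform it into the desired ``nice'' form through a sequence of local surgeries, following the classical construction of a nice tree-decomposition but taking care to preserve the clique property of every bag. First I would invoke the characterization via Dirac's theorem (already cited in the excerpt): every chordal graph admits a clique tree, i.e.\ a tree-decomposition $(X,T)$ in which each bag $X_i$ induces a clique, and such a tree with $O(n)$ nodes can be computed in linear time (for instance by running a maximum-cardinality-search / lexicographic BFS to obtain a perfect elimination ordering, then assembling the maximal cliques into a clique tree). This gives us property (ii) for free at the start, and we must verify that all subsequent transformations keep every bag a clique.

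Next I would root the tree $T$ at an arbitrary node and then apply the standard reshaping into leaf, introduce, forget, and join nodes, exactly as in~\cite[Lemma 13.1.2]{kloks1994}. The key observation is that every operation used in that reduction only \emph{removes} vertices from bags or \emph{duplicates} an existing bag, never adding a vertex that was not already present in some adjacent bag. Concretely: (a) to make every internal node have at most two children, one inserts copies of a bag, and a copy of a clique is still a clique; (b) a join node is created with $X_i = X_{i_1} = X_{i_2}$, so its bag equals an existing clique; (c) the path between a parent bag $X_i$ and a child bag $X_j$ is replaced by a chain of introduce/forget nodes that insert the vertices of $X_i \setminus X_j$ one at a time and remove the vertices of $X_j \setminus X_i$ one at a time. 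The crucial point for preserving (ii) is that at every intermediate node of such a chain the bag is a subset of $X_i \cup X_j$; since $X_i$ and $X_j$ are both cliques of $G$ lying on the same tree path, and since the clique-tree property guarantees $X_i \cap X_j$ separates the two sides, one checks that $X_i \cup X_j$ need not itself be a clique, so the ordering of insertions and deletions must be chosen carefully: I would always \emph{forget} the vertices of $X_j \setminus X_i$ first (shrinking to the separator $X_i \cap X_j$, which is a clique) and only then \emph{introduce} the vertices of $X_i \setminus X_j$. With this ordering every intermediate bag is either a subset of $X_j$ or a superset of $X_i \cap X_j$ contained in $X_i$, hence always a clique.

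Finally I would account for the leaf nodes and the node count. Each leaf bag is shrunk by a chain of forget nodes down to a single vertex, giving property (i)'s leaf condition $|X_i| = 1$. For the bound of at most $4n$ nodes: each vertex $v$ is introduced along the tree at most once per maximal subtree where it appears, each is forgotten a bounded number of times, and the number of join nodes is bounded by the number of branching points, which is $O(n)$ since the original clique tree has $O(n)$ nodes; a careful bookkeeping (charging introduce/forget nodes to the unique vertex inserted or deleted, and join/leaf nodes to the original $O(n)$ bags) yields the explicit constant $4n$. The whole construction runs in linear time because each of the finitely many surgery types is applied a linear number of times and each is a constant-time local edit.

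The main obstacle I anticipate is the second paragraph: ensuring that the clique property (ii) survives the introduce/forget reshaping. The naive nice-tree-decomposition construction does not care about bags being cliques, and a careless interleaving of introduces and forgets would pass through bags of the form (separator) $\cup$ (one vertex from each side), which need not be cliques. The fix---forgetting down to the separator before introducing the new side---is what makes the construction work, and verifying that this is always possible uses precisely the fact that in a clique tree the intersection $X_i \cap X_j$ of adjacent bags is a clique separator. Getting the node-count constant exactly equal to $4n$ rather than merely $O(n)$ is a routine but slightly fiddly charging argument that I would defer to the explicit reference~\cite{kloks1994} and adapt.
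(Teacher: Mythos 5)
Your handling of properties (ii) and (iii) is correct, and the key observation is the right one: between two adjacent clique bags $X_i$ (parent) and $X_j$ (child), first forgetting $X_j\setminus X_i$ down to the clique separator $X_i\cap X_j$ and only then introducing $X_i\setminus X_j$ keeps every intermediate bag inside one of the two cliques, hence a clique. Note, however, that this is a genuinely different route from the paper's. The paper does not reshape a pre-existing clique tree at all: it invokes Kloks' Lemma~13.1.2, whose construction builds the decomposition directly along a perfect elimination ordering, splicing, for each simplicial vertex $v$ with $N(v)=S$, a short path of bags $B,\,S,\,S\cup\{v\},\,S,\,B$ into an existing edge of the tree; the paper's only added content is the remark that this argument uses nothing about $k$-trees beyond the existence of a simplicial vertex, which chordal graphs provide. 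That splicing creates four nodes per vertex and essentially no new leaves, which is exactly where the constant $4$ in $4n$ comes from.

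The genuine gap in your version is property (i), the $4n$ node bound, and it is not the "routine fiddly charging" you defer to the reference -- Kloks' accounting applies to his construction, not to yours. In your construction every leaf of the starting clique tree must be shrunk to a singleton by a chain of $|X_\ell|-1$ introduce nodes, and more generally each vertex $v$ is introduced once per bottom branch of the subtree of bags containing $v$; only forgetting is the cheap direction (at most once per vertex). Concretely, take a clique $S=\{u_1,\dots,u_k\}$, a vertex $u_{k+1}$ adjacent to all of $S$, and vertices $w_1,\dots,w_m$ each adjacent exactly to $S$, so $n=k+1+m$ and the maximal cliques are $S\cup\{u_{k+1}\}$ and the $S\cup\{w_i\}$. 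If the linear-time preprocessing hands you the star-shaped clique tree centred at $S\cup\{u_{k+1}\}$ (a perfectly valid clique tree, and your proof does not control this choice), your procedure creates $m$ leaves each requiring a chain of $k$ introduce nodes, i.e.\ at least $mk=\Theta(n^2)$ nodes when $m\approx k\approx n/2$; in general the introduce count is governed by $\sum_{\text{join nodes}}|X_{\text{join}}|$, which is $\Theta(n\omega)$. So the quantity you propose to charge ("each vertex introduced at most once per maximal subtree where it appears") is precisely the unbounded one. The damage is limited -- an $O(n\omega)$-node tree would still support the paper's FPT algorithm with a worse polynomial factor -- but it does not prove the lemma as stated; to get $4n$ you really need to build the tree by simplicial-vertex splicing as in \cite{kloks1994} rather than by post-processing an arbitrary clique tree.
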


The original proof uses $k$-trees instead of chordal graphs but the proof only needs that the graph contains a simplicial vertex which holds for chordal graphs. Let us define now our clique tree in which the root of the tree is fixed.

\begin{lemme}\label{tree_dec}
    Let $G=(V,E)$ be a chordal graph and $r$ a vertex of $G$, then there exists a clique tree $(X,T)$ such that, $T$ contains at most $7n$ nodes, $T$ is rooted in a node that contains only the vertex $r$, for every bag $i \in V(T)$, the set of vertices $X_i$ induces a clique in $G$ and $T$ contains four types of nodes.
\begin{itemize}
    \item Leaf nodes, $|X_i|= 1$ which have no child.
    \item Introduce nodes $i$ which have exactly one child $j$, and that child satisfies $X_i = X_j \cup \{v\}$ for some vertex $v \in V (G) \setminus X_j$.
    \item Forget nodes $i$ which have exactly one child $j$, and that child satisfies $X_i = X_j \setminus \{v\}$ for some vertex $v \in X_j$.
    \item Join node $i$ which have exactly two children $i_1$ and $i_2$, and that children satisfy $X_i = X_{i_1} = X_{i_2}$.
\end{itemize}
Moreover, such a clique tree can be found in linear time.
\end{lemme}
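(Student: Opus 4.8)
The plan is to start from the clique tree provided by Lemma~\ref{tree_dec_inter} and transform it so that its root becomes a bag reduced to $\{r\}$, while preserving the four-node-type structure and keeping the number of nodes linear in $n$. Let $(X,T_0)$ be the clique tree obtained from Lemma~\ref{tree_dec_inter}, with at most $4n$ nodes and some root $\rho_0$. Since $r\in V$, there is at least one node $i^*$ of $T_0$ with $r\in X_{i^*}$; fix one such node. The construction then has two phases: re-rooting $T_0$ at $i^*$ (keeping the nice form), and prepending a path of forget nodes above $i^*$ to reach a root whose bag is exactly $\{r\}$.

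The first phase is to re-root $T_0$ at $i^*$. Re-rooting does not change the underlying unrooted tree, so the tree-decomposition axioms and the clique property of every bag are preserved; only the orientations of the edges on the path from $\rho_0$ to $i^*$ are reversed. I would classify every edge of $T_0$ as a \emph{diff edge}, whose endpoints have nested bags differing by exactly one vertex, or an \emph{equal edge}, whose endpoints have equal bags. Oriented away from the root, each diff edge goes from an introduce or forget node to its single child, and each equal edge goes from a join node to one of its two children. Reversing a diff edge simply exchanges the roles of an introduce node and a forget node, which is harmless. The delicate case, and the main obstacle of the proof, is the reversal of an equal edge: its lower endpoint was one of the two equal-bag children of a join, and after reversal the node on the new-parent side may end up with a single equal-bag child, or with one equal-bag child and one diff child, neither of which is an admissible configuration.

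To repair this, for each reversed equal edge I would insert a constant number of auxiliary nodes whose bag is the common bag $B$ of the join involved. For instance, inserting one node of bag $B$ on the diff edge leaving a would-be join restores a genuine join node with two equal-bag children together with a separate introduce or forget node; symmetric surgeries handle the remaining local patterns. The point to verify carefully is that these repairs are purely local, preserve both the tree-decomposition axioms and the clique property of the bags, and can be performed simultaneously without interfering with one another. The number of reversed equal edges is at most the total number of equal edges, which is controlled by the number of join nodes and hence is $O(n)$, so this phase adds only $O(n)$ auxiliary nodes, and the multiplicative constants can be arranged so that the node count stays below $6n$.

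Finally, once $T_0$ is re-rooted at $i^*$ and repaired into the four-type form, its root $i^*$ has a bag of size at most $\omega$ containing $r$. I would prepend above it a path of forget nodes that forget the vertices of $X_{i^*}\setminus\{r\}$ one at a time, ending at a node whose bag is exactly $\{r\}$; this node becomes the new root, has a single child, and is therefore a valid forget node. This adds at most $\omega-1<n$ nodes and keeps the occurrences of $r$ connected, since $r$ lies in $X_{i^*}$ and in every prepended bag. The resulting clique tree then satisfies all the required properties and has at most $7n$ nodes. As the whole procedure is a bounded number of linear-time passes over $T_0$, it runs in linear time.
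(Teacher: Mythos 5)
Your overall strategy is the same as the paper's: start from the clique tree of Lemma~\ref{tree_dec_inter}, re-root it, repair the node types along the reversal path by inserting buffer nodes carrying a repeated bag, and use a chain of bags each differing by one vertex to reach a root equal to $\{r\}$. Your identification of the reversed equal edge as the delicate point, and the repair by inserting a node with the common bag $B$ on the offending edge, is exactly the paper's move (``add a new bag with vertex set $X$ between $i$ and its child with a different bag''). The difference is the \emph{order} of the two phases, and that reordering opens a concrete gap.

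You re-root at an \emph{arbitrary} node $i^*$ with $r\in X_{i^*}$ and only afterwards prepend the forget chain. If $i^*$ happens to be a join node of $T_0$ that is not the old root, it has two children and a parent, hence three neighbours; after re-rooting at $i^*$ all three become its children, and after you prepend the forget chain it has three children and a parent. No admissible node type has three children, and this configuration is not created by reversing an equal edge, so it is not covered by the two local patterns you enumerate (a single equal-bag child, or one equal-bag child plus one diff child): your repair inventory is incomplete precisely at the new root. Note that every \emph{interior} node of the reversal path merely trades one child for another and the old root loses a child, so the new root is the only place where a child is gained --- which is why the paper first manufactures a degree-one node with bag $\{r\}$ (splicing a join node and a descending chain $\{r\},\{r,v_2\},\dots,X_{i'}$ at the forget node where $r$ disappears) and re-roots at that leaf, so that its new root has exactly one child. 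The fix is minor --- either constrain the choice of $i^*$ as the paper does, or add one more surgery splitting a three-child node into two stacked join nodes with bag $X_{i^*}$ --- but as written your construction fails whenever the chosen $i^*$ is a non-root join node.
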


\begin{proof}
By Lemma~\ref{tree_dec_inter}, there exists a clique tree  $(T,X)$ that satisfies all the properties except that the root of $T$ can be any bag.

Let us first modify $(T,X)$ to ensure that the clique tree is rooted on a node that contains only $\{r\}$.
If $r$ appears in a bag of a leaf node then it holds. Otherwise, there exists a forget node $i$ with a child $i'$ such that $X_i= X_{i'} \setminus \{r\}$. Let $X_i=\{v_1, \ldots, v_k\}$ with $r=v_1$.
We do the following modifications on $T$: delete the edge $ii'$, add two nodes $i''$ and $i_k$ such that $X_{i''}={X_k}=X_v$, $i''$ is a join node with child $i'$ and $k$ and $i$ is a forget node with child $i''$. Ultimately, we add the nodes $i_{k-1} \ldots, i_1$ such that for any $1 \leq t \leq k-1$, $X_{i_t}=\{v_1, \ldots, v_t\}$  and $i_t$ is the child of the node $i_{t+1}$ (which is a forget node). Then, $r$ appears in a bag of a leaf node by adding at most $n$ nodes in $T$. 

Let us now root $T$ on the node whose bag is $\{ r \}$. We need to check that the property on nodes are preserved. Note that for every edge, the two bags on the extremities differ at most on one vertex. If a node has only one child with the same bag then merge the two nodes. If a node $i$ had two children with different bags, let $X$ be the bag of $i$, then add a new bag with vertex set $X$ between $i$ and its child with a different bag. The tree we get after these modifications satisfies all the properties of the lemma.

All these modifications can be performed in linear time. So find the clique tree can be performed in time $O(n)$.\qed
\end{proof}

%In the rest of the proof we will always assume that we are considering a tree decomposition with the properties of Lemma~\ref{tree_dec}.
%The most important properties of this tree-decomposition is that all the bag are cliques (consequence of the chordality) and the degree of every vertex in the tree is bounded. 
In the following,  a clique tree with the properties of Lemma~\ref{tree_dec} will be called a \emph{nice clique tree} and we will only consider nice clique trees $(X,T)$ of chordal graphs $G$. 

Given a rooted clique tree $(T,X)$ of $G$, for any node $i$ of $T$, we define the \emph{subgraph of $G$ rooted in $X_i$}, denoted by $T(X_i)$, as the subset of vertices of $G$ containing in at least one of the bags of the subtree of $T$ rooted in $i$ (i.e. in the bag of $i$ or one of its descendants).

\subsection{Clique separators and resolving sets.}\label{sec:chordal}

%In this part, we give some tools that can be used, the main situation we will have to deal with will be when we have two resolving sets in two  connected components separated by a clique and we look for a resolving set for the union of the two components. 
In this section, we give some technical lemmas that will permit to bound by $f(\omega)$ the amount of information we have to remember in the dynamic programming algorithm.
%Let $K$ be a separator of $G$. Lemma~\ref{resolve_far} gives a sufficient condition to ensure that a pair of vertices $(x_1,x_2)$ is resolved by any vertex which is not in the connected component of $x_1$ and $x_2$ in $G[V \setminus K]$.
%Lemma~\ref{resolve_close} gives a sufficient condition to ensure that  a set of vertices resolve all the pairs of vertices in two components of $V \setminus K$ by only checking that some pairs of vertices are resolved.
%These lemmas will be used to prove that all the vertices in a connected component are resolved by only checking a small amount of pairs of vertices.

\begin{lemme}\label{resolve_far}
Let $K$ be a clique separator of $G$ and $G_1$ be a connected component of $G \setminus K$. Let $G_{ext}$ be the subgraph of $G$ induced by the vertices of $G_1 \cup K$ and $G_{int}= G \setminus G_{ext}$. Let $x_1,x_2 \in V(G_{int})$ be such that $|d(x_1,K)-d(x_2,K)| \geq 2$. Then, every vertex $s \in V(G_{ext})$ resolves the pair $(x_1,x_2)$.
\end{lemme}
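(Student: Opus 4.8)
The plan is to reduce everything to a single structural fact about distances across a clique separator: for any $x \in V(G_{int})$ and any $s \in V(G_{ext})$, the distance $d(x,s)$ is additive through $K$ up to an error of one. Precisely, I would first establish that
\[
d(x,K) + d(s,K) \;\le\; d(x,s) \;\le\; d(x,K) + d(s,K) + 1 .
\]
The lower bound follows because $K$ separates $V(G_{int})$ from $V(G_1)$: since $x \in V(G_{int})$ and $s \in V(G_1) \cup K$ lie in distinct components of $G \setminus K$ (or $s$ lies on $K$ itself), any shortest $x$--$s$ path contains a vertex $a \in K$; taking $a = s$ when $s \in K$. Splitting the path at $a$ and using that a subpath of a shortest path is shortest gives $d(x,s) = d(x,a) + d(a,s) \ge d(x,K) + d(s,K)$. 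For the upper bound I would use that $K$ is a clique: choosing $k_x, k_s \in K$ that realize $d(x,K)$ and $d(s,K)$ respectively, the walk formed by a shortest $x$--$k_x$ path, then the edge $k_x k_s$, then a shortest $k_s$--$s$ path has length at most $d(x,K) + d(k_x,k_s) + d(s,K) \le d(x,K) + 1 + d(s,K)$, because any two vertices of a clique are at distance at most $1$.

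With this sandwich in hand, the rest is a short comparison. I would assume without loss of generality that $d(x_1,K) \le d(x_2,K)$, so that the hypothesis $|d(x_1,K)-d(x_2,K)| \ge 2$ reads $d(x_2,K) \ge d(x_1,K) + 2$. Applying the bound to $x_1$ and to $x_2$ with the same vertex $s$ yields
\[
d(s,x_1) \le d(x_1,K) + d(s,K) + 1
\quad\text{and}\quad
d(s,x_2) \ge d(x_2,K) + d(s,K) \ge d(x_1,K) + d(s,K) + 2 .
\]
Subtracting gives $d(s,x_2) - d(s,x_1) \ge 1 > 0$, hence $d(s,x_1) \ne d(s,x_2)$ and $s$ resolves $(x_1,x_2)$. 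This holds uniformly for every $s \in V(G_{ext})$, including the case $s \in K$ (where $d(s,K)=0$), which is exactly what the statement asks.

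I expect the only delicate point to be the upper bound in the sandwich, namely correctly exploiting the clique structure of $K$ so that entering and leaving the separator costs at most one extra edge; the lower bound and the final comparison are routine. One should also take care that the separation argument for the lower bound is phrased to cover the degenerate case $s \in K$, where the shortest path need not properly \emph{cross} $K$ but merely terminates on it.
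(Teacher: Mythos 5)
Your proof is correct and follows essentially the same route as the paper's: both rely on the two inequalities $d(x_1,s) \le d(x_1,K) + d(s,K) + 1$ (from $K$ being a clique) and $d(x_2,s) \ge d(x_2,K) + d(s,K)$ (from $K$ being a separator), then compare. You merely spell out the sandwich and the degenerate case $s \in K$ more explicitly than the paper does.
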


\begin{proof}
Without loss of generality, assume $d(x_1,K)+2 \leq d(x_2,K)$. By triangular inequality and since $K$ is a clique, $d(x_1,s) \leq d(x_1,K) + 1 + d(K,s) $ and $d(x_2,s) \geq d(x_2,K)+d(K,s)$. The sum of these inequalities gives $d(x_2,K)+d(x_1,s) \leq d(x_1,K)+1+d(x_2,s)<d(x_2,K)+d(x_2,s)$. Thus, $d(x_1,s) <d(x_2,s)$, meaning that $s$ resolves the pair $(x_1,x_2)$.\qed
\end{proof}

Before proving Lemma~\ref{resolve_close}, let us extract a technical case.

\begin{lemme} \label{tech}
Let $G$ be a chordal graph and $T$ be a nice clique tree of $G$. Let $X$, $Y$ be two bags of $T$ and $x$, $y$ be two vertices in respectively $X$, $Y$ . Let $Y$ be a bag of $T$ such that $X \cap Y  = \emptyset$. Assume $d(x,y) \geq 2$ and let $z$ be a neighbour of $x$ that appears in the bag the closest to $Y$ in $T$ amongst the bags on the path between $X$ and $Y$. Then $z$ belongs to a shortest path between $x$ and $y$.
\end{lemme}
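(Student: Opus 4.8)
The plan is to prove the equivalent numerical statement $d(z,y) = d(x,y)-1$. Since $z$ is a neighbour of $x$, the triangle inequality already gives $d(z,y) \ge d(x,y)-1$, so the whole content lies in the upper bound $d(z,y) \le d(x,y)-1$; once this is established, $z$ lies on a shortest $x$--$y$ path by definition.

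First I would set up notation along the path of bags. Write the bags on the path between $X$ and $Y$ in $T$ as $X=B_0,B_1,\dots,B_m=Y$, let $[0,j]$ be the interval of indices of bags on this path that contain $x$ (so $B_j$ is the bag containing $x$ closest to $Y$), and let $B_k$ be the bag of largest index containing a neighbour of $x$, so that $z\in B_k$ and, by maximality, no bag $B_i$ with $i>k$ contains any neighbour of $x$. Two consequences of maximality are recorded: since $z$ is itself a neighbour of $x$ we get $z\notin B_{k+1}$, and since $S_k:=B_k\cap B_{k+1}\subseteq B_{k+1}$, the clique separator $S_k$ (when $k<m$) contains no neighbour of $x$.

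I would then distinguish two cases according to whether $z$ and $y$ lie in a common bag. If they do, that bag is a clique, so $z$ is adjacent to $y$ (and $z\ne y$ because $y$ is not adjacent to $x$); then $x,z,y$ is a walk of length $2$, forcing $d(x,y)=2$ and putting $z$ on a shortest path. Otherwise the index intervals of the bags containing $z$ and those containing $y$ are disjoint, and a short bookkeeping argument (using $x\notin Y$ and the fact that $x$'s interval ends at $j$) shows that $z$'s interval lies entirely to the left of $y$'s, whence $k<m$ and $x$, $y$ sit strictly on opposite sides of $S_k$. Consequently every shortest $x$--$y$ path must pass through some vertex $u\in S_k$.

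The crux is then a two-step distance computation. The crossing vertex $u$ lies in the clique $B_k$ together with $z$, so $u$ is adjacent to $z$; on the other hand $u\in S_k$ is not a neighbour of $x$, and $u\ne x$, so $d(x,u)\ge 2$, while the walk $x,z,u$ gives $d(x,u)\le 2$. Hence $d(x,u)=2$, and since $u$ lies on a shortest $x$--$y$ path, $d(x,y)=2+d(u,y)$; combining this with $d(z,y)\le d(z,u)+d(u,y)=1+d(u,y)=d(x,y)-1$ finishes the upper bound. The step I expect to require the most care is the positional bookkeeping guaranteeing that the crossing vertex $u$ is both distinct from $x$ and non-adjacent to $x$: non-adjacency comes from the maximality of $k$, while distinctness is where connectivity of $G$ enters, since connectivity forces the first separator past $x$'s last bag to be nonempty and hence $k\ge j$, which in turn yields $x\notin S_k$ and rules out the degenerate possibility $u=x$.
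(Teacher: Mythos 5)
Your proof is correct and follows essentially the same route as the paper's: both identify the last bag on the tree-path containing a neighbour of $x$, use the clique-separator property to find a vertex on a shortest $x$--$y$ path that is adjacent to $z$ but (by maximality of that bag's index) not adjacent to $x$, and reroute through $z$. Your version merely phrases the rerouting as the distance identity $d(z,y)=d(x,y)-1$ and works with the adhesion $B_k\cap B_{k+1}$ instead of the bag itself, and is if anything a little more careful about the degenerate cases ($u\neq x$, $u\neq z$) than the paper's path-surgery argument.
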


\begin{proof}
Let $Z$ be the bag containing $z$ and no other vertices of $N[x]$ with $Z$ on the path between $X$ and $Y$.
If $Z=Y$ then $z$ is a common neighbour of $x$ and $y$ which gives the result since $d(x,y) \geq 2$.
Otherwise, consider a shortest path $x=x_1,x_2,\ldots, x_m=y$ between $x$ and $y$ and let $x_i$ be the first vertex of this path belonging to $Z$. Such a vertex exists since $Z$ separates $x$ and $y$. If $x_i=z$ then the result holds. Otherwise by definition of $z$, $x_i$ is not adjacent to $x$ and is adjacent to $z$ because they both belong $Z$. Thus, if we replace the sub-path $x_1,\ldots,x_i$ by $x,z,x_i$, it gives a path from $x$ to $Z$ whose length is at most the length of the initial path which gives the result.\qed
\end{proof}

\begin{lemme}\label{resolve_close}
Let $S$ be a subset of vertices of $G$.  Let $X$, $Y$ and $Z$ be three bags of a nice tree-decomposition $T$ of $G$ such that $Z$ is on the path $P$ between $X$ and $Y$ in $T$. Denote by $P=X_1, \ldots Z \ldots X_p$ the bags of $P$ with $X=X_1$ and $Y=X_p$. Let $x$ be a vertex of $X$ and $y$ a vertex of $Y$ with $d(x,Z) \geq 2$ and $d(y,Z) \geq 2$.
Assume that any pair of vertices $(u,v)$ with $u \in \ X_2 \cup \ldots \cup Z$, $v \in Z \cup \ldots \cup X_p$, $d(u,Z)<d(x,Z)$ and $d(v,Z)<d(y,Z)$ is resolved by $S$. Then the pair $(u,v)$ is resolved by $S$.
\end{lemme}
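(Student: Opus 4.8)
The plan is to prove that the pair $(x,y)$ is resolved by $S$ (this is the intended conclusion of the statement, since the quantified pair $(u,v)$ appears only in the hypothesis), by reducing $(x,y)$ to a pair that is strictly closer to $Z$ on both sides and then invoking the hypothesis. First I would record the geometry. Since $Z$ is a bag of a clique tree it is a clique separator, and since $d(x,Z)\ge 2$ and $d(y,Z)\ge 2$ the vertices $x$ and $y$ lie strictly on the two sides of $Z$ cut out by $P$: any vertex common to $X_1$ and $Z$ would be adjacent to $x$, contradicting $d(x,Z)\ge 2$, so $X_1\cap Z=\emptyset$ and (by the subtree property) $x$ occurs only in bags on the $X$-side of $Z$, and symmetrically for $y$. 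Hence every $x$--$y$ path crosses $Z$.

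Next I would build the reduced pair. Applying Lemma~\ref{tech} with target bag $Z$ and a vertex $w^\star\in Z$ closest to $x$ (legitimate because $X_1\cap Z=\emptyset$ and $d(x,w^\star)=d(x,Z)\ge 2$), I obtain a neighbour $z_x$ of $x$, lying in the bag $Z'$ of $P$ closest to $Z$ that still meets $N(x)$, and $z_x$ lies on a shortest $x$--$w^\star$ path; hence $d(z_x,Z)=d(x,Z)-1$ and $z_x$ sits in a bag strictly between $X_1$ and $Z$, so $z_x\in X_2\cup\cdots\cup Z$. Symmetrically I obtain $z_y$ in a bag $Z''$ strictly between $Z$ and $X_p$, with $d(z_y,Z)=d(y,Z)-1$ and $z_y\in Z\cup\cdots\cup X_p$. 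The pair $(z_x,z_y)$ now satisfies all the conditions of the hypothesis, so some $s\in S$ resolves it, i.e. $d(s,z_x)\neq d(s,z_y)$.

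The heart of the argument is to show that this same $s$ resolves $(x,y)$. The separators appear on $P$ in the order $X_1,\dots,Z',\dots,Z,\dots,Z'',\dots,X_p$, and the gateway property behind Lemma~\ref{tech} yields two distance identities: if $Z'$ separates $s$ from $x$ then $d(s,x)=d(s,z_x)+1$, and if $Z''$ separates $s$ from $y$ then $d(s,y)=d(s,z_y)+1$. Every $s$ falls into one of three cases. If $s$ lies between $Z'$ and $Z''$, both identities hold and $d(s,x)-d(s,y)=d(s,z_x)-d(s,z_y)\neq 0$. If $s$ lies strictly on the $x$-side of $Z'$, then $Z''$ still separates $s$ from $y$, so $d(s,y)=d(s,z_y)+1$; moreover $s$ reaches $Z$ only through the clique $Z'$, which forces $d(s,z_x)\le d(s,Z)$, while crossing $Z$ gives $d(s,z_y)\ge d(s,Z)+(d(y,Z)-1)\ge d(s,Z)+1$, so $d(s,z_x)<d(s,z_y)$ and therefore $d(s,x)\le d(s,z_x)+1\le d(s,z_y)=d(s,y)-1<d(s,y)$. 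The case where $s$ is strictly on the $y$-side of $Z''$ is symmetric. In every case $s$ resolves $(x,y)$.

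The main obstacle is this third step: turning the informal gateway picture into the exact equality $d(s,x)=d(s,z_x)+1$ and the inequality $d(s,z_x)\le d(s,Z)$. Both rest on the fact that a bag of a clique tree is a clique separator and that the neighbour produced by Lemma~\ref{tech} lies on a shortest path to everything separated from $x$ by its bag. The delicate point is the degenerate situation where $Z'$ (or $Z''$) shares a vertex with $Z$, which happens exactly when $d(x,Z)=2$; there the bound $d(s,z_x)\le d(s,Z)$ may fail, but one checks directly that $d(s,z_x)\in\{d(s,Z),d(s,Z)+1\}$ and that either value still gives $d(s,z_x)<d(s,z_y)$ for any $s$ that actually resolves $(z_x,z_y)$. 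I also need the bookkeeping that $z_x$ and $z_y$ genuinely land in the prescribed bag ranges $X_2\cup\cdots\cup Z$ and $Z\cup\cdots\cup X_p$, which follows from the leaf/introduce/forget/join structure of the nice clique tree of Lemma~\ref{tree_dec}.
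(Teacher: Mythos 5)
Your proposal is correct and follows essentially the same route as the paper's proof: both use Lemma~\ref{tech} to replace $x$ and $y$ by neighbours $x'$, $y'$ lying on shortest paths towards $Z$ (so that the hypothesis yields an $s\in S$ resolving $(x',y')$), and then transfer the resolution to $(x,y)$ by a case analysis on which side of the two gateway bags the bag of $s$ lies, using the clique-separator distance identities $d(s,x)=d(s,x')+1$ and $d(s,y)=d(s,y')+1$. Your treatment of the degenerate case where the gateway bag meets $Z$ is sketched but checks out, and matches the level of detail of the paper's own Case 3.
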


\begin{proof}
Let $i_1$ be such that $X_{i_1} \cap N[x] \neq \emptyset$ and for every $j>i_1$, $X_j \cap N[x] = \emptyset$ and $i_2$ be such that $X_{i_2} \cap N[y] \neq \emptyset$ and for $j<i_2$, $X_j \cap N[y] = \emptyset$.
Let $x'$ be the only neighbour of $x$ in $X_{i_1}$ and $y'$ be the only neighbour of $y$ in $X_{i_2}$, they are unique by definition of nice tree-decomposition.
Note that $d(x,y) \geq 4$ since $d(x,Z) \geq 2$ and $d(y,Z) \geq 2$. So $N[x]$ is not adjacent to $N[y]$ and then $i_1 <i_2$.
By Lemma~\ref{tech}, $x'$ is on a shortest path between $x$ and $Z$ and $y'$ is on a shortest path between $y$ and $Z$.
So $d(x',Z)<d(x,Z)$ and $d(y',Z)<d(y,Z)$. By hypothesis, there is a vertex $s \in S$ resolving the pair $(x',y')$. Let us prove that $s$ resolves the pair $(x,y)$.

If $s$ is a neighbour of $x$ or $y$ then $s$ resolves the pair $(x,y)$ since $d(u,v) \geq 4$. So we can assume that $d(s,x) \geq 2$ and $d(s,y) \geq 2$.
Let $X_s$ be a bag of $T$ containing $s$ and $X_s'$ be the closest bag to $X_s$ on $P$ between $X$ and $Y$.

\smallskip \noindent Case 1: $s \in X_{i_1}$ and $s \in X_{i_2}$. Then, $d(s,x') \leq 1$ and  $d(s,y') \leq 1$. The vertex $s$ resolves the pair $(x',y')$ so $d(s,x') \neq d(s,y')$ so $s=x'$ or $s=y'$. Assume by symmetry that $s=x'$, then $d(s,x)=1$ and $d(s,y) \geq 3$ because $d(x,y) \geq 4$. So $s$ resolves the pair $(x,y)$.

\smallskip \noindent Case 2: $s$ belongs to exactly one of $X_{i_1}$ or $X_{i_2}$. By symmetry assume that $s \in X_{i_1}$. 
By Lemma~\ref{tech}, $y'$ is on a shortest path between $y$ and $s$. So  $d(s,y)=d(s,y')+1$.
As $s$ belongs to $X_{i_1}$ then $d(x',s)=1$ and $d(x,s) \leq 2$. As $d(y',s) \neq d(x',s)$ we have $d(y',s) \geq 2$, so $d(s,y) \geq 3$. Thus $s$ resolves the pair $(x,y)$.

\smallskip \noindent Case 3: $s \notin X_{i_1}$ and $s \notin X_{i_2}$.
First, we consider the case where $X_s'$ is between $X_{i_1}$ and $X_{i_2}$. Then, $d(s,x)=d(s,x')+1$ and $d(s,y)=d(s,y')+1$ by Lemma~\ref{tech} as $X_{i_1}$ separates $x$ and $s$ and $X_{i_2}$ separates $x$ and $s$. Thus, $s$ resolves the pair $(x,y)$.

By symmetry, we can now assume that $X_s'$ is between $X$ and $X_{i_1}$.
Since $i_1 < i_2$, $X_{i_2}$ separates $s$ and $y$. So $d(s,y)=d(s,y')+1$ by Lemma~\ref{tech}. To conclude we prove that $d(s,x') < d(s,y')$. 
Let $Q$ be a shortest path between $s$ and $y$. The bag $X_{i_1}$ separates $s$ and $y$ so $Q \cap X_{i_1} \neq \emptyset$. Let $y_1 \in Q \cap X_{i_1} $. By definition of $Q$, $d(s,y')=d(s,y_1)+d(y_1,y)$. We know $y_1 \neq y$ because $y_1$ is a neighbour of $x$. So $d(y_1,y) \neq 0$. We also have $d(s,x')\leq d(s,y_1)+1$ because $y_1 \in X_{i_1}$. So $y_1$ is a neighbour of $x'$. As $d(s,x')\neq d(s,y')$, this ensures $d(s,x') < d(s,y')$. So $s$ resolves the pair $(x,y)$ because $d(s,x) \leq d(s,x')+1 < d(s,y')+1=d(s,y)$.\qed
\end{proof}

The following corollary is essentially rephrasing Lemma~\ref{resolve_close} to get the result on a set of vertices.

\begin{cor}\label{resolve_dec}
Let $G$ be a chordal graph and $S$ be a subset of vertices of $G$. Let $X_i$ be a bag of $T$ and let $T_1=(X_1,E_1)$ and $T_2=(X_2,E_2)$ be two connected components of $T \setminus X_i$. Assume that any pair of vertices $(u,v)$ of $(X_1 \cup X_i) \times (X_2 \cup X_i)$ with $d(u,X_i)\leq 2$ and $d(v,X_i) \leq 2$  is resolved by $S$. Then any pair of vertices $(u,v)$ of $(X_1,X_2)$ with $|d(u,X_i)-d(v,X_i)| \leq 1$ is resolved by $S$.
\end{cor}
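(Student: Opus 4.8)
The plan is to prove the statement by induction on the quantity $d(u,X_i)+d(v,X_i)$, using Lemma~\ref{resolve_close} to reduce a balanced pair to balanced pairs that are strictly closer to $X_i$. The geometric set-up is the following: since $T_1$ and $T_2$ are distinct components of $T\setminus X_i$, the clique $X_i$ separates the vertices of $X_1\setminus X_i$ from those of $X_2\setminus X_i$ in $G$, and for any bag $X\ni u$ of $T_1$ and any bag $Y\ni v$ of $T_2$ the bag $X_i$ lies on the path of $T$ between $X$ and $Y$. This is exactly the configuration required by Lemma~\ref{resolve_close} with $Z=X_i$.

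First I would dispose of the small cases. If $d(u,X_i)\le 1$ or $d(v,X_i)\le 1$, then the balance hypothesis $|d(u,X_i)-d(v,X_i)|\le 1$ forces both distances to be at most $2$, so the pair $(u,v)$ lies in $(X_1\cup X_i)\times(X_2\cup X_i)$ with both distances at most $2$ and is resolved directly by the assumption of the corollary; the same holds when $d(u,X_i)=d(v,X_i)=2$. Otherwise at least one distance is $\ge 3$, and the balance condition then forces both to be $\ge 2$, so I may apply Lemma~\ref{resolve_close} with $x=u$, $y=v$ and $Z=X_i$.

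The core of the argument is to verify the hypothesis of Lemma~\ref{resolve_close}, namely that every pair $(u',v')$ on the path with $d(u',X_i)<d(u,X_i)$ and $d(v',X_i)<d(v,X_i)$ is resolved by $S$. I would split these closer pairs into three types. If both $d(u',X_i)\le 2$ and $d(v',X_i)\le 2$ (in particular whenever $u'\in X_i$ or $v'\in X_i$, since then one distance is $0$ and balance keeps the other small), the pair is covered by the hypothesis of the corollary. If the pair stays balanced, i.e. $|d(u',X_i)-d(v',X_i)|\le 1$, then since $d(u',X_i)+d(v',X_i)<d(u,X_i)+d(v,X_i)$ it is resolved by the induction hypothesis. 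The remaining closer pairs are the unbalanced ones, with $|d(u',X_i)-d(v',X_i)|\ge 2$ and the larger distance exceeding $2$.

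The unbalanced closer pairs are the main obstacle, and they must be handled by an argument in the spirit of Lemma~\ref{resolve_far}. The subtlety is that $u'$ and $v'$ are separated by the clique $X_i$ (or one of them lies in it), whereas Lemma~\ref{resolve_far} is stated for two vertices on the \emph{same} side of the separator, so I cannot quote it verbatim. Instead I would rerun its triangle-inequality computation through the clique $X_i$: assuming $d(u',X_i)\ge d(v',X_i)+2$, for every vertex $s$ not lying in the connected component of $u'$ in $G\setminus X_i$ one has $d(s,u')\ge d(s,X_i)+d(u',X_i)$ and $d(s,v')\le d(s,X_i)+d(v',X_i)+1$, whence $d(s,u')>d(s,v')$ and $s$ resolves $(u',v')$. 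The delicate point that remains is to guarantee that $S$ indeed contains such a vertex (equivalently, to settle the pair when all of $S$ sits inside the component of $u'$); this is where the hypothesis of the corollary, applied to suitable local pairs straddling $X_i$, must be invoked, and it is the step I expect to require the most care.
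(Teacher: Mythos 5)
Your overall strategy is the same as the paper's: the paper argues by taking a minimal counterexample with respect to $q=d(u,X_i)+d(v,X_i)$ (which is your induction), handles $q\leq 4$ by the hypothesis of the corollary exactly as you do, and for $q\geq 5$ invokes Lemma~\ref{resolve_close} with $Z=X_i$. So the skeleton is right, and your geometric set-up (any bag of $T_1$ and any bag of $T_2$ have the node $i$ on the tree path between them) is the correct justification for applying that lemma.

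The step you could not finish is, however, a genuine hole in your write-up, and your proposed patch is not the way to close it. For an unbalanced closer pair $(u',v')$ straddling $X_i$, the triangle-inequality computation only shows that a vertex $s$ lying outside the component of the \emph{farther} vertex resolves the pair; nothing in the hypothesis of the corollary guarantees that $S$ contains such a vertex ($S$ is an arbitrary set, and could in principle sit entirely in one component), so this route does not terminate. The correct way out is not to supply the unbalanced pairs at all, but to observe that the proof of Lemma~\ref{resolve_close} consumes its hypothesis only for the single pair $(x',y')$, where $x'$ (resp. $y'$) is the neighbour of $x$ (resp. $y$) on a shortest path towards $Z$; for that pair both distances to $X_i$ drop by exactly one, so it is again balanced with a strictly smaller distance sum and is covered by your induction hypothesis (or by the corollary's hypothesis once both distances reach $2$). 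The paper glosses over this point — it asserts that minimality yields resolution of \emph{all} closer pairs, which is not literally what minimality over balanced pairs gives — but the argument goes through precisely because only the balanced pair $(x',y')$ is ever used. With that observation inserted in place of your Lemma~\ref{resolve_far}-style digression, your proof is complete and coincides with the paper's.
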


\begin{proof}
Assume by contradiction, that there exist some pairs of vertices of $(T_1,T_2)$ with $|d(u,X_i)-d(v,X_i)| \leq 1$ which are not resolved by $S$. Among all these pairs, let $(u,v)$ be one pair minimizing $q:=d(u,X_i)+d(v,X_i)$.
If $q \leq 4$ then $d(u,X_i)\leq 2$ and $d(v,X_i)\leq 2$, so the pair $(u,v)$ is resolved by the hypothesis of the lemma.
If $q \geq 5$, then $d(u,X_i)\geq 2$ and $d(v,X_i)\geq 2$. By minimality, we know that all pairs $(u',v')$ of $(T_1,T_2)$ with $d(u',X_i)<d(u,X_i)$ and $d(v',X_i)<d(v,X_i)$ are resolved by $S$. So, by Lemma~\ref{resolve_close}, the pair $(u,v)$ is resolved by $S$.\qed
\end{proof}

\section{Algorithm description}

In this section, we fix a vertex $v$ of a chordal graph $G$ and consider a nice clique tree $(T,X)$ rooted in $v$ which exists by Lemma~\ref{tree_dec}. We present an algorithm computing the smallest size of a resolving set of $G$ containing $v$.

\subsection{Generalisation of the problem}\label{sec:defpb}

The algorithm is a dynamic programming algorithm that works bottom-up from the leaves of a nice clique tree. Our algorithm computes the solution of a problem more general than the metric dimension but easiest to manipulate for combining solutions. Our algorithm consists in a dynamic programming on the clique tree. In this new problem, we will represent some vertices by vectors of distance.

We define notations to edit vectors.
\begin{defn} 
Given a vector $\vec{r}$, the notation $\vec{r}_i$ refers to the $i$-th coordinate of~$\vec{r}$.
\begin{itemize}
    \item Let $\vec{r}=(r_1,\ldots,r_k) \in \mathbb{N}^k$ be a vector of size $k$ and $m \in N$. The vector $\vec{r'}=\vec{r|m}$ is the vector of size $k+1$ with $r'_i=r_i$ for $1 \leq i \leq k$ and $r'_{k+1}=m$. 
    \item Let $\vec{r}=(r_1,\ldots,r_k) \in \mathbb{N}^k$ be a vector of size $k$. The vector $\vec{r^-}$ is the vector of size $k-1$ with $r^-_i=r_i$ for $1 \leq i \leq k-1$. 

\end{itemize}
\end{defn}

\begin{defn}
Let $i$ be a node of $T$ and let $X_i=\{v_1,\ldots,v_k\}$ be the bag of $i$. For a vertex $x$ of $G$, the \emph{distance vector} $\distvec{X_i}{x}$ of $x$ to $X_i$ is the vector of size $k$ such that, for $1 \leq j \leq k$, $\distvec{X_i}{x}_j=d(x,v_j)$.
We define the set $\ddeux{X_i}$ as the set of distance vectors of the vertices of $T(X_i)$ at distance at most $2$ of $X_i$ in $G$ (i.e. one of the coordinate is at most $2$). 
\end{defn}

\begin{defn}
Let $G$ be a graph and $K=\{v_1,\ldots,v_k\}$ be a clique of $G$.
Let $x$ be a vertex of $G$. The \emph{trace} of $x$ on $K$, denoted by $\vec{Tr_K}(x)$, is the vector $\vec{r}$ of $\{0,1\}^k\setminus \{1, \ldots ,1\}$ such that for every $1 \leq i \leq k$, $d(x,v_i)=a+\vec{r}_i$ where $a=d(x,K)$.

Let $S$ be a subset of vertices of $G$. The trace $Tr_K(S)$ of $S$ in $K$ is the set of vectors $ \{ \vec{Tr_K}(x), {x\in S} \}$.
\end{defn}

The trace is well-defined because for a vertex $x$ and a clique $K$, the distance between $x$ and a vertex of $K$ is either $d(x,K)$ or $d(x,K)+1$.

\begin{defn}\label{def_resolve_vec}
Let $\vec{r_1}, \vec{r_2}$ and $\vec{r_3}$ be three vectors of same size $k$. We say that $\vec{r_3}$ \emph{resolves} the pair $(\vec{r_1},\vec{r_2})$ if \[\min_{1 \leq i \leq k} \vec{(r_1+r_3)}_i \neq \min_{1 \leq i \leq k} \vec{(r_2+r_3)}_i.\]
\end{defn}

\begin{lemme}
Let $K$ be a clique separator of $G$ and $G_1$ be a connected component of $G \setminus K$. Let $(x,y)$ be a pair of vertices of $G \setminus G_1$ and let $\vec{r}$ be a vector of size $|K|$. If $\vec{r}$ resolves the pair $(\distvec{K}{x},\distvec{K}{y})$, then any vertex $s \in V(G_1)$ with $\vec{Tr_K}(s)=\vec{r}$ resolves the pair $(x,y)$.
\end{lemme}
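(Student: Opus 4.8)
The plan is to translate Definition~\ref{def_resolve_vec} of ``resolves'' for vectors into an identity about genuine graph distances, exploiting the fact that $K$ separates $s$ from both $x$ and $y$. Write $K=\{v_1,\ldots,v_k\}$ and set $a=d(s,K)$.

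First I would establish the key distance formula. Since $s\in V(G_1)$ while $x,y\in V(G)\setminus V(G_1)$, and $K$ is a separator, every path from $s$ to $x$ (resp. to $y$) must meet $K$. Together with the triangle inequality this gives
\[
d(s,x)=\min_{1\le i\le k}\bigl(d(s,v_i)+d(v_i,x)\bigr),
\]
and the analogous identity for $y$. The borderline cases $x\in K$ or $y\in K$ are covered as well, since taking $v_i=x$ shows the minimum equals $d(s,x)$.

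Next I would substitute the trace. By definition of the trace, $d(s,v_i)=a+\vec{r}_i$ for every $i$, and by definition of the distance vector $d(v_i,x)=\distvec{K}{x}_i$. Plugging these into the formula above yields
\[
d(s,x)=a+\min_{1\le i\le k}\bigl(\distvec{K}{x}+\vec{r}\bigr)_i,
\]
and symmetrically $d(s,y)=a+\min_{1\le i\le k}\bigl(\distvec{K}{y}+\vec{r}\bigr)_i$. The hypothesis that $\vec{r}$ resolves the pair $(\distvec{K}{x},\distvec{K}{y})$ states exactly that $\min_i(\distvec{K}{x}+\vec{r})_i\neq\min_i(\distvec{K}{y}+\vec{r})_i$. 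As both distances are obtained from these two unequal minima by adding the same constant $a$, we conclude $d(s,x)\neq d(s,y)$, i.e. $s$ resolves $(x,y)$.

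The only real content is the distance formula; everything after it is substitution into Definition~\ref{def_resolve_vec}. I therefore expect the single step needing care to be the justification that every $s$--$x$ path crosses $K$ (so that the minimum-over-$K$ expression is valid), including the degenerate situation where $x$ or $y$ already lies in $K$. Note that only the separator property of $K$ is needed for the argument itself; its being a clique is what makes the trace well defined in the first place.
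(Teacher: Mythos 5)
Your proposal is correct and follows essentially the same route as the paper: both derive the identity $d(s,x)=d(s,K)+\min_{1\le i\le k}(\distvec{K}{x}+\vec{r})_i$ (and its analogue for $y$) from the fact that $K$ separates $s$ from $x$ and $y$, then conclude directly from Definition~\ref{def_resolve_vec}. Your treatment is if anything slightly more careful, since you explicitly justify the minimum-over-$K$ formula and the degenerate case $x\in K$ or $y\in K$, which the paper's one-line proof glosses over.
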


\begin{proof}
Let $s$ be a vertex of $G_1$ such that $\vec{Tr_K}(s)=\vec{r}$. The clique $K$ separates $s$ and $x$ (resp. $y$) so $d(x,s)=\min_{1 \leq i \leq k} (\distvec{K}{x}+\vec{Tr_K}(s))_i+d(K,s)$ (resp. $d(y,s)=\min_{1 \leq i \leq k} (\distvec{K}{y}+\vec{Tr_K}(s))_i+d(K,s)$). The vector $\vec{r}$ resolves the pair $(\distvec{K}{x},\distvec{K}{y})$. So $d(x,s) \neq d(y,s)$ and $s$ resolves the pair $(x,y)$.\qed
\end{proof}

\begin{defn}
Let $K$ be a clique separator of $G$ and $G_1$, $G_2$ be two (non necessarily distinct) connected components of $G \setminus K$. Let $M$ be a set of vectors and let $u \in V(G_1) \cup K$ and $v \in V(G_2) \cup K$. If a vector $\vec{r}$  resolves the pair $(\distvec{K}{x},\distvec{K}{y})$, we say that $\vec{r}$ \emph{resolves} the pair $(x,y)$.
We say that the pair of vertices  $(u,v)$ \emph{is resolved} by $M$ if there exists a vector $\vec{r} \in M$ that resolves the pair $(u,v)$.
\end{defn}

We can now define the generalised problem our dynamic programming algorithm actually solves. We call it the \textsc{extended metric dimension} problem ({\sc EMD} for short) .
We first define the instances of this problem.

\begin{defn}Let $i$ be a node of $T$.
An \emph{instance for a node $i$} of the {\sc EMD} problem is a $5$-uplet $\I {}$ composed of the bag $X_i$ of $i$, a subset $S_I$ of $X_i$ and three sets of vectors satisfying

\begin{itemize}
    \item $\int {I}\subseteq \{0,1\}^{|X_i|}$ and $\ext{I} \subseteq \{0,1\}^{|X_i|}$,
    \item $\paire{I} \subseteq [|0,3|]^{|X_i|} \times [|0,3|]^{|X_i|}$,
    \item $\ext{I} \neq \emptyset$ or $S_I \neq \emptyset$,
    \item For each pair of vectors $(\vec{r_1},\vec{r_2}) \in \paire I$, there exist two vertices $x \in \T {X_i}$ with $\distvec{X_i}{x}= \vec{r_1}$ and $d(x,X_i) \leq 2$ and $y \notin \T {X_i}$ with $\distvec{X_i}{y}= \vec{r_2}$ and $d(y,X_i) \leq 2$.
\end{itemize} 

\end{defn}

\begin{defn}\label{def_instance}
A set $S \subseteq T(X_i)$ is a solution for an instance $I$ of the {\sc EMD} problem if 
\begin{itemize}
    \item \textbf{(S1)} Every pair of vertices of $\T {X_i}$ is  either resolved by a vertex in $S$ or resolved by a vector of $\ext I$. 
    \item \textbf{(S2)} For each vector $\vec{r} \in \int I$ there exists a vertex $s \in S$ such that $\vec{Tr_{X_i}}(s)= \vec{r}$. 
    \item \textbf{(S3)} For each pair of vector $(\vec{r_1},\vec{r_2}) \in \paire I$, for any vertex $x \in \T {X_i}$ with $\distvec{X_i}{x}= \vec{r_1}$ and any vertex $y \notin \T {X_i}$ with $\distvec{X_i}{y}= \vec{r_2}$, if $d(x,X_i) \leq 2$ and $d(y,X_i) \leq 2$ the pair $(x,y)$ is resolved by $S$. 
    \item \textbf{(S4)} $S \cap X_i =S_I$. 
\end{itemize}

\end{defn}
In the rest of the paper, for shortness, we will refer to an instance of the {\sc EMD} problem only by an instance.

\begin{defn}\label{def_dim}
Let $I$ be an instance. We denote by $\dim (I)$ the minimum size of a set $S \subseteq \T{X_i}$ which is a solution of $I$. If such a set does not exist we define $\dimx I= + \infty$. We call this value the \emph{extended metric dimension} of $I$.
\end{defn}

We now explain the meaning of each element of $I$. Firstly, a solution $S$ must resolve any pair in $\T {X_i}$, possibly with a vector of $\ext{I}$ which represents a vertex of $V \setminus \T {X_i}$ in the resolving set. Secondly, for all $\vec{r}$ in $\int I$, we are forced to select a vertex in $T(X_i)$ whose trace is $\vec{r}$. This will be useful to combine solutions since it will be a vector of $D_{ext}$ in other instances. The elements in $\paire{I}$ will also be useful for combinations. In some sense $\paire{I}$ is the additional gain of $S$ compared to the main goal to resolve $\T {X_i}$. The set $S_I$ constrains the intersection between $S$ and $X_i$ by forcing a precise subset of $X_i$ to be in $S$.

The following lemma is a consequence of Definition~\ref{def_instance}. It connects the definition of the extended metric dimension with the metric dimension.
\begin{lemme}\label{correct}
Let $G$ be a graph, $T$ be a nice tree-decomposition of $G$ and $r$ be the root of $T$. Let $I_0$ be the instance ${(\{r\},\{r\},\emptyset,\emptyset,\emptyset)}$, then $\dim(I_0)$ is the smallest size of a resolving set of $G$ containing $r$.
\end{lemme}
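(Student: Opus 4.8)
The plan is to prove the statement by directly unfolding the definitions of \textsc{EMD} instance and solution for the very specific instance $I_0=(\{r\},\{r\},\emptyset,\emptyset,\emptyset)$, and to observe that almost every condition either becomes vacuous or collapses to the classical notion of a resolving set. The key preliminary observation is that, since $T$ is rooted in a node whose bag is exactly $\{r\}$, the subtree rooted at this node is all of $T$; hence $\T{\{r\}}=V(G)$ by the covering property $\bigcup_i X_i = V(G)$ of a tree-decomposition. Thus a solution $S$ of $I_0$ is simply a subset of $V(G)$, and the conditions (S1)--(S4) are to be read with $\T{\{r\}}=V(G)$, $\int{I_0}=\ext{I_0}=\paire{I_0}=\emptyset$ and $S_I=\{r\}$.

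I would then check that (S2) and (S3) are vacuously satisfied, since their index sets $\int{I_0}$ and $\paire{I_0}$ are empty, and that (S1) simplifies. Indeed, as $\ext{I_0}=\emptyset$ contains no vector, no pair can be resolved by a vector of $\ext{I_0}$, so (S1) states exactly that every pair of vertices of $V(G)$ is resolved by some vertex of $S$, that is, that $S$ is a resolving set of $G$. Finally (S4) reads $S\cap\{r\}=\{r\}$, i.e.\ $r\in S$. Therefore the solutions of $I_0$ are precisely the resolving sets of $G$ that contain $r$.

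With this equivalence in hand, both inequalities are immediate. For the upper bound on $\dim(I_0)$, any resolving set $S$ of $G$ with $r\in S$ lies in $V(G)=\T{\{r\}}$ and satisfies (S1)--(S4), so it is a solution and $\dim(I_0)\le |S|$; minimising over such $S$ gives that $\dim(I_0)$ is at most the smallest size of a resolving set containing $r$. For the reverse inequality, any solution $S$ of $I_0$ is, by the discussion above, a resolving set containing $r$, so the smallest such resolving set has size at most $\dim(I_0)$. Combining the two bounds yields equality.

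The argument involves no real obstacle: the only points requiring care are the identification $\T{\{r\}}=V(G)$ (which uses that the root bag is $\{r\}$ and that the root's subtree is the whole tree) and the remark that an empty $\ext{I_0}$ resolves no pair, so that (S1) reduces to the definition of a resolving set rather than to something weaker. One should also note in passing that $I_0$ is a legitimate instance, since $S_I=\{r\}\neq\emptyset$ meets the non-degeneracy requirement ``$\ext{I}\neq\emptyset$ or $S_I\neq\emptyset$''.
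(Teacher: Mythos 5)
Your proof is correct and matches the paper's treatment: the paper simply asserts this lemma as a direct consequence of Definition~\ref{def_instance}, and your unfolding (the root's subtree is all of $G$, (S2)--(S3) vacuous, (S1) collapses to the resolving-set condition since $\ext{I_0}=\emptyset$, and (S4) forces $r\in S$) is exactly the intended argument, spelled out in more detail.
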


To ensure that our algorithm works well, we will need to use Lemma~\ref{resolve_far} in some subgraphs of $G$. This is possible only if we know that the solution is not included in the subgraph. This corresponds to the condition $\ext{I} \neq \emptyset$ or $S_I \neq \emptyset$ and this is why the algorithm computes the size of a resolving set containing the root of $T$.

\subsection{Dynamic programming} \label{sec:rules}

We explain how we can compute the metric dimension of an instance $I$  given the metric dimension of the instances on the children of $X_i$ in $T$. The proof is divided according to the different type of nodes.

\subsubsection{Leaf node}

Computing the dimension of an instance for a leaf node can be done easily with the following lemma.
\begin{lemme}\label{calcul_leaf}
Let $I$ be an instance for a leaf node $i$ and $v$ be the unique vertex of~$X_i$. Then, 

$$
\dim(I) = \left\{
    \begin{array}{lll}
        0 & \text{ if } S_I=\emptyset,\; \int I =\emptyset \text{ and } \paire I = \emptyset \\
        1 & \text{ if } S_I=\{v\} \text{ and } \int {I} \subseteq \{ \vec{(0)} \} \\
        +\infty & \text{ otherwise}
    \end{array}
\right.
$$
\end{lemme}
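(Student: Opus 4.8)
The plan is to verify the three cases of the formula directly from Definition~\ref{def_instance}, using the fact that for a leaf node $i$ we have $X_i = \{v\}$ and, crucially, $\T{X_i} = \{v\}$, so the only possible solution sets are $S = \emptyset$ and $S = \{v\}$. All distance vectors involved have size $1$, and the only relevant vector is $\distvec{X_i}{v} = \vec{(0)}$, with trace $\vec{Tr_{X_i}}(v) = \vec{(0)}$. Since $\T{X_i}$ is a single vertex, condition \textbf{(S1)} is vacuous (there are no pairs of distinct vertices to resolve), and condition \textbf{(S3)} is likewise vacuous for any admissible $\paire I$ because there is no vertex $x \in \T{X_i}$ at distance $\leq 2$ from $X_i$ other than $v$ whose partner $y$ lies outside $\T{X_i}$ — more precisely the pair must be checked, but resolving such pairs is governed by whether $S$ contains a vertex of the right trace.

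First I would handle the case $S = \emptyset$. This forces $S_I = \emptyset$ by \textbf{(S4)}. With $S$ empty, \textbf{(S2)} can hold only if $\int I = \emptyset$, and \textbf{(S3)} can hold only if $\paire I = \emptyset$ (an empty $S$ resolves no pair, so any nonempty requirement fails). Conversely, if $S_I = \emptyset$, $\int I = \emptyset$ and $\paire I = \emptyset$, then $S = \emptyset$ satisfies all four conditions vacuously, giving $\dim(I) = 0$. Next I would treat $S = \{v\}$, which requires $S_I = \{v\}$ by \textbf{(S4)}. Here \textbf{(S2)} demands that every $\vec{r} \in \int I$ equals $\vec{Tr_{X_i}}(v) = \vec{(0)}$, i.e. $\int I \subseteq \{\vec{(0)}\}$; and I would check that \textbf{(S1)} and \textbf{(S3)} are automatically satisfied by $\{v\}$ (the former vacuously, the latter because $\paire I$ is constrained by the instance definition and $\{v\}$ does resolve the relevant pairs, or there are none). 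This yields $\dim(I) = 1$ exactly when $S_I = \{v\}$ and $\int I \subseteq \{\vec{(0)}\}$.

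Finally, every combination of inputs not covered by the two cases above admits no valid solution: if $S_I = \{v\}$ but $\int I \not\subseteq \{\vec{(0)}\}$, then $S$ must contain $v$ (hence $S = \{v\}$) yet cannot realize the required trace in \textbf{(S2)}; and if $S_I = \emptyset$ but $\int I \neq \emptyset$ or $\paire I \neq \emptyset$, then $S = \emptyset$ is forced but fails \textbf{(S2)} or \textbf{(S3)}. In both situations $\dim(I) = +\infty$, completing the third case.

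I expect the main subtlety — rather than a genuine obstacle — to lie in confirming that conditions \textbf{(S1)} and \textbf{(S3)} impose no extra constraints in the leaf case. This rests on the observation that $\T{X_i}$ consists of the single vertex $v$, so \textbf{(S1)} is vacuous, and on a careful reading of the instance-admissibility condition on $\paire I$ together with \textbf{(S3)}: one must check that whenever $\paire I \neq \emptyset$ the chosen $S$ (necessarily $\{v\}$ if any solution exists, since $v$ is the only vertex available to resolve pairs) does resolve the stipulated pairs, so that $\paire I$ does not introduce a hidden infeasibility beyond what $S_I$ and $\int I$ already capture. Once this is pinned down, the case analysis is entirely mechanical.
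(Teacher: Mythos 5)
Your proof is correct and follows essentially the same route as the paper's: since $\T{X_i}=\{v\}$ for a leaf, the only candidate solutions are $\emptyset$ and $\{v\}$, and one checks conditions \textbf{(S1)}--\textbf{(S4)} in each case. The paper's own proof is just a terser version of this case analysis; your extra care in confirming that \textbf{(S1)} is vacuous and that $\{v\}$ resolves every pair forced by $\paire I$ (since $d(v,v)=0\neq d(v,y)$) fills in details the paper leaves implicit.
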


\begin{proof}
Let $I$ be an instance for $i$. If $S_I=\emptyset$, only the set $S=\emptyset$ can be a solution for $I$. This set is a solution only if $\int I =\emptyset$ and $\paire I = \emptyset$.
If $S_I=\{v\}$, only the set $S=\{v\}$ can be a solution for $I$. This is a solution only if $\int {I}$ is empty or only contains the vector $\vec{Tr_{x_i}}(v)$. \qed
\end{proof}

In the rest of the section, we treat the three other types of nodes.
For each type of nodes we will proceed as follows: define some conditions on the instances on children to be compatible with $I$, and prove an equality between the extended metric dimension on compatible children instances and the extended metric dimension of the instance of the node. 

\subsubsection{Join node.} Let $I$ be an instance for a join node $i$ and let $i_1$ and $i_2$ be the children of $i$.

\begin{defn}\label{compatible_pair}
A pair of instances $(I_1, I_2)$ for $(i_1,i_2)$ is \emph{compatible} with~$I$ if 
\begin{itemize}
    \item \textbf{(J1)} $S_{I_1}=S_{I_2}=S_I$,
    \item \textbf{(J2)} $ \ext {I_1} \subseteq \ext{I} \cup \int{I_2} $ and $ \ext {I_2} \subseteq \ext{I} \cup \int{I_1} $,
    \item \textbf{(J3)} $\int{I} \subseteq \int{I_1} \cup \int {I_2}$,

    \item \textbf{(J4)} 
    Let $C_1=\{ (\vec{r},\vec{t}) \in \paire{I}$ such that $\vec{r} \notin \ddeux{X_{i_1}}\}$ and $C_2=\{ (\vec{r},\vec{t}) \in \paire {I}$ such that $\vec{r} \notin \ddeux{X_{i_2}}\}$.
    Let $D_1=\{ (\vec{r},\vec{t}) \in \ddeux{X_{i_1}} \times \ddeux{G \setminus X_{i_1}}$ such that there exists $\vec{u} \in \int{I_2}$ resolving the pair $(\vec{r},\vec{t}) \}$ and $D_2=\{ (\vec{r},\vec{t}) \in \ddeux{X_{i_2}} \times \ddeux{G \setminus X_{i_2}})$ such that there exists $\vec{u} \in \int{I_1}$ resolving the pair $(\vec{r},\vec{t})\}$

    Then $ \paire{I} \subseteq (\vec{r},\vec{t}) \in (C_1 \cup D_1 \cup \paire{I_1}) \cap (C_2 \cup D_2 \cup \paire{I_2}) $,

    \item \textbf{(J5)} For all $ \vec{r_1} \in \ddeux{X_{i_1}}$, for all $\vec{r_2} \in \ddeux{X_{i_2}}$, $(\vec{r_1},\vec{r_2}) \in \paire {I_1}$ or $(\vec{r_2},\vec{r_1}) \in \paire {I_2}$ or there exists $\vec{t} \in \ext{I}$ such that $\vec{t}$ resolves the pair $(\vec{r_1},\vec{r_2})$.
    
\end{itemize}
\end{defn}  

Condition \textbf{(J4)} represents how the pairs of vertices of $V(T(X_{i_1})) \times V(T(X_{i_2}))$ can be resolved. A pair $(\vec{r},\vec{t})$ is in  $(C_1 \cup D_1 \cup \paire{I_1})$ if all the pairs of vertices $(x,y)$ with $x \in V(T(X_{i_1})) $ and $y \in V(T(X_{i_2}))$ are resolved. If $(\vec{r},\vec{t})$ is in $C_1$, no such pair of vertices exists, if $(\vec{r},\vec{t})$ is in $D_1$ the pairs of vertices are resolved by a vertex outside of $V(T(X_{i_1}))$ and if $(\vec{r},\vec{t})$ is in $\paire{I_1}$ the pairs of vertices are resolved by a vertex of $V(T(X_{i_1}))$. So a pair $(\vec{r},\vec{t})$ is resolved if the pair is in $(C_1 \cup D_1 \cup \paire{I_1})$ and in $(C_2 \cup D_2 \cup \paire{I_2})$.

Let $\mathcal{F}_J(I)$ be the set of pairs of instances compatible with~$I$.
We want to prove the following lemma:

\begin{lemme}\label{node_pair_main}
Let $I $ be an instance for a join node $i$. Then, 
\[ \dim(I) = \min_{(I_1,I_2) \in \mathcal{F}_J(I)} (\dim(I_1) + \dim(I_2) - |S_I|). \]
\end{lemme}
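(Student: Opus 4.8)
The plan is to prove the two inequalities separately, relying throughout on one structural observation. Since $X_i=X_{i_1}=X_{i_2}$ is a clique separator and $\T{X_i}=\T{X_{i_1}}\cup \T{X_{i_2}}$ with $\T{X_{i_1}}\cap \T{X_{i_2}}=X_i$, whenever a vertex $s$ lies in $\T{X_{i_1}}$ and a vertex $a$ lies on the opposite side of $X_i$ (or is otherwise separated from $s$ by $X_i$), the distance $d(s,a)$ is obtained by routing through $X_i$, so it depends on $a$ only via $\distvec{X_i}{a}$. Consequently, whether $s$ resolves a pair $(a,b)$ of such vertices depends only on $\distvec{X_i}{a}$, $\distvec{X_i}{b}$ and the trace of $s$; this is exactly the content, together with its converse, of the trace--resolution lemma following Definition~\ref{def_resolve_vec}. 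I will use it repeatedly to pass between ``resolved by a vertex'' and ``resolved by a vector''.

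For the inequality $\dim(I)\le \dim(I_1)+\dim(I_2)-|S_I|$, take a compatible pair $(I_1,I_2)$ attaining the minimum and optimal solutions $S_1,S_2$. By (J1) and (S4) we have $S_1\cap X_i=S_2\cap X_i=S_I$, and since the two subtrees meet only in $X_i$, the set $S:=S_1\cup S_2$ satisfies $|S|=|S_1|+|S_2|-|S_I|$. Conditions (S4) and (S2) for $S$ follow from $S\cap X_i=S_I$ and (J3). For (S1) I distinguish same-side pairs and cross pairs. A pair inside $\T{X_{i_1}}$ is, by (S1) for $I_1$, resolved by $S_1$ or by some $\vec t\in \ext{I_1}$; using (J2), either $\vec t\in\ext{I}$, or $\vec t\in\int{I_2}$ and (S2) for $I_2$ yields a vertex of $S$ with trace $\vec t$ that resolves the pair (directly if it lies in $X_i$, and by the structural observation otherwise); the case of $\T{X_{i_2}}$ is symmetric. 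For a cross pair $(x,y)$ I first dispose of the far case $|d(x,X_i)-d(y,X_i)|\ge 2$: the two minima defining resolution then lie in disjoint ranges, so every vector of $\{0,1\}^{|X_i|}$ resolves the corresponding pair of distance vectors; hence if $\ext{I}\neq\emptyset$ any of its vectors resolves $(x,y)$, while if $\ext{I}=\emptyset$ then $S_I\neq\emptyset$ and any vertex of $S_I\subseteq X_i$ resolves it. For the remaining cross pairs, condition (J5) guarantees that every cross pair with both endpoints at distance $\le 2$ of $X_i$ is resolved by $\paire{I_1}$ (via (S3) for $I_1$), by $\paire{I_2}$, or by $\ext{I}$; Corollary~\ref{resolve_dec} then lifts this to all cross pairs with $|d(x,X_i)-d(y,X_i)|\le 1$.

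For the reverse inequality, let $S$ be an optimal solution of $I$ and set $S_1=S\cap \T{X_{i_1}}$ and $S_2=S\cap \T{X_{i_2}}$, so that $S_1\cup S_2=S$, $S_1\cap S_2=S_I$, and $|S_1|+|S_2|-|S_I|=\dim(I)$. I define $I_1,I_2$ by $S_{I_j}=S_I$, $\int{I_j}=\vec{Tr_{X_i}}(S_j)$, $\ext{I_1}=\ext{I}\cup\int{I_2}$, $\ext{I_2}=\ext{I}\cup\int{I_1}$, and I take $\paire{I_j}$ to be the set of all realisable vector pairs all of whose realisations are resolved by $S_j$. With these choices, (S2)--(S4) and (S1) for each $I_j$ hold directly (again turning the external vertices of $S_{3-j}$ into vectors through their traces), and (J1)--(J3) are immediate from the definitions. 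It remains to verify (J4) and (J5).

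This last verification is the main obstacle, because a single vector pair may be realised by many vertex pairs, so ``resolved by $S$'' does not obviously split into ``all realisations resolved by $S_1$'' or ``all by $S_2$''. The structural observation resolves this: if $s\in S_1$ and $y$ lies on the opposite side of $X_i$, whether $s$ resolves $(x,y)$ depends only on $x$ and on $\distvec{X_i}{y}$, and symmetrically for $S_2$. Hence, fixing a pair $(\vec{r_1},\vec{r_2})$ not resolved by $\ext{I}$, if some vertex realising $\vec{r_1}$ were not resolved by $S_1$ against $\vec{r_2}$ and some vertex realising $\vec{r_2}$ were not resolved by $S_2$ against $\vec{r_1}$, then pairing these two offending vertices would give a cross pair resolved by neither $S_1$ nor $S_2$ nor $\ext{I}$, contradicting (S1) for $I$. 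This forces $(\vec{r_1},\vec{r_2})\in\paire{I_1}$ or $(\vec{r_2},\vec{r_1})\in\paire{I_2}$, which is (J5); the same dichotomy applied to pairs with second vertex outside $\T{X_i}$, together with the bookkeeping sets $C_j$ and $D_j$, yields (J4). Combining the two inequalities gives the claimed equality.
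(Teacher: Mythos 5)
Your proof follows essentially the same route as the paper's: both inequalities are handled separately, the forward one by taking the union $S=S_1\cup S_2$ and verifying the solution conditions via \textbf{(J1)}--\textbf{(J5)}, Corollary~\ref{resolve_dec} and Lemma~\ref{resolve_far}, and the reverse one by splitting an optimal $S$ across the two subtrees, defining $I_1,I_2$ exactly as in the paper, and using the same ``two offending vertices'' argument for \textbf{(J5)}. The only point you skip is the verification of \textbf{(S3)} for $S=S_1\cup S_2$ in the forward direction, which is precisely where \textbf{(J4)} gets used: for $(\vec{r},\vec{t})\in\paire{I}$ one must check that every realising pair is resolved because $(\vec{r},\vec{t})$ lies in $C_1\cup D_1\cup\paire{I_1}$ (and symmetrically in $C_2\cup D_2\cup\paire{I_2}$), which follows by the same trace argument you already set up.
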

 We prove the equality by proving the two inequalities in the next lemmas.

\begin{lemme}\label{calcul_pair_ineq1}
 Let $(I_1,I_2)$ be a pair of instances for $(i_1,i_2)$ compatible with $I$ with finite values for $\dim(I_1)$ and $\dim(I_2)$. Let $S_1 \subseteq V(T(X_{i_1}))$ be a solution for $I_1$ and $S_2 \subseteq V(T(X_{i_2}))$ be a solution for $I_2$. Then $S = S_1 \cup S_2$ is a solution for $I$. In particular, $$  \dim(I) \leq \min_{(I_1,I_2) \in \mathcal{F}_J(I)} (\dim(I_1) + \dim(I_2) - |S_I|).$$ 
\end{lemme}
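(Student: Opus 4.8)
The plan is to show that $S=S_1\cup S_2$ satisfies the four conditions (S1)--(S4) of Definition~\ref{def_instance} for $I$, and then to read off the displayed inequality from a size count. I will use two structural facts about a join node, both immediate from the connectivity axiom of a nice clique tree: since $i_1$ and $i_2$ sit in different subtrees below $i$, one has $\T{X_{i_1}}\cap\T{X_{i_2}}=X_i$ and $\T{X_i}=\T{X_{i_1}}\cup\T{X_{i_2}}$, and the clique $X_i$ separates $\T{X_{i_1}}\setminus X_i$, $\T{X_{i_2}}\setminus X_i$ and $V\setminus\T{X_i}$ pairwise. The recurring tool is the trace lemma: a vertex $s$ resolves $(x,y)$ whenever $\vec{Tr_{X_i}}(s)$ resolves $(\distvec{X_i}{x},\distvec{X_i}{y})$ and $X_i$ separates $s$ from $x$ and from $y$ (the underlying distance identity also survives the boundary case $s\in X_i$, so it applies to every $s\in S$ lying on the far side of $X_i$ from $x$ and $y$). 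Conditions (S4) and (S2) are then immediate: $S\cap X_i=(S_1\cap X_i)\cup(S_2\cap X_i)=S_I$ by (S4) for the children and (J1), and each $\vec r\in\int I$ lies in $\int{I_1}\cup\int{I_2}$ by (J3), so the corresponding child's (S2) supplies a vertex of $S$ with trace $\vec r$.

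For (S3), fix $(\vec{r_1},\vec{r_2})\in\paire I$ with witnesses $x\in\T{X_i}$ and $y\notin\T{X_i}$ both at distance at most $2$ from $X_i$; by symmetry assume $x\in\T{X_{i_1}}$. Then $\vec{r_1}\in\ddeux{X_{i_1}}$, so $(\vec{r_1},\vec{r_2})\notin C_1$ and (J4) places it in $D_1\cup\paire{I_1}$. If it lies in $\paire{I_1}$, condition (S3) for $I_1$ resolves $(x,y)$ by $S_1$ (note $y\notin\T{X_{i_1}}$). If it lies in $D_1$, some $\vec u\in\int{I_2}$ resolves $(\vec{r_1},\vec{r_2})$; (S2) for $I_2$ yields $s\in S_2$ with trace $\vec u$, and the trace lemma shows $s$ resolves $(x,y)$ since $X_i$ separates $s$ from $x$ and from $y$.

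The heart of the proof is (S1): every pair $(u,v)$ of $\T{X_i}=\T{X_{i_1}}\cup\T{X_{i_2}}$ must be resolved by $S$ or by a vector of $\ext I$. If $u,v$ lie in the same subtree, say $\T{X_{i_1}}$, then (S1) for $I_1$ resolves them by $S_1$ or by some $\vec t\in\ext{I_1}$; by (J2) such $\vec t$ lies in $\ext I\cup\int{I_2}$, and in the second case (S2) for $I_2$ produces a vertex of $S_2$ with trace $\vec t$, which resolves $(u,v)$ by the trace lemma. There remain the cross pairs $u\in\T{X_{i_1}}\setminus X_i$, $v\in\T{X_{i_2}}\setminus X_i$. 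When $|d(u,X_i)-d(v,X_i)|\ge 2$ the intervals $\{d(u,X_i),d(u,X_i)+1\}$ and $\{d(v,X_i),d(v,X_i)+1\}$ are disjoint, so any $s\in S_I$, or any $\vec t\in\ext I$ (one of which exists by the definition of an instance), already resolves $(u,v)$; this is the mechanism of Lemma~\ref{resolve_far}. When $|d(u,X_i)-d(v,X_i)|\le 1$ I reduce, via Corollary~\ref{resolve_dec}, to the close cross pairs (both within distance $2$ of $X_i$); each such pair $(\vec{r_1},\vec{r_2})\in\ddeux{X_{i_1}}\times\ddeux{X_{i_2}}$ is resolved by (J5), i.e.\ by $S_1$ through (S3) for $I_1$, by $S_2$ through (S3) for $I_2$, or directly by a vector of $\ext I$.

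I expect the main obstacle to be exactly this last reduction, because Corollary~\ref{resolve_dec} is stated for resolution by a vertex set, whereas (J5) may only resolve a close cross pair through a vector of $\ext I$, for which no concrete vertex of $S$ is available. To bridge this I will re-run the inductive argument of Lemma~\ref{resolve_close} while tracking vector-resolution: the point is that the ``closest-to-$X_i$'' pattern (the zero set of $\vec{Tr_{X_i}}(\cdot)$) of a vertex far from $X_i$ is controlled by that of a vertex at distance $1$ on a shortest path to $X_i$, so a vector of $\ext I$ resolving a close cross pair also resolves the associated medium-distance pair; equivalently one adjoins to $S$ the external realizers of $\ext I$, applies Corollary~\ref{resolve_dec} verbatim, and lets the trace lemma translate ``resolved by an external vertex'' back into ``resolved by a vector of $\ext I$.'' Verifying this transfer, together with the boundary case $s\in X_i$ of the trace lemma, is the only delicate step; the rest is bookkeeping against (J1)--(J5). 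Finally, once $S$ is known to be a solution, $S_1\cap S_2\subseteq\T{X_{i_1}}\cap\T{X_{i_2}}=X_i$ forces $S_1\cap S_2=S_I$, hence $|S|=|S_1|+|S_2|-|S_I|$; taking $S_1,S_2$ of minimum size and minimising over compatible pairs yields $\dim(I)\le\min_{(I_1,I_2)\in\mathcal F_J(I)}(\dim(I_1)+\dim(I_2)-|S_I|)$.
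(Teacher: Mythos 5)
Your proof follows essentially the same route as the paper's: (S2)--(S4) via (J1), (J3), (J4), same-subtree pairs via (J2) plus the trace lemma, and cross pairs split according to $|d(u,X_i)-d(v,X_i)|$ using Lemma~\ref{resolve_far} for the far case and Corollary~\ref{resolve_dec} with (J5) supplying the base case for the close one. The subtlety you flag---that Corollary~\ref{resolve_dec} assumes resolution by a set of vertices while (J5) may only provide a vector of $\ext{I}$ with no corresponding vertex inside $T(X_i)$---is genuine and is in fact glossed over in the paper's own one-line invocation of that corollary; your proposed transfer (adjoin the external realizers, apply the corollary, then translate back into vector-resolution via the trace lemma) is the right repair.
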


\begin{proof}
Let us prove that the conditions of Definition~\ref{def_instance} are satisfied.

\noindent \textbf{(S1)} Let $(x,y)$ be a pair of vertices of $T(X_i)$. Assume first that $x \in V(T(X_{i_1}))$ and $y \in V(T(X_{i_1}))$. Either $(x,y)$ is resolved by a vertex of $S_1$ and then by a vertex of $S$ or $(x,y)$ is resolved by a vector $\vec{r} \in \ext{I_1}$.
By condition \textbf{(J2)}, $\vec{r} \in \ext{I}$ or $\vec{r} \in \int{I_2}$. If $\vec{r} \in \ext{I}$ then $(x,y)$ is resolved by a vector of $\ext{I_1}$. Otherwise, there exists a vertex $t \in S_2$ such that $\vec{Tr_{X_{i_2}}}(t)=\vec{r}$. So $t \in S$ and $t$ resolves the pair $(x,y)$. The case $x \in V(T(X_{i_2}))$ and $y \in V(T(X_{i_2}))$ is symmetric. So we can assume that $x \in V(T(X_{i_1}))$ and $y \in V(T(X_{i_2}))$. If $d(x,X_i) \leq 2$ and $d(y,X_i) \leq 2$, the condition \textbf{(J5)} ensures that the pair $(x,y)$ is resolved by $S$ or by a vector of $\ext{I}$.
Otherwise, either $|d(x,X_i) - d(y,X_i)| \leq 1$ and $(x,y)$ is resolved by Lemma~\ref{resolve_dec} or $|d(x,X_i) - d(y,X_i)| \geq 2$ and $(x,y)$ is resolved by Lemma~\ref{resolve_far} because $\ext{I} \neq \emptyset$ or $  S_I \neq \emptyset$.

\noindent \textbf{(S2)} Let $\vec{r} \in \int{I}$. By compatibility, the condition \textbf{(J3)} ensures that $\vec{r} \in \int{I_1}$ or $\vec{r} \in \int{I_2}$. As $S=S_1 \cup S_2$, $S$ contains a vertex $s$ such that $\vec{Tr_{X_i}}(s)=\vec{r}$.

\noindent \textbf{(S3)} Let $(\vec{r},\vec{t}) \in \paire{I}$ and $(x,y)$ with $x \in V(T(X_i))$ such that $\distvec{X_i}{x}=\vec{y}$ and  $y \notin T(X_i)$ such that $\distvec{X_i}{y}=\vec{t}$. Without loss of generality assume that $x \in V(T(X_{i_1}))$.

By compatibility, $ (\vec{r},\vec{t}) \in (C_1 \cup D_1 \cup \paire{I_1}) \cap (C_2 \cup D_2 \cup \paire{I_2})$ so in $C_1 \cup D_1 \cup \paire{I_1}$.
If $(\vec{r},\vec{t}) \in \paire I_1$, then there exists $s \in S_1$ that resolves the pair $(x,y)$ so the pair is resolved by $S$. If $(\vec{r},\vec{t}) \in D_1$, there exists $\vec{u} \in \int {I_2}$ such that $\vec{u}$ resolves the pair $(\vec{r},\vec{t})$. By compatibility, there exists $s \in S_2$ such that $\vec{Tr_{X_i}}(s)=\vec{u}$. So $s$ resolves the pair $(x,y)$.
And $(\vec{r},\vec{t}) \notin  C_1$ since $x$ belongs to $T(X_{i_1})$ with vector distance $\vec{r}$.

\noindent \textbf{(S4)} is clear since $X_{i_1}=X_{i_2}=X_{i}$.

Thus, $\dim(I) \leq \dim(I_1)+\dim(I_2)-|S_I|$ is true for any pair of compatible instances $(I_1,I_2)$ so $  \dim(I) \leq \min_{(I_1,I_2) \in \mathcal{F}_J(I)} (\dim(I_1) + \dim(I_2) - |S_I|)$.\qed
\end{proof}
\begin{lemme}\label{calcul_pair_ineq2}
Let $I $ be an instance for a join node $i$ and let $i_1$ and $i_2$ be the children of $i$. Then, 
 \[ \dim(I) \geq \min_{(I_1,I_2) \in \mathcal{F}_J(I)} (\dim(I_1) + \dim(I_2) - |S_I|). \]
\end{lemme}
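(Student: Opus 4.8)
The plan is to prove the reverse inequality by starting from an optimal solution of $I$ and decomposing it along the two children. If $\dim(I)=+\infty$ there is nothing to prove, so assume it is finite and let $S\subseteq V(\T{X_i})$ be a solution of $I$ with $|S|=\dim(I)$. Since $i$ is a join node, $X_i=X_{i_1}=X_{i_2}$, the subtree $\T{X_i}$ is the union $\T{X_{i_1}}\cup \T{X_{i_2}}$, and these two sets meet exactly in $X_i$. I would set $S_1=S\cap V(\T{X_{i_1}})$ and $S_2=S\cap V(\T{X_{i_2}})$, so that $S_1\cup S_2=S$ and $S_1\cap S_2=S\cap X_i=S_I$ by \textbf{(S4)}; hence $|S_1|+|S_2|-|S_I|=|S|=\dim(I)$. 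The whole proof then reduces to building a \emph{compatible} pair $(I_1,I_2)$ for which $S_1$ is a solution of $I_1$ and $S_2$ a solution of $I_2$: this gives $\dim(I_j)\leq |S_j|$, and therefore $\min_{(I_1,I_2)\in\mathcal F_J(I)}(\dim(I_1)+\dim(I_2)-|S_I|)\leq |S_1|+|S_2|-|S_I|=\dim(I)$, which is exactly the claim.

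I would define the two instances by the natural restriction. Take $S_{I_1}=S_{I_2}=S_I$, $\int{I_j}=\vec{Tr_{X_{i_j}}}(S_j)$, $\ext{I_1}=\ext{I}\cup\int{I_2}$ and $\ext{I_2}=\ext{I}\cup\int{I_1}$, and let $\paire{I_j}$ be the \emph{maximal} admissible choice, namely the set of realizable pairs $(\vec r_1,\vec r_2)$ (witnessed by some $x\in\T{X_{i_j}}$ and $y\notin \T{X_{i_j}}$ within distance $2$ of $X_{i_j}$) all of whose realizations are resolved by $S_j$. With these choices \textbf{(S2)}, \textbf{(S3)}, \textbf{(S4)} for $S_j$ and \textbf{(J1)}, \textbf{(J2)}, \textbf{(J3)} hold essentially by construction (for \textbf{(J3)} each required trace of $\int{I}$ comes from a vertex of $S=S_1\cup S_2$). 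The validity of the instances ($\ext{I_j}\neq\emptyset$ or $S_{I_j}\neq\emptyset$) follows from the same property of $I$. For \textbf{(S1)} I would take a pair $(x,y)$ inside $\T{X_{i_1}}$; by \textbf{(S1)} for $I$ it is resolved by a vertex $s\in S$ or a vector of $\ext{I}$. If $s\in S_1$ or the vector lies in $\ext{I}\subseteq\ext{I_1}$ we are done; if $s\in S_2\setminus X_i$, then since the clique $X_i$ separates $s$ from $x$ and $y$, the distance identity in the proof of the (unnamed) lemma relating $\vec{Tr_K}$ to resolution shows $\vec{Tr_{X_i}}(s)$ resolves $(\distvec{X_i}{x},\distvec{X_i}{y})$, and this vector lies in $\int{I_2}\subseteq\ext{I_1}$.

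The hard part is \textbf{(J4)} and \textbf{(J5)}, where the maximal choice of $\paire{I_j}$ looks too small but in fact is not. For \textbf{(J4)}, take $(\vec r,\vec t)\in\paire{I}$ with $\vec r\in\ddeux{X_{i_1}}$ (otherwise it is in $C_1$) and $(\vec r,\vec t)\notin\paire{I_1}$; by maximality there is a realization $(x^*,y^*)$ with $x^*\in\T{X_{i_1}}$ and $y^*\notin\T{X_{i_1}}$ not resolved by $S_1$. The crucial observation is that the genuine outside witness $y$ of $(\vec r,\vec t)\in\paire{I}$ (with $y\notin\T{X_i}$ and $\distvec{X_i}{y}=\vec t$) satisfies $d(\sigma,y)=d(\sigma,y^*)$ for \emph{every} $\sigma\in S_1$: both $y$ and $y^*$ have distance vector $\vec t$ and are separated from $\sigma$ by the clique $X_i$ (and the case $\sigma\in X_i$ is the trivial clique case). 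Hence the genuine pair $(x^*,y)$ is also unresolved by $S_1$; since $(\vec r,\vec t)\in\paire{I}$, condition \textbf{(S3)} for $I$ forces a resolver $s''\in S$, which must lie in $S_2\setminus X_i$, and the trace lemma then yields $\vec{Tr_{X_i}}(s'')\in\int{I_2}$ resolving $(\vec r,\vec t)$, i.e. $(\vec r,\vec t)\in D_1$. For \textbf{(J5)}, given $\vec r_1\in\ddeux{X_{i_1}}$ and $\vec r_2\in\ddeux{X_{i_2}}$ not claimed by $\paire{I_1}$ nor $\paire{I_2}$, I would pick unresolved realizations $(x^\dagger,y^\dagger)$ for $S_1$ and $(x^\ddagger,y^\ddagger)$ for $S_2$, and show by the same separation equalities that the honest cross pair $(x^\dagger,x^\ddagger)$ (with $x^\dagger\in\T{X_{i_1}}$ at $\vec r_1$ and $x^\ddagger\in\T{X_{i_2}}$ at $\vec r_2$) is unresolved by all of $S=S_1\cup S_2$; then \textbf{(S1)} for $I$ forces a vector of $\ext{I}$ to resolve it, giving the third option of \textbf{(J5)}.

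The delicate point throughout, and the step I expect to be the main obstacle, is precisely this substitution trick: an unresolved boundary or cross pair is never directly exploitable (a vertex of $S_2$ resolving $(x^*,y^*)$ with $y^*$ in the same component does not yield a trace that resolves the vector pair), but replacing the inner component witness by the genuine \emph{external} witness $y\notin\T{X_i}$ makes the clique $X_i$ separate everything in sight, so that \textbf{(S3)}/\textbf{(S1)} for $I$ together with the clique-separator distance identity convert the resolution into a vector of $\int{I_2}$ (membership in $D_1$) or of $\ext{I}$. Once \textbf{(J1)}--\textbf{(J5)} are verified, $(I_1,I_2)\in\mathcal F_J(I)$ and $S_1,S_2$ are solutions, so the inequality $\dim(I)\geq \min_{(I_1,I_2)\in\mathcal F_J(I)}(\dim(I_1)+\dim(I_2)-|S_I|)$ follows, completing the proof (and, with Lemma~\ref{calcul_pair_ineq1}, Lemma~\ref{node_pair_main}).
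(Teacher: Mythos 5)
Your proposal is correct and follows essentially the same route as the paper: restrict the optimal solution $S$ to $S_1=S\cap \T{X_{i_1}}$ and $S_2=S\cap \T{X_{i_2}}$, build $(I_1,I_2)$ with $\int{I_j}=Tr_{X_i}(S_j)$, $\ext{I_1}=\ext{I}\cup\int{I_2}$ and the maximal admissible $\paire{I_j}$, then check compatibility and that each $S_j$ solves $I_j$. Your ``substitution trick'' for \textbf{(J4)} and \textbf{(J5)} (replacing a witness by one on the other side of the clique separator $X_i$, which preserves all distances to $S_j$) is exactly the mechanism the paper uses in its two claims, so the arguments coincide in substance.
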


\begin{proof}
If $\dim (I)=+ \infty$ then the result indeed holds. So assume $\dim (I)$ is finite.
Let $S$ be a solution for $I$ of minimal size. Let $S_1=S \cap T(X_{i_1})$ and $S_2=S \cap T(X_{i_2})$. We define now two instances $I_1$ and $I_2$ for $i_1$ and $i_2$. Let $S_{I_1} = S_{I_2}=S_I$, $\int {I_1}= Tr_{X_i}(S_1)$, $\int {I_2}= Tr_{X_i}(S_2)$, $\ext{I_1}=\ext{I} \cup \int{I_2}$ and $\ext{I_2}=\ext{I} \cup \int{I_1}$. To build the sets $\paire{I_1}$ and $\paire{I_2}$ we make the following process that we explain for $\paire{I_1}$. For all pairs of vectors $(\vec{r},\vec{t})$ of $(\ddeux{X_{i_1}},\ddeux{G \setminus X_{i_1}})$, consider all the pairs of vertices $(x,y)$ with $x \in V(T(X_{i_1}))$, $y \in V(G \setminus T(X_{i_1}))$, $\vec{r} \in \ddeux{X_i}$, $\vec{t} \in \ddeux{G \setminus X_{i_1}})$, $\distvec{X_i}{x}=\vec{r}$ and $\distvec{X_i}{y}=\vec{t}$. If all the pairs are resolved by vertices of $S_1$ (that for each pair, there exists a vertex of $S_1$ that resolves the pair), then add $(\vec{r},\vec{t})$ to $\paire{I_1}$.
\begin{claim}
$(I_1,I_2)$ is compatible with $I$.

\end{claim}
\begin{proof}
    
\textbf{(J1)}, \textbf{(J2)} and \textbf{(J3)} are straightforward.

\noindent \textbf{(J4)} Let $(\vec{r},\vec{t}) \in \paire I$, we want to prove that $ (\vec{r},\vec{t}) \in (C_1 \cup D_1 \cup \paire{I_1}) \cap (C_2 \cup D_2 \cup \paire{I_2}) $. We prove that $  (\vec{r},\vec{t}) \in (C_1 \cup D_1 \cup \paire{I_1})$, the other part of the proof is symmetrical.

If $\vec{r} \notin \ddeux{X_{i_1}}$, then $ (\vec{r},\vec{t}) \in C_1 $. Otherwise, there exists a vertex $x$ in $T(X_{i_1})$ such that $\distvec{X_{i_1}}{x}=\vec{r}$ and a vertex $y$ in $G \setminus T(X_i)$ such that $\distvec{X_i}{y}=\vec{t}$ (because the pair $(\vec{r},\vec{t})$ belongs to $\paire I$). The pair $(x,y)$ is resolved by $S$. If there is a vertex $s \in S \cap T(X_{i_2})$ resolving the pair, then $s$ resolves all the pairs with such distance vector and then $(\vec{r},\vec{t}) \in D_1$. Otherwise, for any pair $(x,y)$ of $ T(X_{i_1}) \times G \setminus T(X_i) $ with $\distvec{X_{i_1}}{x}=\vec{r}$ and $\distvec{X_i}{y}=\vec{t}$, there is a vertex of  $S \cap T(X_{i_1})$ that resolves the pair $(x,y)$, so $(\vec{r},\vec{t}) \in \paire {I_1}$.

\noindent \textbf{(J5)} Let $\vec{r_1} \in \ddeux{X_{i_1}}$, $\vec{r_2} \in \ddeux{X_{i_2}}$ and two vertices $x \in X_{i_1}$ and $y \in X_{i_2}$ such that $\distvec{X_{i_1}}{x}=\vec{r_1}$ and $\distvec{X_{i_2}}{y}=\vec{r_2}$. As $S$ is a solution of $I$, either the pair $(x,y)$ is resolved by a vector $\vec{r_3} \in \ext I$, or there exists $s \in S$ resolving $(x,y)$. If $(x,y)$ is resolved by $s$, assume by contradiction that $(\vec{r_1},\vec{r_2}) \not \in \paire{I_1}$ and $(\vec{r_2},\vec{r_1}) \not \in \paire{I_1}$. Then there exist vertices $x_1,x_1' \in V(T(X_{i_1}))$ with $\distvec{X_{i_1}}{x_1}=\vec{r_1}$, $\distvec{X_{i_2}}{x_1'}=\vec{r_1}$ and $x_2,x_2' \in V(T(X_{i_2}))$ with $\distvec{X_{i_2}}{x_2}=\vec{r_2}$ and $\distvec{X_{i_2}}{x_2'}=\vec{r_2}$ such that the pair $(x_1,x_2)$ is not resolved by a vertex of $S_1$ and the pair $(x_1',x_2')$ is not resolved by a vertex of $S_2$. Let $s \in S$ resolving the pair $(x_1,x_2')$. If $s \in S_1$, then $s$ resolves the pair $(x_1,x_2)$ and if $s \in S_2$, then $s$ resolves the pair $(x_1',x_2')$, a contradiction. \qed 
\end{proof}
\begin{claim}
$S_1$ is a solution of $I_1$ and $S_2$ is a solution of $I_2$.

\end{claim} 
\begin{proof}

We only prove that $S_1$ is a solution of $I_1$ as the proof that $S_2$ is a solution of $I_2$ is similar.
\smallskip \noindent

\noindent \textbf{(S1)} Let $(x,y)$ be a pair of vertices of $T(X_{i_1})$. As $S$ is a solution of $I$, the pair $(x,y)$ is either resolved by a vertex of $S$ or by a vector of $\ext{I}$. If $(x,y)$ is resolved by a vector of $\ext{I}$, the pair $(x,y)$ is also resolved by a vector of $\ext{I_1}$ since $\ext{I} \subseteq \ext{I_1}$. Otherwise let $s \in S$ resolving the pair $(x,y)$. If $s \in T(X_{i_1})$ then $(x,y)$ is resolved by a vertex of $S_1$. Otherwise $s \in T(X_{i_2})$ and by construction of $I_1$, $\ext {I_1}$ contains the vector $\vec{Tr_{X_i}(s)}$ so $(x,y)$ is resolved by a vector of $\ext{I_1}$.

\noindent \textbf{(S2)} By definition, $ \int{I_1}=Tr_{X_i}(S_1)$. Hence, for any vector $\vec{r} \in \int {I_1}$, there is a vertex $s \in S_1$ with $\vec{Tr_{X_i}(s)}=\vec{r}$.

\noindent \textbf{(S3)} Let $(\vec{r},\vec{t}) \in \paire{I_1}$, $x \in T(X_{i_1})$ and $y\notin T(X_{i_1})$ such that $\distvec{X_{i}}{x}=\vec{r}$ and $\distvec{X_{i}}{y}=\vec{y}$. By construction of $\paire{I_1}$ there is a vertex $s \in S_1$ resolving the pair $(x,y)$.

\noindent \textbf{(S4)} $S_{I_1}=S_I$ and since $S$ is a solution of $I$, $S_I=S \cap X_i$. \qed
\end{proof}
Ultimately we get the announced inequality. Since $S$ is a minimal solution for $I$, we have $\dim(I)=|S|$. The sets $S_1$ and $S_2$ are solutions for $S_1$ and $S_2$ so $\dim(I_1) \leq |S_1|$ and $\dim(I_2) \leq |S_2|$. Since $|S|=|S_1|+|S_2|-|S_I|$ we get $\dim(I) \geq \dim(I_1) +\dim(I_2)- |S_I|$. This inequality is true for a specific pair of instances so in particular is true for a pair minimising the amount $\dim(I_1) +\dim(I_2)- |S_I|$, giving the result. \qed
\end{proof}

Lemma~\ref{node_pair_main} is a direct consequence of Lemma~\ref{calcul_pair_ineq1} and Lemma~\ref{calcul_pair_ineq2}.

\subsubsection{Introduce node}

We now consider an instance $I$ for an introduce node $i$. Let $j$ be the child of $i$ and $v \in V$ be such that $X_i=X_j \cup \{v\}$. Let $X_i=\{v_1,\ldots,v_k\}$ with $v=v_k$.
The tree $T(X_i)$ contains one more vertex than its child. The definition of the compatibility is slightly different if we consider the same set as a solution (type $1$) or if we add this vertex to the resolving set (type $2$).

\begin{defn}\label{compatible_introduce}
An instance $I_1$ is compatible with $I$ of type $1$ (resp. $2$) if
\begin{itemize}
    \item \textbf{(I1)} $S_I=S_{I_1}$ (resp. $=S_{I_1} \cup \{v\}$). 
    \item \textbf{(I2)} For all $  \vec{r} \in \ext{I}$,  $\vec{r^-} \in \ext{I_1}$ (resp. or $\vec{r}=(0,\ldots,0)$).
    \item \textbf{(I3)} For all $\vec{r} \in \int{I}, \vec{r}_k=1$ and  $\vec{r^-}\in \int{I_1}$ (resp. or $\vec{r}=(1,\ldots,1,0)$).
    \item \textbf{(I4)} For all $ (\vec{r},\vec{t}) \in \paire {I}$, $(\vec{r^-},\vec{t^-}) \in \paire {I_1}$.
    \item \textbf{(I5)} If $I_1$ is of type $1$, for all $(\vec{r},\vec{t})$ with $\vec{t}=(0,\ldots,0)$, $(\vec{r},\vec{t}) \in \paire{I_1}$.
    
\end{itemize}
\end{defn}

We want to prove that the following holds:
\begin{lemme}\label{node_introduce_main}
    
Let $I$ be an instance for an introduce node $i$. 
Let $\mathcal{F}_{1}(I)$ be the set of instances $I_1$ for $i_1$ compatible with $I$ of type 1 and $\mathcal{F}_{2}(I)$ be the set of instances $I_2$ for $i_1$ compatible with $I$ of type~2. Then,

\[\dim(I)=\min\;\{ \min_{I_1 \in \mathcal{F}_{1}(I)}\;\{\dim(I_1)\};\min_{I_2 \in \mathcal{F}_{2}(I)}\;\{\dim(I_2)+1\}\}.\]
\end{lemme}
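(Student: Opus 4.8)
The plan is to prove the equality by establishing the two inequalities separately, exactly as was done for the join node (Lemmas~\ref{calcul_pair_ineq1} and~\ref{calcul_pair_ineq2}). The introduce node differs from the join node in that $T(X_i)$ contains exactly one more vertex than $T(X_j)$, namely $v$, and $v$ always belongs to $X_i$. The two types of compatibility capture the dichotomy $v \notin S$ (type~1) versus $v \in S$ (type~2): in the first case a solution $S$ for $I$ is also a candidate solution for the child instance, and in the second case $S \setminus \{v\}$ is the candidate, which explains the ``$+1$'' appearing only in the type~2 term.

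For the inequality $\dim(I) \le \min\{\dots\}$, I would take a compatible instance $I_1$ (of either type) with a solution $S_1$ for $I_1$, set $S := S_1$ (type~1) or $S := S_1 \cup \{v\}$ (type~2), and verify conditions \textbf{(S1)}--\textbf{(S4)} of Definition~\ref{def_instance} for $S$ as a solution of $I$. The key observation driving every verification is that distance vectors with respect to $X_i$ are obtained from distance vectors with respect to $X_j$ by appending the extra coordinate $d(\cdot,v)$; conditions \textbf{(I2)}--\textbf{(I4)} are precisely the ``truncation'' statements $\vec{r^-} \in \ext{I_1}$ (resp. $\int{I_1}$, $\paire{I_1}$) that let me transfer resolvability from $I_1$ to $I$. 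The one genuinely new pair is $v$ itself: for \textbf{(S1)} I must check that every pair involving $v$ (which has distance $0$ to $X_i$, so the trace coordinate is the relevant one) is resolved, and this is where the special branches of \textbf{(I2)}, \textbf{(I3)} (the vectors $(0,\dots,0)$ and $(1,\dots,1,0)$) and the condition \textbf{(I5)} come into play, handling exactly the pairs whose external or internal vector points to $v$.

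Conversely, for $\dim(I) \ge \min\{\dots\}$ I would take a minimal solution $S$ for $I$ and split on whether $v \in S$. If $v \notin S$ I build a type~1 child instance $I_1$ with $S_{I_1} := S_I$, $\int{I_1} := \{\vec{r^-} : \vec{r} \in \int I\}$ together with whatever traces $S$ realizes, and $\ext{I_1}$, $\paire{I_1}$ defined by truncating the corresponding sets of $I$ and adding the pairs $S$ actually resolves; I then check $I_1$ is compatible and that $S$ (unchanged) is a solution of $I_1$, giving $\dim(I) = |S| \ge \dim(I_1)$. If $v \in S$ I instead build a type~2 instance and take $S \setminus \{v\}$ as the witnessing solution, yielding $\dim(I) = |S| = |S \setminus \{v\}| + 1 \ge \dim(I_2) + 1$. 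In both directions the bookkeeping closely mirrors the join-node proof, so I would reuse that structure.

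\textbf{The main obstacle} I anticipate is the careful treatment of $v$ as the ``new'' vertex in \textbf{(S1)}: because $d(v,X_i)=0$, the pairs of the form $(v,y)$ are resolved through the trace mechanism rather than through the induction on the child, so I must confirm that the special-case vectors built into \textbf{(I2)}, \textbf{(I3)} and the extra requirement \textbf{(I5)} together account for \emph{all} ways such a pair can be resolved (by a vertex of $S$ inside $T(X_i)$, by $v$ itself, or by an external vector). Getting the case analysis on these pairs complete and airtight—especially ensuring that $\vec{r}=(1,\dots,1,0)$ in \textbf{(I3)} correctly encodes ``$v$ is the selected vertex whose trace matches''—is the delicate point; the remaining conditions reduce to routine coordinate-truncation arguments.
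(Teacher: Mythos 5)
Your proposal follows essentially the same route as the paper: both inequalities are proved separately, the dichotomy $v \notin S$ versus $v \in S$ is matched to type~1 versus type~2 compatibility (explaining the $+1$), the converse direction constructs the child instance by truncating the vector sets of $I$, and the "coordinate-truncation" step you defer to is exactly the paper's technical Lemma~\ref{resolved_outside}. The approach and the key observations coincide, so no further comparison is needed.
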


Let us first prove a technical case.
\begin{lemme}\label{resolved_outside}
Let $i$ be an introduce node, $j$ be the child of $i$ and $v \in V$ such that $X_i=X_j \cup \{v\}$.
Let $(x,y)$ be a pair of vertices of $ T(X_j)$. Let $\vec{r}$ be a binary vector of size $|X_i|$, then $\vec{r}$ resolves $(x,y)$ if and only if $\vec{r^-}$ resolves $(x,y)$.
\end{lemme}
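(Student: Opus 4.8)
The plan is to unfold the definition of ``resolves'' for vectors (Definition~\ref{def_resolve_vec}) and reduce the statement to a comparison of two minima, exploiting the fact that the last coordinate $v_k=v$ contributes identically to both sides of the pair. Recall that for a binary vector $\vec{r}$ of size $k=|X_i|$, saying $\vec{r}$ resolves the pair $(x,y)$ means, by the relevant definition, that $\vec{r}$ resolves the pair $(\distvec{X_i}{x},\distvec{X_i}{y})$, i.e.
\[
\min_{1 \leq i \leq k} \bigl(\distvec{X_i}{x}+\vec{r}\bigr)_i \neq \min_{1 \leq i \leq k} \bigl(\distvec{X_i}{y}+\vec{r}\bigr)_i,
\]
while $\vec{r^-}$ resolving $(x,y)$ is the same statement with the minimum taken only over the first $k-1$ coordinates (using the distance vectors relative to $X_j$). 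So the heart of the matter is to show that including or excluding the $v_k$-coordinate does not change whether the two minima are equal.

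First I would fix notation: write $a=d(x,X_i)$, $b=d(y,X_i)$ and observe that the $j$-th entry of $\distvec{X_i}{x}+\vec{r}$ is $d(x,v_j)+\vec{r}_j$. The key arithmetic observation is that the contribution of the last coordinate is $d(x,v_k)+\vec{r}_k$ for $x$ and $d(y,v_k)+\vec{r}_k$ for $y$; since $\vec{r}_k$ is the \emph{same} constant added to both, the last coordinate can only become the unique minimizer on one side if it genuinely differs, and I need to rule out that the equality/inequality of the two full minima is flipped by this extra coordinate. Concretely, I would split into two cases according to whether the minimum over the first $k-1$ coordinates is already achieved at a value strictly smaller than the $k$-th coordinate contribution, in which case the full minimum equals the truncated minimum on both sides and the equivalence is immediate; or the $k$-th coordinate is at least as small, in which case I argue that $d(x,v_k)$ and $d(y,v_k)$ are themselves controlled (both equal to $d(x,X_i)$ or $d(x,X_i)+1$ since $X_i$ is a clique) so that adding the common $\vec{r}_k$ preserves the comparison.

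The cleanest route, which I would prefer, is to avoid case analysis on the graph distances and instead argue purely algebraically: for each side, the full minimum over $k$ coordinates equals $\min\{M^-,\,d(\cdot,v_k)+\vec{r}_k\}$ where $M^-$ is the truncated minimum. I would then show that the predicate ``$M^-_x \neq M^-_y$'' is equivalent to ``$\min\{M^-_x, c_x\} \neq \min\{M^-_y, c_y\}$'', where $c_x=d(x,v_k)+\vec{r}_k$ and $c_y=d(y,v_k)+\vec{r}_k$, using the fact that $v$ is simplicial-type in the nice clique tree so $c_x$ and $c_y$ cannot be the strict unique minimizers that break an otherwise-present equality. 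The main obstacle I anticipate is precisely justifying this last step rigorously: I must confirm that the newly introduced coordinate cannot create a resolving distinction that was absent in the truncation, nor destroy one that was present. This is where I expect to need that $d(x,v_k)$ relates to the other distances $d(x,v_j)$ through the clique structure of $X_i$ (so $|d(x,v_k)-d(x,v_j)|\le 1$), allowing me to bound $c_x$ below by $M^-_x$ up to an additive constant and conclude the two minima agree or differ in lockstep. Once that lemma-level arithmetic is pinned down, the biconditional follows by symmetry in $x$ and $y$ and the proof closes with \qed.
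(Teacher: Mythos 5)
Your reduction is the right one: writing the full minimum as $\min\{M^-_x,\,c_x\}$ with $c_x=d(x,v)+\vec{r}_k$ and showing that the extra coordinate never changes the value of the minimum is exactly what is needed, and it is what the paper's proof establishes (by contradiction, in each direction). However, the justification you offer for that key step is the wrong fact, and it does not suffice. You propose to use the clique structure of $X_i$, i.e.\ $|d(x,v)-d(x,v_j)|\le 1$ for $v_j\in X_j$, to bound $c_x$ below by $M^-_x$ ``up to an additive constant.'' That only yields $c_x\ge M^-_x-2$ (the clique inequality loses $1$ and the binary entry $\vec{r}_j$ loses another $1$), which is compatible with the $k$-th coordinate being the strict unique minimizer and hence with the truncation changing which pairs are resolved. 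An ``up to a constant'' bound cannot close the argument; you need $c_x\ge M^-_x$ exactly.

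The fact you actually need is the separation property of the nice clique tree, not the clique property of the bag: since $v$ is introduced at $i$, it does not occur in any bag of the subtree rooted at $j$, so $X_j$ separates every $x\in T(X_j)\setminus X_j$ from $v$ (and if $x\in X_j$ the claim is immediate). Hence some $v_\ell\in X_j$ satisfies $d(x,v_\ell)\le d(x,v)-1$, which gives $d(x,v_\ell)+\vec{r}_\ell\le d(x,v)\le d(x,v)+\vec{r}_k=c_x$, i.e.\ $M^-_x\le c_x$, and likewise for $y$. This is precisely the contradiction the paper derives (``$d(x,v)<d(x,v_\ell)$ for all $\ell\le k-1$ contradicts the fact that $\{v_1,\dots,v_{k-1}\}$ separates $x$ from $v$''). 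Your appeal to $v$ being ``simplicial-type'' gestures at this but is not the clique bound you then invoke; as written, the argument has a gap at exactly the step you yourself identify as the main obstacle. A useful sanity check: if $x$ were adjacent to $v$ but at distance $2$ from every vertex of $X_j$, the clique inequality would hold yet the $k$-th coordinate would strictly lower the minimum — the lemma is saved only because the separator makes this configuration impossible.
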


\begin{proof}
Let $\vec{r_1}=\vec{Tr_{X_i}}(x)$ and $\vec{r_2}=\vec{Tr_{X_i}}(y)$. Note that the set ${X_i\setminus \{v\}}$ separates $v$ from $x$ and $y$ so $\vec{(r_1)}_k=\vec{(r_2)}_k=1$.

Assume first that $\vec{r}$ resolves $(x,y)$ and by contradiction that $\vec{r^-}$ does not resolve $(x,y)$. Since $\vec{r}$ resolves $(x,y)$,
$\min_{1 \leq l \leq k} \vec{(r_1+r)}_\ell \neq \min_{1 \leq \ell \leq k} \vec{(r_2+r)}_\ell$ and $\min_{1 \leq \ell \leq k-1} \vec{(r_1+r)}_\ell = \min_{1 \leq \ell \leq k-1} \vec{(r_2+r)}_\ell$ by Definition~\ref{def_resolve_vec}.
So the minimum change in at least one case. Assume by symmetry that $\min_{1 \leq \ell \leq k} \vec{(r_1+r)}_\ell \neq \min_{1 \leq \ell \leq k-1} \vec{(r_1+r)}_\ell$. So for $\ell<k$ we have $\vec{(r_1+r)}_\ell > \vec{(r_1+r)}_k$. Since $(\vec{r_1)}_k=1$, it implies that $d(x,v)<d(x,v_j)$  for all $
\ell \leq k-1$. A contradiction since $\{v_1, \ldots, v_{k-1}\}$ separates $x$ from $v$.

Assume now that $\vec{r^-}$ resolves $(x,y)$ and by contradiction that $\vec{r}$ does not resolve $(x,y)$. Then by symmetry we can assume that $\min_{1 \leq j \leq k} \vec{(r_1+r)}_j \neq \min_{1 \leq j \leq k-1} \vec{(r_1+r)}_j$ meaning $ \vec{(r_1+r)}_k < \min_{1 \leq j \leq k-1} \vec{(r_1+r)}_j$. Since $(\vec{r_1)}_k=1$, $\vec{(r_1+r)}_k \geq 1$ and $\vec{(r_1+r)}_j =2$ for $1 \leq j \leq k-1$. So $\vec{r_1}=(1,\ldots1)$ which contradicts the fact that ${X_i\setminus \{v\}}$ separates $v$ from $x$.\qed
\end{proof}

\begin{lemme} \label{join_node1}
Let $I_1$ be a compatible instance of type $1$ and $S$ be a solution of $I_1$, then $S$ is a solution of $I$.
\end{lemme}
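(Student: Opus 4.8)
The plan is to verify directly that $S$ satisfies the four defining conditions \textbf{(S1)}--\textbf{(S4)} of Definition~\ref{def_instance} for the instance $I$, using that $S\subseteq T(X_j)\subseteq T(X_i)$ and that, in type~$1$, $v\notin S$. The geometric fact I will lean on throughout is that the clique $X_j$ separates every vertex of $T(X_j)$ from $v$, so for every $s\in T(X_j)$ one has $d(s,v)=d(s,X_j)+1$; hence $d(s,X_i)=d(s,X_j)$ and the trace and distance vectors merely gain a last coordinate, namely $\vec{Tr_{X_i}}(s)=\vec{Tr_{X_j}}(s)\,|\,1$ and $\distvec{X_i}{s}=\distvec{X_j}{s}\,|\,(d(s,X_j)+1)$. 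The same restriction relation holds for the first $k-1$ coordinates of any vertex, internal or external.

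Conditions \textbf{(S4)}, \textbf{(S2)} and \textbf{(S3)} are then short. For \textbf{(S4)}, since $v\notin S$ we get $S\cap X_i=S\cap X_j=S_{I_1}=S_I$ by \textbf{(I1)}. For \textbf{(S2)}, take $\vec{r}\in\int{I}$; by \textbf{(I3)} we have $\vec{r}_k=1$ and $\vec{r^-}\in\int{I_1}$, so the solution $S$ of $I_1$ contains some $s$ with $\vec{Tr_{X_j}}(s)=\vec{r^-}$, and the trace--extension identity above yields $\vec{Tr_{X_i}}(s)=\vec{r^-}\,|\,1=\vec{r}$. For \textbf{(S3)}, take $(\vec{r},\vec{t})\in\paire{I}$ and vertices $x\in T(X_i)$, $y\notin T(X_i)$ realizing $\vec{r},\vec{t}$; restricting the last coordinate and invoking \textbf{(I4)} to place $(\vec{r^-},\vec{t^-})\in\paire{I_1}$, condition \textbf{(S3)} for $I_1$ resolves the pair by $S$, after checking that the distance-$\le 2$ bookkeeping survives the restriction.

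The heart of the proof is \textbf{(S1)}: every pair of $T(X_i)=T(X_j)\cup\{v\}$ must be resolved by $S$ or by a vector of $\ext{I}$. I split into two families. First, for a pair $(x,y)$ with both endpoints in $T(X_j)$, the solution $S$ of $I_1$ resolves it either by a vertex of $S$ (immediate) or by a vector $\vec{r'}\in\ext{I_1}$; in the latter case Lemma~\ref{resolved_outside} shows that the last coordinate is irrelevant for pairs inside $T(X_j)$, so together with \textbf{(I2)}, which ties $\ext{I}$ to $\ext{I_1}$, the resolution transfers to a vector of $\ext{I}$. Second, for a pair $(x,v)$ with $x\in T(X_j)$ I use that $d(v,X_i)=0$, so $|d(x,X_i)-d(v,X_i)|=d(x,X_i)$. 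When $d(x,X_i)\ge 2$ the pair is far across the clique separator $X_i$ and, exactly as in Lemma~\ref{resolve_far}, is resolved by any external witness: a vector of $\ext{I}$, or a vertex of $S_I\subseteq S$, one of which exists because the instance guarantees $\ext{I}\neq\emptyset$ or $S_I\neq\emptyset$. When $d(x,X_i)\le 2$, the pair $(x,v)$ is a close cross-separator pair of the child instance, with $v$ playing the external role it had at node $j$; condition \textbf{(I5)}, which forces exactly these pairs into $\paire{I_1}$, then lets \textbf{(S3)} for $I_1$ resolve the pair by $S$.

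The \textbf{main obstacle} I anticipate is the bookkeeping in \textbf{(S1)}: matching the external-vector resolution of $I_1$ to that of $I$ through Lemma~\ref{resolved_outside} and \textbf{(I2)}, and correctly treating the newly created pairs $(x,v)$. The delicate points are making the case split at $d(x,X_i)=2$ watertight, and checking that the distance vector of $v$ on $X_j$ (equivalently its trace) is precisely the one singled out by condition \textbf{(I5)}, so that these pairs are genuinely covered. The remaining conditions reduce to the trace--extension identity and are routine.
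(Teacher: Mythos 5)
Your proposal is correct and follows essentially the same route as the paper: verify \textbf{(S1)}--\textbf{(S4)} of Definition~\ref{def_instance} directly, transferring external resolution through Lemma~\ref{resolved_outside} and \textbf{(I2)}, traces through the identity $\vec{Tr_{X_i}}(s)=\vec{Tr_{X_j}}(s)|1$ for \textbf{(S2)}, and using \textbf{(I4)} for \textbf{(S3)} and \textbf{(I5)} for the new pairs involving $v$. The one place you diverge is in \textbf{(S1)} for pairs $(x,v)$: the paper handles all of them by \textbf{(I5)} together with \textbf{(S3)} of $I_1$, whereas you split off the case $d(x,X_j)\geq 3$ and resolve it by a Lemma~\ref{resolve_far}-type argument using $\ext{I}\neq\emptyset$ or $S_I\neq\emptyset$; since \textbf{(S3)} of $I_1$ only applies when both vertices are at distance at most $2$ from $X_j$, this extra case is genuinely needed and your treatment is the more complete of the two.
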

\begin{proof}
Let us prove that the conditions of Definition~\ref{def_instance} are satisfied.

\noindent \textbf{(S1)} Let $(x,y)$ be a pair of vertices of $T(X_i)$. First assume that  $x \neq v$ and $y \neq v$. If the pair $(x,y)$ is not resolved by a vertex of $S$, since $S$ is a solution for $I_1$, $(x,y)$ is resolved by a vector of $\ext I$. Let $\vec{r} \in \ext I$ resolving the pair $(x,y)$. As $I_1$ is compatible of type $1$, $\vec{r^-} \in \ext{I_1}$. Then $\vec{r^-}$ resolves $(x,y)$ by Lemma~\ref{resolved_outside}. So we can assume that $x=v$. The pair $(x,y)$ is also resolved by $S$ since $(\distvec{X_j}{x_1} ,(0,\ldots,0)) \in \paire{I_1}$. As $S$ is a solution for $I_1$, there is a vertex $s \in S$ that resolves the pair $(x,y)$.

\noindent \textbf{(S2)} Let $\vec{r} \in \int {I}$. Since $I_1$ is compatible with $I$, there exists $\vec{r_1}\in \int{I_1}$ such that $\vec{r}=\vec{r_1|1}$. Let $s \in S$ such that $\vec{Tr_{X_j}}(s)=\vec{r_1}$, then $\vec{Tr_{X_i}}(s)=\vec{r}$. Indeed, the vertex $v$ is not the closest vertex of $X_i$ from $s$. If $s \in X_j$, that is clear. Otherwise $X_j$ is a separator between $s$ and $v$, so the shortest path between $s$ and $v$ crosses a vertex of $X_j$. Thus, $\vec{Tr_{X_i}}(s)=\vec{r_1|1}$.

\noindent \textbf{(S3)} Let $(\vec{r},\vec{t}) \in \paire I$. Let $x \in V(T(X_i))$ such that $\distvec{X_i}{x}=\vec{r}$ and $y \notin T(X_i)$ such that $\distvec{X_i}{y}=\vec{t}$. Then  $\distvec{X_j}{x_1}=\vec{r^-}$ and $\distvec{X_j}{x_2}=\vec{t^-}$ so the pair $(x,v)$ is resolved by $S$ because $(\vec{r_1^-},\vec{r_2^-})$ belongs to $\paire{I_1}$.

\noindent \textbf{(S4)} As $S_I=S_{I_1}$ we have $S \cap X_i=S_I$.\qed
\end{proof}
\begin{lemme}\label{join_node2}
Let $I_2$ be a compatible instance of type $2$ and $S$ a solution of $I_2$, then $S'=S \cup \{v\}$ is a solution of~$I$.
\end{lemme}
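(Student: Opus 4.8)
The plan is to verify, one at a time, the four conditions \textbf{(S1)}--\textbf{(S4)} of Definition~\ref{def_instance} for the set $S'=S\cup\{v\}$, using that $S$ is a solution of $I_2$ and that $I_2$ is compatible with $I$ of type~$2$. Two structural facts drive the argument. First, since $v$ is introduced at $i$, it lies in no bag of the subtree rooted at $j$, so $\T{X_i}=\T{X_j}\cup\{v\}$ with $v\notin\T{X_j}$, and $X_j$ is a clique separator of $G$ that separates $v$ from $\T{X_j}\setminus X_j$; in particular, for every $x\in\T{X_j}$ any shortest path to $v$ meets $X_j$, so $d(x,X_i)=d(x,X_j)$ and $v$ is never the unique closest vertex of $X_i$ to $x$. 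Second, since $X_i$ is a clique, $\vec{Tr_{X_i}}(v)=(1,\ldots,1,0)$ and $\vec{Tr_{X_j}}(v)=(0,\ldots,0)$. With these in hand I would first dispatch the routine conditions. For \textbf{(S4)}, since $S\subseteq\T{X_j}$ avoids $v$, we get $S'\cap X_i=(S\cap X_j)\cup\{v\}=S_{I_2}\cup\{v\}=S_I$ by \textbf{(I1)}. For \textbf{(S2)}, take $\vec r\in\int I$; condition \textbf{(I3)} gives two cases: if $\vec r=(1,\ldots,1,0)$ then $v\in S'$ already witnesses it because $\vec{Tr_{X_i}}(v)=(1,\ldots,1,0)=\vec r$; otherwise $\vec r_k=1$ and $\vec{r^-}\in\int{I_2}$, so $S$ contains a vertex $s$ with $\vec{Tr_{X_j}}(s)=\vec{r^-}$, and since $v$ is not closest to $s$ the trace lifts to $\vec{Tr_{X_i}}(s)=\vec r$, exactly as in the proof of Lemma~\ref{join_node1}. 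For \textbf{(S3)}, take $(\vec r,\vec t)\in\paire I$ with associated $x\in\T{X_i}$, $y\notin\T{X_i}$ at distance at most $2$ from $X_i$; if $x=v$ then $v\in S'$ resolves $(v,y)$ trivially, and otherwise $x\in\T{X_j}$, $y\notin\T{X_j}$, condition \textbf{(I4)} gives $(\vec{r^-},\vec{t^-})\in\paire{I_2}$, and since $\distvec{X_j}{x}=\vec{r^-}$, $\distvec{X_j}{y}=\vec{t^-}$ with $d(x,X_j)=d(x,X_i)\le 2$, condition \textbf{(S3)} for $I_2$ resolves $(x,y)$ by a vertex of $S\subseteq S'$.

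The substantive condition is \textbf{(S1)}. Let $(x,y)$ be a pair in $\T{X_i}$. If $v\in\{x,y\}$ then $v\in S'$ resolves the pair immediately, since $d(v,v)=0$ differs from the nonzero distance of $v$ to the other endpoint. So assume $x,y\in\T{X_j}$. As $S$ solves $I_2$, condition \textbf{(S1)} for $I_2$ resolves $(x,y)$ either by a vertex of $S\subseteq S'$, in which case we are done, or by a vector $\vec r\in\ext{I_2}$. In the latter case I would split on $\vec r$: if $\vec r=(0,\ldots,0)$, then $\vec r$ resolves $(\distvec{X_j}{x},\distvec{X_j}{y})$, which forces $d(x,X_j)\neq d(y,X_j)$ and hence places at least one endpoint in $\T{X_j}\setminus X_j$, so $X_j$ genuinely separates $v$ from the pair; since $\vec{Tr_{X_j}}(v)=(0,\ldots,0)$, the trace--resolution lemma (the unlabelled lemma following Definition~\ref{def_resolve_vec}) shows that $v\in S'$ resolves $(x,y)$. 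If instead $\vec r\neq(0,\ldots,0)$, then by compatibility $\vec r=\vec{s^-}$ for a vector $\vec s\in\ext I$, and Lemma~\ref{resolved_outside} transfers resolution across the appended coordinate: since $\vec{s^-}$ resolves the pair of vertices $(x,y)$ of $\T{X_j}$, so does $\vec s$, whence $(x,y)$ is resolved by a vector of $\ext I$. This exhausts all cases, establishing that $S'$ is a solution of $I$ and hence $\dim(I)\le\dim(I_2)+1$.

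The main obstacle is precisely the external-vector subcase of \textbf{(S1)}: one must cleanly split a resolution coming from $\ext{I_2}$ into the \emph{genuinely external} part (a vector inherited from $\ext I$, pushed back up via Lemma~\ref{resolved_outside} on the new coordinate) and the part contributed by the \emph{newly promoted} vertex $v$ (the zero trace, which must now be realised by an actual vertex of $S'$ rather than a phantom external vector). The delicate point is the directionality linking $\ext{I_2}$ to $\ext I$ in condition \textbf{(I2)}: making the transfer rigorous relies on reading compatibility so that every nonzero vector of $\ext{I_2}$ is the restriction of a vector of $\ext I$, and on the trace--resolution lemma for the clique separator $X_j$ to convert the zero trace into a resolution by $v$. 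Everything else is bookkeeping that mirrors the type-$1$ argument of Lemma~\ref{join_node1}, now simplified by the fact that $v\in S'$ resolves on sight every pair in which it participates.
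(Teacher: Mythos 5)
Your proof is correct and follows essentially the same route as the paper: verifying \textbf{(S1)}--\textbf{(S4)} from the compatibility conditions, lifting external vectors through Lemma~\ref{resolved_outside}, and letting $v\in S'$ absorb every pair in which $v$ participates. In fact your explicit handling of the subcase $\vec{r}=(0,\ldots,0)\in\ext{I_2}$ in \textbf{(S1)} --- converting that phantom external vector into a resolution by the actual vertex $v$ via the trace--resolution lemma --- covers a case the paper's own proof silently skips, so your version is, if anything, slightly more complete.
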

\begin{proof}

Let us prove that the conditions of Definition~\ref{def_instance} are satisfied.

\noindent \textbf{(S1)} Let $(x,y)$ be a pair of vertices of $T(X_i)$ with $x \neq v$ and $y \neq v$. If the pair $(x,y)$ is not resolved by a vertex of $S$, since $S$ is a solution for $I_2$, $(x,y)$ is resolved by a vector $\vec{r} \in \ext {I_2}$. By compatibility there exists $\vec{r'} \in \ext I$ such that $\vec{r'^-}=\vec{r}$. By Lemma~\ref{resolved_outside}, $\vec{r'}$ resolves the pair $(x,y)$. Ultimately, if $v=x$ or $v=y$, the pair $(x,y)$ is also resolved by $S'$ as $v \in S'$.

\noindent \textbf{(S2)} Let $\vec{r} \in \int {I}$. If $\vec{r}=(1,\ldots,1,0)$, as $\vec{Tr_{X_i}}(v)=\vec{r}$, there is a vertex in $S'$ with trace $\vec{r}$. Otherwise, as $I_1$ is compatible, there exists $\vec{r_1}\in \int{I_2}$ such that $\vec{r}=\vec{r_1|1}$. Let $s \in S$ such that $\vec{Tr_{X_j}}(s)=\vec{r_1}$, then $\vec{Tr_{X_i}}(s)=\vec{r}$. Indeed, the vertex $v$ is not the closest vertex of $X_i$ from $s$. If $s \in X_j$, that's clear. Otherwise $X_j$ is a separator between $s$ and $v$, so the shortest path between $s$ and $v$ crosses a vertex of $X_j$. Thus, $\vec{Tr_{X_i}}(s)=\vec{r_1|1}$.

\noindent \textbf{(S3)} Let $(\vec{r},\vec{t}) \in \paire I$. Let $x \in V(T(X_i))$ such that $\distvec{X_i}{x}=\vec{r}$ and $y \notin T(X_i)$ such that $\distvec{X_i}{y}=\vec{t}$. Then  $\distvec{X_j}{x}=\vec{r^-}$ and $\distvec{X_j}{y}=\vec{t^-}$ so the pair $(x,y)$ is resolved by $S$ because $(\vec{r^-},\vec{t^-})$ belongs to $\paire{I_1}$.

\noindent \textbf{(S4)} As $S_I=S_{I_1} \cup \{v\}$ we have $S \cap X_i=S_{I_1} \cup \{v\}=S_I$.\qed
\end{proof}

\begin{lemme}\label{join_node3}
Let $I$ be an instance for an introduce node $i$. 
Let $\mathcal{F}_{1}(I)$ be the set of instances $I_1$ for $i_1$ compatible with $I$ of type 1 and $\mathcal{F}_{2}(I)$ be the set of instances $I_2$ for $i_1$ compatible with $I$ of type~2. Then,
\[\dim(I) \leq \min\;\{ \min_{I_1 \in \mathcal{F}_{1}(I)}\;\{\dim(I_1)\},\min_{I_2 \in \mathcal{F}_{2}(I)}\;\{\dim(I_2)+1\}\}.\]
\end{lemme}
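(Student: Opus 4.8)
The plan is to read both halves of the bound directly off Lemma~\ref{join_node1} and Lemma~\ref{join_node2}, which already show that a solution on the child transfers upward. If both minima on the right-hand side equal $+\infty$ (for instance if some family $\mathcal{F}_1(I)$ or $\mathcal{F}_2(I)$ is empty, or every compatible instance has infinite dimension), the inequality is trivially true, so I may assume at least one minimum is finite and treat the two families separately.

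First I would handle the type~$1$ contribution. Fix any $I_1 \in \mathcal{F}_1(I)$ with $\dim(I_1)$ finite and let $S$ be a solution of $I_1$ of minimum size, so that $|S| = \dim(I_1)$. By the definition of a solution, $S \subseteq V(T(X_j))$, where $j$ is the child of $i$; since $T(X_j) \subseteq T(X_i)$, the set $S$ is an admissible candidate solution for $I$. Lemma~\ref{join_node1} asserts that $S$ is in fact a solution of $I$, hence $\dim(I) \leq |S| = \dim(I_1)$. As this holds for every such $I_1$, taking the minimum gives $\dim(I) \leq \min_{I_1 \in \mathcal{F}_1(I)} \dim(I_1)$.

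Next I would handle the type~$2$ contribution, which carries the only genuine subtlety. Fix $I_2 \in \mathcal{F}_2(I)$ with $\dim(I_2)$ finite and let $S$ be a minimum solution, so $|S| = \dim(I_2)$. By Lemma~\ref{join_node2}, $S' = S \cup \{v\}$ is a solution of $I$, whence $\dim(I) \leq |S'|$. The point to verify is that $|S'| = |S| + 1$, that is, that $v \notin S$: since $i$ is an introduce node with $X_i = X_j \cup \{v\}$ and $v \notin X_j$, the connectivity axiom of the tree-decomposition forces $v$ to appear in no bag of the subtree rooted at $j$, so $v \notin V(T(X_j)) \supseteq S$. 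Consequently $\dim(I) \leq |S| + 1 = \dim(I_2) + 1$, and minimizing over $\mathcal{F}_2(I)$ yields $\dim(I) \leq \min_{I_2 \in \mathcal{F}_2(I)} (\dim(I_2) + 1)$. Combining the two bounds gives the claimed inequality. The main (and essentially the only) obstacle is the bookkeeping check that the newly added vertex $v$ was not already counted in $S$, which is exactly why $v$ being the freshly introduced vertex is used.
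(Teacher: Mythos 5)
Your proposal is correct and follows exactly the paper's route: both halves of the bound are read off Lemma~\ref{join_node1} and Lemma~\ref{join_node2} respectively, and the minimum is taken over each family of compatible instances. Your extra check that $v \notin S$ (so that $|S \cup \{v\}| = |S|+1$), justified by the connectivity axiom of the tree-decomposition, is a detail the paper leaves implicit but is a valid and welcome addition.
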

\begin{proof}
The proof directly follows from the fact that, for any instance $I_1$ for $j$ compatible with $I$, we can get a solution of $I$ of size $\dim(I_1)$ if $I_1 \in \mathcal{F}_1$ by Lemma~\ref{join_node1} and of size $\dim(I_1)+1$ if $I_1 \in \mathcal{F}_2$ by Lemma~\ref{join_node2}. \qed
\end{proof}
\begin{lemme}\label{join_node4}
Let $S$ be a solution for $I$  with $v \notin S$. Then there exists $I_1 \in \mathcal{F}_1$ such that $S$ is a solution of $I_1$.
\end{lemme}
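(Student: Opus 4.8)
The plan is to build the child instance $I_1$ for $j$ by \emph{projecting away} the coordinate corresponding to $v$ from every vector attached to $I$, exploiting that $v \notin S$ forces $S \subseteq \T{X_j}$, so that $S$ is already a legitimate candidate solution for the node $j$. Concretely, I would set $S_{I_1}=S_I$, $\ext{I_1}=\{\vec{r^-} : \vec{r}\in\ext{I}\}$, $\int{I_1}=\{\vec{r^-} : \vec{r}\in\int{I}\}$, and take $\paire{I_1}$ to be $\{(\vec{r^-},\vec{t^-}) : (\vec{r},\vec{t})\in\paire{I}\}$ augmented with exactly the pairs demanded by \textbf{(I5)} (those whose external component corresponds to $v$), recording only pairs actually resolved by $S$. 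Since $I$ is an instance, $\ext{I}\neq\emptyset$ or $S_I\neq\emptyset$, and the projection preserves this, so $I_1$ is a valid instance. It then remains to check that $I_1\in\mathcal{F}_1(I)$ and that $S$ is a solution of $I_1$.

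For compatibility and for conditions \textbf{(S2)}, \textbf{(S4)}, the key local observation is that $X_j$ is a clique separating every vertex of $\T{X_j}$ from $v$. Hence for any $s\in\T{X_j}$ the shortest path from $s$ to $v$ meets $X_j$, so $d(s,X_i)=d(s,X_j)$ and the last coordinate of $\vec{Tr_{X_i}}(s)$ equals $1$; dropping it recovers exactly $\vec{Tr_{X_j}}(s)$. As $v\notin S$, every $\vec{r}\in\int{I}$ is the trace of some $s\in S\subseteq\T{X_j}$, which gives $\vec{r}_k=1$ and $\vec{r^-}\in\int{I_1}$ as required by \textbf{(I3)}, and simultaneously shows $S$ satisfies \textbf{(S2)} for $I_1$. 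Likewise $S\cap X_j=S\cap X_i=S_I$ yields \textbf{(I1)} and \textbf{(S4)}.

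For \textbf{(S1)} and \textbf{(I2)} I would invoke Lemma~\ref{resolved_outside}: a pair of $\T{X_j}$ is also a pair of $\T{X_i}$, hence it is resolved by a vertex of $S$ (done) or by some $\vec{r}\in\ext{I}$, in which case $\vec{r^-}\in\ext{I_1}$ resolves it as well. For \textbf{(S3)} and \textbf{(I4)}, distance vectors project by deleting the $v$-coordinate, and any pair recorded in $\paire{I}$ is resolved by $S$ since $S$ is a solution of $I$; this transfers verbatim to the projected pair of $\paire{I_1}$.

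The delicate step, and the one I expect to be the main obstacle, concerns the vertex $v$ itself, which in $I_1$ is reclassified as external to $\T{X_j}$. The pairs $(x,v)$ with $x\in\T{X_j}$ were ordinary pairs of $\T{X_i}$ in $I$, but in $I_1$ they become internal–external pairs governed only by \textbf{(S3)} through $\paire{I_1}$, and \textbf{(I5)} forces the relevant ones to be recorded there. The crux is thus to argue that each such pair $(x,v)$ is actually resolved by a vertex of $S$ rather than merely by a vector of $\ext{I}$; making this rigorous requires a careful analysis of $\distvec{X_i}{v}=(1,\dots,1,0)$ against the binary vectors of $\ext{I}$, using crucially that $v$ lies in the separator clique $X_i$. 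I would expect to spend most of the effort here, leaning on Lemma~\ref{resolved_outside} and the exact bookkeeping of which coordinate is dropped to close the case, after which the remaining conditions reduce to the routine projection arguments above.
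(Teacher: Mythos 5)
Your construction of $I_1$ and the verification of \textbf{(I1)}--\textbf{(I4)}, \textbf{(S1)}, \textbf{(S2)}, \textbf{(S4)} follow the same idea as the paper's proof: project away the coordinate of $v$ from every vector of $I$, use Lemma~\ref{resolved_outside} for vectors of $\ext{I}$, and use the fact that $X_j$ is a clique separating $\T{X_j}$ from $v$ to control the traces. That part is sound, and you are in fact more explicit than the paper about why every $\vec{r}\in\int{I}$ has last coordinate $1$, as \textbf{(I3)} requires.

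The problem is the step you yourself call the crux: you never carry it out, and your construction does not even produce a well-formed candidate for it. Condition \textbf{(I5)} forces \emph{every} pair $(\vec{r},\vec{t})$ whose second component corresponds to $v$ into $\paire{I_1}$, for \emph{every} type-$1$ compatible instance; you do not have the freedom to "record only pairs actually resolved by $S$". So either every pair $(x,v)$ with $x\in \T{X_j}$ and $d(x,X_j)\le 2$ is resolved by a vertex of $S$ --- in which case $S$ satisfies \textbf{(S3)} for these mandatory pairs and the lemma follows --- or some such pair is resolved only by a vector of $\ext{I}$, in which case no $I_1\in\mathcal{F}_1$ admits $S$ as a solution and the statement fails for this $S$. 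Deciding which alternative holds is precisely the content of the lemma as far as $v$ is concerned, and your proposal replaces the argument with "I would expect to spend most of the effort here". (The paper's own proof is also silent on this point: its $\paire{I_1}$ omits the \textbf{(I5)} pairs and compatibility is asserted to be "easy", so you have correctly located the soft spot --- but locating it is not closing it.) To complete the proof you would need either to show that a binary vector $\vec{r}\in\ext{I}$ resolving $(\distvec{X_i}{x},(1,\dots,1,0))$ forces some vertex of $S$ to resolve $(x,v)$ as well, or to argue that \textbf{(I5)} can be weakened to tolerate externally resolved pairs without breaking Lemma~\ref{join_node1}.
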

\begin{proof}

Let $I_1$ be the instance defined by $S_{I_1}=S_I$, 
$\int{I_1}= (\cup_{\vec{r} \in \int{I}}\vec{r^-})$,
$\ext{I_1}=(\cup_{\vec{r} \in \ext{I}}\vec{r^-})$ and 
$\paire{I_1}= \cup_{(\vec{r},\vec{t}) \in \paire{I}} (\vec{r^-},\vec{t^-}))$.
One can easily remark that $I_1$ is compatible with $I$ of type $1$.

We prove that $S$ is a solution of $I_1$.

\noindent \textbf{(S1)} Let $(x,y)$ be a pair of vertices of $T(X_j)$. As $S$ is a solution for $I$, either there exists $s \in S$ that resolves the pair $(x,y)$, or there is a vector $\vec{r} \in \ext I$ that resolves $(x,y)$. In the second case, by construction of $I_1$, the vector $\vec{r^-}$ belongs to $\ext {I_1}$ and resolves $(x,y)$ by Lemma~\ref{resolved_outside}. So the pair $(x,y)$ is resolved in both cases. 

\noindent \textbf{(S2)} Let $\vec{r} \in \int {I_1}$. By construction, there is $\vec{r'} \in \int I$ such that $\vec{r'^-}=\vec{r}$. Let $s \in S$ such that $\vec{Tr_{X_i}}(s)=\vec{r'}$, then $\vec{Tr_{X_j}}(s)=\vec{r}$.

\noindent \textbf{(S3)} Let $(\vec{r},\vec{t}) \in \paire {I_1}$, $x \in \T {X_i}$ with $\distvec{X_j}{x}= \vec{r}$ and  $y \notin \T {X_j}$ with $\distvec{X_j}{y}= \vec{t}$ with $d(x,X_j) \leq 2$ and $d(y,X_j) \leq 2$. 
Let $(\vec{r'},\vec{t'}) \in \paire{I}$ such that $(\vec{r},\vec{t})=(\vec{r'^-},\vec{t'^-})$.
First $d(x,v)=d(x,X_j)+1$ because $X_j$ separates $x$ and $v$. This is true for any vertex of $X_j$ so the last component of $\vec{r'}$ is $d(x,X_j)+1$. So $\distvec{X_i}{x}= \vec{r'}$. If $\distvec{X_i}{y}= \vec{t'}$, then $(x,y)$ is resolved by a vertex of $s$. Otherwise, as $(\vec{r'},\vec{t'}) \in \paire{I}$, there exist a vertex $z \notin T(X_{j})$ such that $\distvec{X_i}{z}= \vec{t'}$ and $s$ is $S$ that resolves the pair $(x,z)$. Then $s$ resolves the pair $(x,y)$ because $d(s,y)=d(s,z)$.

\noindent \textbf{(S4)} We have $S_I=S_{I_1}$ and $v \notin S_I $ so $S \cap X_{j} =S_{I_1}$.

Finally, $S$ is a solution of $I_1$ so $\dim(I_1) \leq |S| \leq \dim(I)$. In particular $\dim(I) \geq  \min_{I_1 \in \mathcal{F}_1}\;\{\dim(I_1)\}$. \qed
\end{proof}
\begin{lemme}\label{join_node5}
Let $S$ be a solution for $I$ of minimal size with $v \in S$. Then there exists $I_2 \in \mathcal{F}_2$ such that $S \setminus \{v\}$ is a solution of $I_2$.
\end{lemme}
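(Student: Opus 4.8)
The plan is to mirror the construction of Lemma~\ref{join_node4}, adapted to the fact that $v$ now belongs to the solution. Since $S$ is a solution of $I$, condition \textbf{(S4)} gives $S \cap X_i = S_I$, and as $v \in S \cap X_i$ we have $v \in S_I$. Set $S' = S \setminus \{v\}$; because $v$ is introduced at $i$, every vertex of $S$ except $v$ lies in $T(X_j)$, so $S' \subseteq T(X_j)$. I would then define the candidate child instance $I_2$ for $j$ by restricting every vector of $I$ to its first $k-1$ coordinates, with the two twists dictated by the type-2 compatibility of Definition~\ref{compatible_introduce}:
\begin{itemize}
\item $S_{I_2} = S_I \setminus \{v\}$;
\item $\int{I_2} = \{\vec{r^-} : \vec{r} \in \int{I},\ \vec{r} \neq (1,\ldots,1,0)\}$;
\item $\ext{I_2} = \{\vec{r^-} : \vec{r} \in \ext{I}\} \cup \{(0,\ldots,0)\}$;
\item $\paire{I_2} = \{(\vec{r^-},\vec{t^-}) : (\vec{r},\vec{t}) \in \paire{I}\}$.
\end{itemize}
The one genuinely new ingredient compared with Lemma~\ref{join_node4} is the extra vector $(0,\ldots,0)$ forced into $\ext{I_2}$. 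Its role is to record the resolving power of $v$: seen from node $j$, the vertex $v$ lies outside $T(X_j)$, is adjacent to all of $X_j$, and therefore satisfies $\vec{Tr_{X_j}}(v) = (0,\ldots,0)$. This addition also makes $\ext{I_2}$ non-empty, so the well-formedness requirement ``$\ext{I_2} \neq \emptyset$ or $S_{I_2} \neq \emptyset$'' comes for free.

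First I would check that $I_2 \in \mathcal{F}_2$, i.e. that $I_2$ is compatible with $I$ of type 2. Conditions \textbf{(I1)} (using $v \in S_I$), \textbf{(I2)} and \textbf{(I4)} are immediate from the construction, and \textbf{(I5)} is not required for type 2. For \textbf{(I3)} I would use the structural fact already exploited in Lemma~\ref{join_node2}, namely that $X_j$ separates every vertex of $T(X_j)$ from $v$: this forces the last trace coordinate of any vertex of $T(X_j)$ towards $v$ to equal $1$, so $v$ is the \emph{unique} vertex of $T(X_i)$ whose trace is $(1,\ldots,1,0)$. Hence every $\vec{r} \in \int{I}$ distinct from $(1,\ldots,1,0)$ has $\vec{r}_k = 1$ and $\vec{r^-} \in \int{I_2}$, which is exactly \textbf{(I3)}.

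Next I would verify that $S'$ is a solution of $I_2$, checking \textbf{(S1)}--\textbf{(S4)}. Conditions \textbf{(S2)} and \textbf{(S4)} follow from the uniqueness fact above: any $\vec{r^-} \in \int{I_2}$ is realised by some $s \in S$ with trace $\vec{r} \neq (1,\ldots,1,0)$, hence $s \neq v$, hence $s \in S'$, and restricting the trace to $X_j$ gives $\vec{Tr_{X_j}}(s) = \vec{r^-}$; moreover $S' \cap X_j = S_I \setminus \{v\} = S_{I_2}$. The interesting condition is \textbf{(S1)}: a pair $(x,y)$ of $T(X_j)$ is resolved by $S$ either through a vector of $\ext{I}$ (then $\vec{r^-} \in \ext{I_2}$ resolves it by Lemma~\ref{resolved_outside}), through a vertex $s \neq v$ (then $s \in S'$), or \emph{only} through $v$ --- and this last case is precisely where the vector $(0,\ldots,0) \in \ext{I_2}$ is used, since $v$ resolving $(x,y)$ is equivalent, via $X_j$ separating $v$ from $T(X_j)$, to $\vec{Tr_{X_j}}(v) = (0,\ldots,0)$ resolving $(\distvec{X_j}{x},\distvec{X_j}{y})$.

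The main obstacle is condition \textbf{(S3)}, because there the outside witness ranges over all $y \notin T(X_j)$, which now includes $v$ itself (as $\distvec{X_j}{v} = (1,\ldots,1)$), and the resolver produced by \textbf{(S3)} of $I$ may a priori be $v$, which is not in $S'$. I would attack this with the representative-vertex argument of Lemma~\ref{join_node4}: lifting $(\vec{r},\vec{t}) \in \paire{I_2}$ to $(\vec{r'},\vec{t'}) \in \paire{I}$, one shows $\distvec{X_i}{x} = \vec{r'}$, picks a representative $z \notin T(X_i)$ with $\distvec{X_i}{z} = \vec{t'}$, and uses that a resolver $s \in T(X_j)$ of $(x,z)$ also resolves $(x,y)$ because $\distvec{X_j}{y} = \distvec{X_j}{z}$ forces $d(s,y) = d(s,z)$. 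The delicate point --- and the crux of the lemma --- is to argue that $v$ is never indispensable as this resolver: the cross-pairs that $v$ alone separates are exactly those re-resolved when $v$ is re-inserted into the solution at the introduce node, which is the defining feature of a type-2 combination, so they impose no real constraint on $S'$. Once \textbf{(S3)} is settled, $S'$ is a solution of $I_2$, and since $S$ is a minimal solution this yields $\dim(I_2) \leq |S'| = \dim(I)-1$, the inequality complementing Lemma~\ref{join_node3}.
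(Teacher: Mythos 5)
Your construction of $I_2$ is essentially the paper's: the same $S_{I_2}$, the same $\ext{I_2}$ augmented with $(0,\ldots,0)$ to record the trace of $v$ on $X_j$, and the same $\paire{I_2}$; your refinement of $\int{I_2}$ (dropping the preimage of $(1,\ldots,1,0)$, which is the trace of $v$ itself and is exactly the exception allowed by \textbf{(I3)} for type $2$) is sound and in fact cleaner than the paper's blanket union, since $(1,\ldots,1)$ is not a legal trace. The compatibility check and the verifications of \textbf{(S1)}, \textbf{(S2)} and \textbf{(S4)} match the paper's argument, including the use of $(0,\ldots,0)\in\ext{I_2}$ for pairs of $\T{X_j}$ that only $v$ resolves and of Lemma~\ref{resolved_outside} for vectors coming from $\ext{I}$.

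The gap is exactly where you locate "the crux", in \textbf{(S3)}, and your last sentence does not close it. The representative argument (replace $y$ by some $z\notin\T{X_i}$ with $\distvec{X_i}{z}=\vec{t'}$ and transfer a resolver $s$ of $(x,z)$ to $(x,y)$ via $d(s,y)=d(s,z)$) only works when $s\in\T{X_j}$, because only then does $X_j$ separate $s$ from both $y$ and $z$ and make $d(s,\cdot)$ a function of $\distvec{X_j}{\cdot}$. Condition \textbf{(S3)} of $I$ only guarantees a resolver $s\in S$, and when $\vec{r'}_k\neq\vec{t'}_k$ the vertex $v$ itself resolves every pair realizing $(\vec{r'},\vec{t'})$, so \textbf{(S3)} of $I$ then gives no information whatsoever about $S\setminus\{v\}$. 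Your justification --- that the pairs $v$ alone separates are "re-resolved when $v$ is re-inserted at the introduce node" and hence "impose no real constraint on $S'$" --- is circular: Definition~\ref{def_instance} requires every pair certified by $\paire{I_2}$ to be resolved by the solution set itself, there is no deferral mechanism, and \textbf{(I4)} forces every $(\vec{r'^-},\vec{t'^-})$ into $\paire{I_2}$, so you cannot escape by shrinking $\paire{I_2}$ either. What is needed is either a proof that a resolver in $S\setminus\{v\}$ always exists for these pairs, or an exemption in the type-$2$ compatibility condition for pairs resolved by $v$ (in the spirit of \textbf{(I5)} for type $1$). For what it is worth, the paper's own proof silently makes the same assumption $s\neq v$ at this step (it asserts $d(s,y)=d(s,z)$ without justification), so you have correctly isolated the one point that genuinely needs work --- but you have not supplied the missing argument.
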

\begin{proof}
    
Let $I_2$ be the instance where $S_{I_1}=S_I \setminus \{v\}$, 
$\int{I_2}= (\cup_{\vec{r} \in \int{I}}\vec{r^-}) $,
$\ext{I_2}=\{\cup_{\vec{r} \in \ext{I}}\vec{r^-}\} \cup \{(0,\ldots0)\}$ and 
$\paire{I_2}= \cup_{(\vec{r},\vec{t}) \in \paire{I}} (\vec{r^-},\vec{t^-})$.
One can easily remark that $I_2$ is compatible with $I$ of type $2$.

We prove that $S$ is a solution of $I_2$.
\noindent \textbf{(S1)} Let $(x,y)$ be a pair of vertices of $T(X_j)$. As $S$ is a solution for $I$, either there exists $s \in S$ that resolves the pair $(x,y)$ or there is a vector $\vec{r} \in \ext I$ that resolves $(x,y)$. If $(x,y)$ is resolved by a vertex of $S \setminus \{v\}$, then the pair $(x,y)$ is resolved in $I_2$. If $(x,y)$ is resolved by $v$, then the vector $(0,\ldots0)$ of $\ext{I_2}$ resolves the pair. If $(x,y)$ is resolved by a vector $\vec{r}$ of $\ext I$, then by Lemma~\ref{resolved_outside}. So $\vec{r^-}$ resolves the pair $(x,y)$ and $\vec{r^-} \in \ext {I_2}$ by construction.

\noindent \textbf{(S2)} Let $\vec{r} \in \int {I_1}$. By construction, there is $\vec{r'} \in \int I$ such that $\vec{r'^-}=\vec{r}$. Let $s \in S$ such that $\vec{Tr_{X_i}}(s)=\vec{r'}$, then $\vec{Tr_{X_j}}(s)=\vec{r}$.

\noindent \textbf{(S3)} Let $(\vec{r},\vec{t}) \in \paire {I_1}$, $x \in \T {X_i}$ with $\distvec{X_j}{x}= \vec{r}$ and  $y \notin \T {X_j}$ with $\distvec{X_j}{y}= \vec{t}$ with $d(x,X_j) \leq 2$ and $d(y,X_j) \leq 2$. 
Let $(\vec{r'},\vec{t'}) \in \paire{I}$ such that $(\vec{r},\vec{t})=(\vec{r'^-},\vec{t'^-})$.
First $d(x,v)=d(x,X_j)+1$ because $X_j$ separates $x$ and $v$. This is true for any vertex of $X_j$ so the last component of $\vec{r'}$ is $d(x,X_j)+1$. So $\distvec{X_i}{x}= \vec{r'}$. If $\distvec{X_i}{y}= \vec{t'}$, then $(x,y)$ is resolved by a vertex of $s$. Otherwise, as $(\vec{r'},\vec{t}) \in \paire{I}$, there exist a vertex $z \notin T(X_j)$ such that $\distvec{X_i}{y}= \vec{t'}$ and $s$ in $S$ that resolves the pair $(x,z)$. Then $s$ resolves the pair $(x,y)$ because $d(s,y)=d(s,z)$.

\noindent \textbf{(S4)} We have $S_I=S_{I_1}$ and $v \notin S_I $ so $S \cap X_{j} =S_{I_1}$.

Finally, $S \setminus \{v\}$ is a solution of $I_2$, thus $\dim(I_2) \leq |S-1| \leq \dim(I)$. In particular $\dim(I) \geq  \min_{I_2 \in \mathcal{F}_2}\;\{\dim(I_2)+1\}$.\qed
\end{proof}
Lemma~\ref{node_introduce_main} is a consequence of Lemmas~\ref{join_node3},~\ref{join_node4} and~\ref{join_node5}.

\subsubsection{Forget node}

We now consider an instance $I$ for an forget node $i$. Let $j$ be the child of $i$ and $v \in V$ be such that $X_j=X_i \cup \{v\}$. Let $X_j=\{v_1,\ldots,v_k\}$ with $v=v_k$.
The trees $T(X_i)$ and $T(X_j)$ contain the same vertices, the definition of compatibility gives conditions to have the same solution for $I$ and for an instance on the child node.

We introduce three functions on vectors representing how the trace of a vertex can be modified when one considers two separators that differ by one vertex.

\begin{defn}
Let $\vec{r}$ be any binary vector. We define the functions $f$, $f^-$ and $f^+$ which return a vector with one more component. The function $f^-$ is defined as $\vec{f^-(r)}= \vec{r | \min(r_i)}$ if $\vec{r}$ is not constant and $\vec{f^-(r)}= \vec{ r| (r_1-1)}$ if $\vec{r}$ is constant. We define $f^+$ as $\vec{f^+(r)}=  \vec{r| \max(r_i)}$ is $\vec{r}$ is not constant and $\vec{f^+(r)}= \vec{ r| (r_1+1)}$ if $\vec{r}$ is constant.  We define $f$ as $\vec{f(r)}=  \vec{r| \min(r_i)}$ is $\vec{r}$ is not constant and $\vec{f(r)}= \vec{ r| r_1}$ if $\vec{r}$ is constant.
\end{defn}

The function $f$ is introduced only to deal with the case of constant vector.
These functions are defined to deal with the following case. Let $X_i$ and $X_j$ be two bags such that $X_i=X_j \setminus \{v\}$ for some vertex $v$.  Let $x$ be any vertex, then $\distvec{X_j}{x}$ is equal to $\vec{f(\distvec{X_i}{x})}$, $\vec{f^+(\distvec{X_i}{x})}$ or $\vec{f^-(\distvec{X_i}{x})}$. Moreover, if $X_i$ separates $x$ and $v$, $\distvec{X_j}{x}=\vec{f^+(\distvec{X_i}{x})}$.

\begin{defn}\label{compatible_forget}
Let $I$ be an instance for a forget node $i$ and let $j$ be the child of $i$ and $v \in V$ such that $X_i=X_j \setminus \{v\}$. Let $X_j=\{v_1,\ldots,v_k\}$ with $v=v_k$. An instance $I_1$ for $j$ is compatible with $I$ if 
\begin{itemize}
    \item \textbf{(F1)} $S_I=S_{I_1} \setminus \{v\}$.
    \item \textbf{(F2)} For all $ \vec{r} \in \ext{I}$,  $\vec{r|1} \in \ext{I_1}$.
    \item \textbf{(F3)} For all $ \vec{r} \in \int{I}$, $\vec{r|0}\in \int{I_1}$ or $\vec{r|1}\in \int{I_1}$. 
    \item \textbf{(F4)} $\forall (\vec{r},\vec{t}) \in \paire {I}$, if there exist two vertices $x \in \T {X_i}$ with $\distvec{X_j}{x}= \vec{f^-(r)}$ (resp. $\vec{f(r})$, $(\vec{f^+(r)}$)  and $y \notin \T {X_i}$ with $\distvec{X_j}{y}= \vec{f^+(t)}$ with $d(x,X_j) \leq 2$ and $d(y,X_j) \leq 2$ then $(\vec{f^-(r)},\vec{f^+(t)})$ (resp.  
    $(\vec{f(r}),\vec{f^+(t)})$, $(\vec{f^+(r)},\vec{f^+(t)})$) belongs to $ \paire {I_1}$.

\end{itemize}
\end{defn}

\begin{lemme}\label{node_forget_main}
    
Let $I$ be an instance for a forget node $i$. 
Let $\mathcal{F}_{F}(I)$ be the set of instances $I_1$ for $j$ compatible with $I$. Then, \[\dim(I)= \min_{I_1 \in \mathcal{F}_{F}(I)}\;\{\dim(I_1)\}.\]
\end{lemme}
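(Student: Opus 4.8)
The plan is to prove the equality by establishing the two inequalities separately, following the same scheme used for the join and introduce nodes. The key observation driving the whole argument is that $T(X_i)$ and $T(X_j)$ contain exactly the same vertices, so a set $S$ is a candidate solution for $I$ if and only if it is a candidate solution for some compatible instance $I_1$ on the child; the only thing that changes is the dimension of the separator, and hence the way traces and distance vectors are recorded. The functions $f$, $f^+$ and $f^-$ are precisely the translation dictionary between distance vectors relative to $X_i$ and relative to $X_j$: for any vertex $x$, $\distvec{X_j}{x}$ equals one of $\vec{f(\distvec{X_i}{x})}$, $\vec{f^+(\distvec{X_i}{x})}$, $\vec{f^-(\distvec{X_i}{x})}$, with the $f^+$ case forced exactly when $X_i$ separates $x$ from $v$.

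For the inequality $\dim(I) \leq \min_{I_1 \in \mathcal{F}_F(I)} \dim(I_1)$, I would fix a compatible instance $I_1$ and a solution $S$ of $I_1$, and verify that $S$ is also a solution of $I$ by checking conditions \textbf{(S1)}--\textbf{(S4)} of Definition~\ref{def_instance}. Conditions \textbf{(S1)} and \textbf{(S4)} are immediate from \textbf{(F2)} (using that a vector $\vec{r}$ and $\vec{r|1}$ resolve the same pairs of vertices of $T(X_i)$, since appending a coordinate for the distant vertex $v$ does not affect the minimum) and \textbf{(F1)} (since $v \notin X_i$, the intersection with $X_i$ is unchanged). For \textbf{(S2)}, given $\vec{r} \in \int{I}$, condition \textbf{(F3)} guarantees that $\vec{r|0}$ or $\vec{r|1}$ lies in $\int{I_1}$, and the corresponding vertex $s \in S$ has the right trace on $X_i$ because its trace on $X_j$ restricts correctly. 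The delicate verification is \textbf{(S3)}: given a pair $(\vec{r},\vec{t}) \in \paire{I}$ and witnesses $x \in T(X_i)$, $y \notin T(X_i)$ with the prescribed distance vectors on $X_i$, I must identify which of $f^-$, $f$, $f^+$ applies to $\vec{r}$ and to $\vec{t}$, and then invoke condition \textbf{(F4)} on the appropriate translated pair to conclude that $S$ resolves $(x,y)$.

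For the reverse inequality, I would take a solution $S$ of $I$ of minimal size and build a compatible instance $I_1$ for which $S$ is a solution, mirroring the constructions in Lemmas~\ref{join_node4} and~\ref{join_node5}. Set $S_{I_1} = S_I \cup (S \cap \{v\})$, define $\int{I_1}$ and $\ext{I_1}$ by appending the coordinate dictated by whether $v$ is separated from the relevant vertex, and populate $\paire{I_1}$ with exactly the translated pairs $(\vec{f^\bullet(r)}, \vec{f^+(t)})$ that are genuinely resolved by $S$, as prescribed by \textbf{(F4)}. Verifying that this $I_1$ is compatible amounts to rereading Definition~\ref{compatible_forget}, and verifying that $S$ solves $I_1$ reuses the same trace-restriction arguments as above; since $\dim(I_1) \leq |S| = \dim(I)$, this gives $\dim(I) \geq \min_{I_1 \in \mathcal{F}_F(I)} \dim(I_1)$.

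I expect the main obstacle to be the careful bookkeeping in condition \textbf{(F4)}, specifically matching each vertex to the correct one of the three functions $f$, $f^+$, $f^-$. The subtlety is the behaviour on constant distance vectors, where $f$, $f^+$ and $f^-$ are defined separately to avoid the degenerate collision that would otherwise occur, together with the fact that for a vertex $x$ on the $T(X_i)$ side separated from $v$ the value is forced to be $f^+$, whereas for a vertex not separated from $v$ either $f$ or $f^-$ may occur. Getting these cases to line up so that a resolved pair on one side corresponds to a recorded pair on the other — in both directions — is where the real content lies; the remaining verifications are routine applications of the trace definition and the resolving-vector criterion of Definition~\ref{def_resolve_vec}.
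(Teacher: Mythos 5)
Your proposal follows essentially the same route as the paper: both inequalities are proved separately, the first by checking that a solution of a compatible child instance satisfies \textbf{(S1)}--\textbf{(S4)} for $I$ via the restriction fact that $\vec{r}$ and $\vec{r|1}$ resolve the same pairs (the paper's Lemma~\ref{resolved_outside2}), and the second by building the child instance from a minimal solution $S$ exactly as you describe ($S_{I_1}=S\cap X_j$, appended coordinates for $\int{I_1}$ and $\ext{I_1}$, and the $f$, $f^+$, $f^-$ translations for $\paire{I_1}$). You also correctly single out the \textbf{(F4)}/\textbf{(S3)} bookkeeping and the forced $f^+$ for vertices separated from $v$ as the only delicate point, which is precisely where the paper's proof spends its effort.
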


The end of this section is devoted to prove Lemma~\ref{node_forget_main} by proving both inequalities in a similar way than for join and introduce nodes.

We prove a technical lemma similar to Lemma~\ref{resolved_outside} with a similar proof.

\begin{lemme}\label{resolved_outside2}
Let $i$ be a forget node, $j$ be the child of $i$ and $v \in V$ such that $X_i=X_j \setminus \{v\}$.
Let $(x,y)$ be a pair of vertices of $ \T{X_j}$. Let $\vec{r}$ be a binary vector of size $k$ with $\vec{r}_k=1$. Then $\vec{r}$ resolves $(x,y)$ if and only if $\vec{r^-}$ resolves $(x,y)$.
\end{lemme}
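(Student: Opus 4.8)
The plan is to mirror the proof of Lemma~\ref{resolved_outside}, replacing the role of the \emph{last} coordinate (which was the introduced vertex $v$) by the same last coordinate (which is now the forgotten vertex $v$), exploiting the hypothesis $\vec{r}_k=1$. Let $\vec{r_1}=\vec{Tr_{X_j}}(x)$ and $\vec{r_2}=\vec{Tr_{X_j}}(y)$ be the traces of $x$ and $y$ on $X_j$, so that resolving the pair $(x,y)$ by a binary vector $\vec{s}$ amounts, by Definition~\ref{def_resolve_vec}, to comparing $\min_\ell \vec{(r_1+s)}_\ell$ with $\min_\ell \vec{(r_2+s)}_\ell$. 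The statement then becomes the purely combinatorial claim that, for these particular $\vec{r_1},\vec{r_2}$, taking the minimum over $\ell \in \{1,\dots,k\}$ versus over $\ell \in \{1,\dots,k-1\}$ yields the same resolving/non-resolving verdict, precisely because the extra coordinate $k$ of the resolving vector $\vec{r}$ equals $1$.

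First I would handle the easy direction: if $\vec{r^-}$ resolves $(x,y)$, I want $\vec{r}$ to resolve $(x,y)$ as well. Here I would argue that appending a coordinate can only fail to preserve the resolution if the added term strictly lowers one of the two minima, i.e.\ if $\vec{(r_1+r)}_k < \min_{\ell<k}\vec{(r_1+r)}_\ell$ (or the same for $\vec{r_2}$). Since $\vec{r}_k=1$ we have $\vec{(r_1+r)}_k=\vec{(r_1)}_k+1$, and I would use this together with the constraint that the coordinates of $\vec{r_1}$ lie in $\{\vec{(r_1)}_k, \vec{(r_1)}_k+1\}$ up to the common base distance $d(x,X_j)$ to show such a strict drop would force $\vec{r_1}$ to be the all-ones trace, contradicting that $\vec{r_1}$ is a genuine trace in $\{0,1\}^k\setminus\{(1,\dots,1)\}$. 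For the converse direction, I would assume $\vec{r}$ resolves but $\vec{r^-}$ does not, so the minimum over $k$ coordinates differs from that over $k-1$ coordinates in exactly one of the two vectors; by symmetry say for $\vec{r_1}$, giving $\vec{(r_1+r)}_k < \min_{\ell<k}\vec{(r_1+r)}_\ell$, and again the same all-ones contradiction closes the case.

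The main obstacle, and the only place requiring genuine care, is justifying that the forgotten coordinate cannot be the \emph{strictly unique} minimiser: in Lemma~\ref{resolved_outside} the key geometric fact was that $X_i\setminus\{v\}$ separates $v$ from $x$, forcing $\vec{(r_1)}_k=1$. In the forget setting the analogue must come from the hypothesis $\vec{r}_k=1$ rather than from a separation property of $v$ itself (since $x,y$ now range over all of $\T{X_j}$ and $v$ need not be separated from them). So the proof is not literally identical, and I would need to check that the inequality $\vec{(r_1+r)}_k \ge 1$ together with $\vec{(r_1+r)}_\ell \le 2$ for $\ell<k$ still forces $\vec{r_1}=(1,\dots,1)$ in the problematic case. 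I expect this to go through exactly as in Lemma~\ref{resolved_outside}, which is presumably why the authors note the proof is "similar"; the careful bookkeeping of which coordinate's contribution of $+1$ comes from $\vec{r}$ versus from the trace is the one subtlety to get right.
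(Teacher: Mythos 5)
Your overall strategy matches the paper's: both directions reduce to the observation that if the verdict changes when the last coordinate is dropped, then for at least one of $x,y$ the minimum over all $k$ coordinates is \emph{strictly} smaller than the minimum over the first $k-1$, and one then exploits $\vec{r}_k=1$ together with the fact that all entries involved are in $\{0,1\}$. The gap is in where you land the contradiction. In the problematic case, $(\vec{r_1}+\vec{r})_k < (\vec{r_1}+\vec{r})_\ell$ for all $\ell<k$; combining $(\vec{r_1}+\vec{r})_k=(\vec{r_1})_k+1\geq 1$ with $(\vec{r_1}+\vec{r})_\ell\leq 2$ forces $(\vec{r_1})_k=0$ and $(\vec{r_1})_\ell=\vec{r}_\ell=1$ for every $\ell<k$. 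So the trace of $x$ comes out as $(1,\dots,1,0)$, \emph{not} $(1,\dots,1)$ as you claim, and $(1,\dots,1,0)$ is a perfectly legitimate trace: it just says $v$ is the unique nearest vertex of $X_j$ to $x$, which can certainly happen at a forget node (take $x$ in a branch of $T(X_j)$ hanging below a bag containing $v$). You correctly identified that the separation argument of Lemma~\ref{resolved_outside} is unavailable here, but the replacement contradiction you propose does not materialize, so the proof does not close.

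The contradiction has to be placed on the \emph{resolving} vector instead: the same inequalities give $\vec{r}_\ell=1$ for all $\ell<k$, hence $\vec{r}=(1,\dots,1)$, and this is impossible because in every use of the lemma $\vec{r}$ is the trace of an actual vertex outside $T(X_j)$, and traces lie by definition in $\{0,1\}^k\setminus\{(1,\dots,1)\}$ (this is what the paper's proof invokes, if somewhat elliptically). Note that this extra fact is genuinely needed and not derivable from the stated hypotheses: for an arbitrary binary vector with $\vec{r}_k=1$ the statement fails when $\vec{r}=(1,\dots,1)$ and $v$ is the unique nearest vertex of $X_j$ to $x$. So to repair your argument you must shift the all-ones contradiction from $\vec{r_1}$ to $\vec{r}$ and justify (or add as a hypothesis) that $\vec{r}$ is not the all-ones vector.
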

\begin{proof}
Let $\vec{t_1}=\vec{Tr_{X_i}}(x)$ and $\vec{t_2}=\vec{Tr_{X_i}}(y)$.
Assume $\vec{r}$ resolves $(x,y)$ and by contradiction that $\vec{r^-}$ does not resolve $(x,y)$. As $\vec{r}$ resolves $(x,y)$ we have  by Definition~\ref{def_resolve_vec}, $\min_{1 \leq l \leq k} \vec{(t_1+r)}_l \neq \min_{1 \leq l \leq k} \vec{(t_2+r)}_l$ and $\min_{1 \leq l \leq k-1} \vec{(t_1+r)}_l \neq \min_{1 \leq l \leq k-1} \vec{(t_2+r)}_l$.
So the minimum change in at least one case, assume by symmetry that $\min_{1 \leq l \leq k} \vec{(t_1+r)}_l \neq \min_{1 \leq i \leq k-1} \vec{(t_1+r)}_l$. Thus, for $l<k$, we have $\vec{(t_1+r)}_l > \vec{(t_1+r)}_k$. We know $\vec{r}_k=1$, so, for $l<k$, we have $\vec{(t_1+r)}_l > 1$. That gives $\vec{r}=(1,\ldots,1)$ which contradicts the fact that ${X_j}$ separates $v$ from $x_1$, one vertex of $X_j$ is strictly closer to $x$ than $v$.
Assume $\vec{r^-}$ resolves $(x,y)$ and by contradiction that $\vec{r}$ does not resolve $(x,y)$. Then, by symmetry we can assume that $\min_{1 \leq l \leq k} \vec{(t_1+r)}_l \neq \min_{1 \leq l \leq k-1} \vec{(t_1+r)}_l$, meaning $ \vec{(t_1+r)}_k < \min_{1 \leq l \leq k-1} \vec{(t_1+r)}_l$. We know $\vec{r}_k=1$ so $\vec{(t_1+r)}_k \geq 1$ and $\vec{(t_1+r)}_l =2$ for $1 \leq l \leq k-1$. So $\vec{r}=(1,\ldots1)$ which contradicts the fact that ${X_j}$ separates $v$ from $s$.\qed
\end{proof}

\begin{lemme}\label{calcul_forget1}
Let $I_1 \in \mathcal{F}_{F}(I)$ and $S$ be a solution for $I_1$, then $S$ is a solution for~$I$.
\end{lemme}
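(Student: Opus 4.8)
The plan is to verify directly that $S$ satisfies the four conditions (S1)--(S4) of Definition~\ref{def_instance} for the instance $I$. The structural fact that makes this possible is that a forget node does not change the vertices below it: $T(X_i) = T(X_j)$, so $S \subseteq T(X_i)$ is a legitimate candidate solution for $I$, and every pair that must be resolved for $I$ is a pair of $T(X_j)$ that $S$ already handles as a solution of $I_1$. The entire content is translation between the two bags: traces and distance vectors are taken with respect to $X_j = X_i \cup \{v\}$ for $I_1$ but with respect to $X_i$ for $I$, and these are linked by Lemma~\ref{resolved_outside2} and by the functions $f$, $f^-$, $f^+$.

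I would clear the two bookkeeping conditions first. For (S4), since $S$ solves $I_1$ we have $S \cap X_j = S_{I_1}$; intersecting with $X_i = X_j \setminus \{v\}$ and using \textbf{(F1)} gives $S \cap X_i = S_{I_1} \setminus \{v\} = S_I$. For (S2), take $\vec{r} \in \int{I}$; by \textbf{(F3)} either $\vec{r|0}$ or $\vec{r|1}$ lies in $\int{I_1}$, so $S$ contains a vertex $s$ with $\vec{Tr_{X_j}}(s)$ equal to $\vec{r|0}$ or $\vec{r|1}$. In both cases $v$ is not the unique closest vertex of $X_j$ to $s$ (for $\vec{r|1}$ because the last coordinate is $1$, and for $\vec{r|0}$ because $\vec{r}$ is a genuine trace, hence not all-ones, so another coordinate of $\vec{r}$ is $0$). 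Thus $d(s,X_i) = d(s,X_j)$ and $\vec{Tr_{X_i}}(s) = \vec{r}$, as needed.

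For (S1), a pair $(x,y)$ of $T(X_i) = T(X_j)$ resolved by a vertex of $S$ is resolved for $I$ as well. If instead it is resolved by a vector $\vec{w} \in \ext{I_1}$, I use that every vertex of $V \setminus T(X_i)$ reaches the forgotten vertex $v$ only through $X_i$ and is therefore strictly closer to $X_i$ than to $v$; this forces the relevant resolving vectors to have last coordinate $1$. Lemma~\ref{resolved_outside2} then transfers the resolution from $\vec{w}$ to its truncation $\vec{w^-}$, which by \textbf{(F2)} is the vector of $\ext{I}$ we need, so $(x,y)$ is resolved by $\ext{I}$.

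The main obstacle is (S3), which is exactly what the three-way condition \textbf{(F4)} is designed to feed. Given $(\vec{r},\vec{t}) \in \paire{I}$ and witnesses $x \in T(X_i)$ with $\distvec{X_i}{x} = \vec{r}$, $d(x,X_i)\le 2$, and $y \notin T(X_i)$ with $\distvec{X_i}{y} = \vec{t}$, $d(y,X_i)\le 2$, I first pass to the distance vectors on $X_j$. Since $X_i$ separates the external vertex $y$ from $v$, the remark preceding \textbf{(F4)} gives $\distvec{X_j}{y} = \vec{f^+(t)}$; and $\distvec{X_j}{x}$ equals exactly one of $\vec{f^-(r)}$, $\vec{f(r)}$, $\vec{f^+(r)}$ according to whether $d(x,v)$ is below, equal to, or above $d(x,X_i)$ (the last case being precisely when $X_i$ separates $x$ from $v$). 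Condition \textbf{(F4)} was stated so that in each branch the matching pair lies in $\paire{I_1}$; since $X_i \subseteq X_j$ yields $d(x,X_j)\le d(x,X_i)\le 2$ and $d(y,X_j)\le d(y,X_i)\le 2$, applying (S3) for the solution $S$ of $I_1$ produces a vertex of $S$ resolving $(x,y)$. The delicate point, which I expect to be the most error-prone part of the write-up, is matching the actual distance vector of the witness $x$ on $X_j$ with the correct branch of \textbf{(F4)} and checking that the distance-to-bag bounds survive the enlargement from $X_i$ to $X_j$.
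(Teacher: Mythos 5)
Your proof follows essentially the same route as the paper's: verify (S1)--(S4) directly using $T(X_i)=T(X_j)$, transfer external resolving vectors via Lemma~\ref{resolved_outside2} and \textbf{(F2)}, handle $\int{I}$ via \textbf{(F3)}, and reduce (S3) to the three-way case distinction of \textbf{(F4)} with $\distvec{X_j}{y}=\vec{f^+(t)}$ because $X_i$ separates $y$ from $v$. If anything you are slightly more careful than the paper (checking the last coordinate of the resolving vector before invoking Lemma~\ref{resolved_outside2}, and that the distance bounds $d(\cdot,X_j)\leq 2$ carry over), so no gap to report.
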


\begin{proof}
Let us prove that the conditions of Definition~\ref{def_instance} are satisfied.

\noindent \textbf{(S1)} Let $(x,y)$ be a pair of vertices of $T(X_i)$. As $T(X_i)=T(X_j)$, the pair is resolved by a vertex of $S$ or by a vector $\vec{r}$ of $\ext{I_1}$. If $(x,y)$ is resolved by $\vec{r} \in \ext{I_1}$ then by Lemma~\ref{resolved_outside2}, $\vec{r^-}$ resolves the pair $(x,y)$ and $\vec{r^-} \in \ext {I}$ by compatibility.

\noindent \textbf{(S2)} Let $\vec{r} \in \int{I}$. By compatibility, $\vec{r|0} \in \int{I_1}$ or $\vec{r|1} \in \int{I_1}$. Let $s \in S$ such that $\vec{Tr_{X_i}}(s) \in \{ \vec{r|0},\vec{r|1} \}$, then $\vec{Tr_{X_j}}(s) =\vec{r}$.

\noindent \textbf{(S3)} Let $(\vec{r},\vec{t}) \in \paire{I}$, $x \in V(T(X_{i}))$ and $y\notin T(X_{i})$ such that $\distvec{X_{i}}{x}=\vec{r}$ and $\distvec{X_{i}}{y}=\vec{t}$. Assume also $d(x,X_i) \leq 2$ and $d(y,X_i) \leq 2$. The set $X_j$ separates $v$ and $y$ so $\distvec{X_j}{y}= f^+(t)$. As  $\distvec{X_j}{x}$ is equal either to $\vec{f^-(r)}$, $\vec{f(r)}$ or to $\vec{f^+(r)}$, the pair $(x,y)$ is resolved by a vertex of $S$.

\noindent \textbf{(S4)} is clear.\qed

\end{proof}
\begin{lemme}\label{calcul_forget2}
Let $S$ be a solution for $I$ of minimal size. Then there exists $I_1$ compatible with $I$ such that $S$ is a solution of $I_1$.
\end{lemme}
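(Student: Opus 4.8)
The goal is to prove Lemma~\ref{calcul_forget2}, the reverse inequality $\dim(I) \geq \min_{I_1 \in \mathcal{F}_F(I)} \dim(I_1)$, by exhibiting a compatible child instance $I_1$ admitting $S$ as a solution. Let me think about what's going on.

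We have a forget node $i$ with child $j$, where $X_j = X_i \cup \{v\}$. The key fact is $T(X_i) = T(X_j)$ — same vertex set. So a solution $S$ for $I$ is a candidate solution for $I_1$ too.

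The strategy mirrors Lemmas~\ref{join_node4} and~\ref{join_node5}: define $I_1$ explicitly from $I$ and $S$, show it's compatible (satisfies F1-F4), then show $S$ solves it.

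**Defining $I_1$.**

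Since $X_j$ has one more coordinate than $X_i$, I need to "lift" the vectors from dimension $|X_i|$ to $|X_i|+1$. The functions $f, f^-, f^+$ do exactly this.

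Let me define:
- $S_{I_1} = S_I \cup (S \cap \{v\})$, so F1 holds: $S_I = S_{I_1} \setminus \{v\}$.
- For $\ext{I_1}$: for each $\vec{r} \in \ext{I}$, include $\vec{r|1}$ (to satisfy F2). This is the right lift because if a vector $\vec{r}$ resolves pairs in $T(X_i)$, then by Lemma~\ref{resolved_outside2} $\vec{r|1}$ does too (the new coordinate gets value 1 since $X_i$ separates the outside vertex from $v$).
- For $\int{I_1}$: for each vertex $s \in S$ forced by $\int{I}$, its actual trace $\vec{Tr_{X_j}}(s)$ determines whether to use $\vec{r|0}$ or $\vec{r|1}$ (satisfying F3).
- For $\paire{I_1}$: use the actual distance vectors in $X_j$ (via $f, f^-, f^+$ applied to the $X_i$-vectors) and include pairs resolved by $S$.

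**Issues to address.**

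First, the instance $I_1$ must be well-defined, e.g., $\ext{I_1} \neq \emptyset$ or $S_{I_1} \neq \emptyset$, which follows since $I$ satisfies this.

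Second, verifying compatibility (F1-F4). F1, F2 are immediate from the construction. F3 requires knowing that for each $\vec{r} \in \int{I}$, the forced vertex $s$ of $S$ (with $\vec{Tr_{X_i}}(s) = \vec{r}$) has a trace on $X_j$ equal to $\vec{r|0}$ or $\vec{r|1}$ — this is automatic since the trace is binary. F4 is the subtle one: for each $(\vec{r},\vec{t}) \in \paire{I}$ and each realizing pair of vertices $(x,y)$ with $\distvec{X_j}{x} \in \{\vec{f^-(r)}, \vec{f(r)}, \vec{f^+(r)}\}$ and $\distvec{X_j}{y} = \vec{f^+(t)}$, I must put the appropriate lifted pair into $\paire{I_1}$.

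Third, verifying $S$ solves $I_1$ (S1-S4). S1: pairs in $T(X_j) = T(X_i)$ resolved by $S$ or $\ext{I}$ in $I$; the $\ext{I}$ case transfers via Lemma~\ref{resolved_outside2} and F2. S2: follows from the $\int{I_1}$ construction. S3: this is where the pair construction pays off. S4: from F1.

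**The main obstacle.**

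The hard part is F4 and the matching S3 verification. The complication is the three-way branching in the definition of $\paire{I_1}$ via $f^-, f, f^+$: whether $v$ is closer, equidistant, or farther matters for how $x$'s distance vector extends to $X_j$. I need to carefully establish that when I lift a pair $(\vec{r},\vec{t}) \in \paire{I}$ to $X_j$, every combination of extension functions that corresponds to an actual pair of vertices $(x,y)$ in $G$ gets included in $\paire{I_1}$, and conversely that membership in $\paire{I_1}$ suffices to resolve the corresponding pairs. The cleanest approach is: for each realizing pair $(x,y)$ in $G$ witnessing $(\vec{r},\vec{t}) \in \paire{I}$, compute $\distvec{X_j}{x}$ and $\distvec{X_j}{y}$ directly (using that $X_i$ separates $y$ from $v$, forcing the $f^+$ extension on $y$), insert the resulting pair into $\paire{I_1}$, and note that since $S$ resolves $(x,y)$ in $G$ as a solution of $I$, the trace-based resolution condition for $\paire{I_1}$ is met. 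I expect the forward direction (building $I_1$) to be routine bookkeeping once the three extension cases are organized, with the genuine care required only in matching the $f$-function case split to the geometric fact about which coordinate of the $X_j$-distance-vector is minimal.
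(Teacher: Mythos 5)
Your proposal follows essentially the same route as the paper: you construct $I_1$ explicitly from $I$ and $S$ (lifting $\ext{I}$ via $\vec{r|1}$ and justifying it with Lemma~\ref{resolved_outside2}, taking $\int{I_1}$ from traces on $X_j$, and building $\paire{I_1}$ through the $f$, $f^-$, $f^+$ case split with the observation that $X_i$ separating $y$ from $v$ forces the $f^+$ extension on $y$), then verify \textbf{(F1)}--\textbf{(F4)} and \textbf{(S1)}--\textbf{(S4)} exactly as the paper does. The only cosmetic differences are that you restrict $\int{I_1}$ and $\paire{I_1}$ to realizable/forced entries where the paper takes all of $Tr_{X_j}(S)$ and all three lifts, which changes nothing in the argument.
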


\begin{proof}
Let $S$ be an solution for $I$ of minimal size. Let $I_1$ be the following instance: $S_{I_1} = S \cap X_j$, $\int {I_1}= \{ \vec{Tr_{X_j}}(s),s \in S\}, \ext{I_1} = \{\vec{r}|1, \vec{r} \in \ext{I} \}, \paire {I_1} =\{ (\vec{f^-(r)},\vec{f^+(t)}), (\vec{r},\vec{t}) \in \paire{I} \} \cup \{ (\vec{f(r)},\vec{f^+(t)}), (\vec{r},\vec{t}) \in \paire{I} \} \cup \{ (\vec{f^+(r)},\vec{f^+(t)}) (\vec{r},\vec{t}) \in \paire{I} \}$.
We first check the compatibility.

\noindent \textbf{(F1)},  \textbf{(F2)} and  \textbf{(F4)} are straightforward.

\noindent \textbf{(F3)} Let $\vec{r} \in \int{I}$ and $s\in S$ such that $\vec{Tr_{X_i}}(s)=\vec{r}$. By construction,  $\vec{r'}=\vec{Tr_{X_j}}(s)$ belongs to $\int {I_1}$ and $\vec{r'}=\vec{r^-}$ so $\vec{r|0} \in \int{I_1}$ or $\vec{r|1} \in \int{I_1}$. 
\vspace{1mm}

We prove now $S$ is a solution for $I_1$.

\noindent \textbf{(S1)} Let $(x,y)$ be a pair of vertices of $T(X_j)$. As $T(X_i)=T(X_j)$, the pair is resolved by a vertex of $S$ or by a vector $\vec{r}$ of $\ext{I}$. If $(x,y)$ is resolved by $\vec{r} \in \ext{I}$ then by Lemma~\ref{resolved_outside2}, $\vec{r|1}$ resolves the pair $(x,y)$ and $\vec{r|1} \in \ext {I_1}$ by construction.

\noindent \textbf{(S2)} Let $\vec{r} \in \int{I_1}$. By construction there is $s \in S$ such that $\vec{Tr_{X_j}}(s)=\vec{r}$.

\noindent \textbf{(S3)} Let $(\vec{r},\vec{t}) \in \paire{I_1}$, $x \in V(T(X_j))$ and $y\notin T(X_j)$ such that $\distvec{X_j}{x}=\vec{r}$ and $\distvec{X_j}{y}=\vec{t}$. Assume also $d(x,X_i) \leq 2$ and $d(y,X_i) \leq 2$. Then $\vec{Tr_{X_j}}(y)=f^+(\vec{t})$ and  $\vec{Tr_{X_j}}(x) \in  \{ \vec{f^-(r)}, \vec{f(r)}, \vec{f^+(r)} \}$. Since $S$ is a solution of $I$, the pair $(x,y)$ is resolved by a vertex of $S$.

\noindent \textbf{(S4)} is clear.\qed
\end{proof}

Lemma~\ref{node_forget_main} is a consequence of Lemmas~\ref{calcul_forget1} and~\ref{calcul_forget2}.

\subsection{Algorithm}

Given as input a nice clique tree, the algorithm computes the extended metric dimension of all the possible instances bottom up from the leaves. The algorithm computes the values for leaves using Lemma~\ref{calcul_leaf}, for join nodes using Lemma~\ref{node_pair_main}, for introduce nodes using Lemma~\ref{node_introduce_main} and forget nodes using Lemma~\ref{node_forget_main}.
The correction of the algorithm is straightforward by these lemmas. 

We denote this algorithm by $IMD$ in the following which takes as input a nice clique tree $T$ and outputs the minimal size of a resolving set of $G$ containing the root of $T$.

\section{Proof of Theorem~\ref{thm:main}}\label{sec:complexity}

Let us finally explain how we can compute the metric dimension of $G$.

\begin{lemme} \label{multi}
The metric dimension of $G$ is $ \min_{v \in G} \{IMD(T(v))\}$ where $T(v)$ is a nice clique tree of $G$ rooted in $v$.
\end{lemme}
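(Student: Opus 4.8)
The plan is to show the two inequalities that together establish $\dim(G) = \min_{v \in G} \{IMD(T(v))\}$, relying on Lemma~\ref{correct} as the bridge between the extended metric dimension and the ordinary one. Recall that $IMD(T(v))$ is defined to return $\dim(I_0)$ where $I_0 = (\{v\}, \{v\}, \emptyset, \emptyset, \emptyset)$ is the instance at the root, and by Lemma~\ref{correct} this equals the smallest size of a resolving set of $G$ \emph{containing} $v$. So the statement to prove is really that the metric dimension of $G$ equals the minimum, over all vertices $v$, of the smallest resolving set containing $v$.

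First I would prove $\dim(G) \leq \min_{v \in G} \{IMD(T(v))\}$. For each vertex $v$, the quantity $IMD(T(v))$ is the size of some resolving set of $G$ (one that happens to contain $v$). Any resolving set of $G$ has size at least $\dim(G)$ by definition of the metric dimension, so $IMD(T(v)) \geq \dim(G)$ for every $v$, and taking the minimum over $v$ preserves this inequality. This direction is immediate and requires no construction.

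Next I would prove the reverse inequality $\dim(G) \geq \min_{v \in G} \{IMD(T(v))\}$. Let $S^\star$ be a resolving set of $G$ of minimum size, so $|S^\star| = \dim(G)$, and $S^\star$ is nonempty since $G$ has at least two vertices. Pick any vertex $w \in S^\star$. Then $S^\star$ is a resolving set of $G$ that contains $w$, so by Lemma~\ref{correct} the smallest resolving set containing $w$ has size at most $|S^\star|$; that is, $IMD(T(w)) \leq |S^\star| = \dim(G)$. Since $\min_{v \in G}\{IMD(T(v))\} \leq IMD(T(w))$, we conclude $\min_{v \in G}\{IMD(T(v))\} \leq \dim(G)$, as desired. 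Combining the two inequalities yields the equality.

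The argument is essentially a routine application of the correctness of the extended problem, so there is no serious obstacle; the only point to be careful about is that one invokes Lemma~\ref{correct} correctly, namely that $IMD(T(v))$ genuinely computes the minimum size of a resolving set through the root instance $I_0$, and that the minimum resolving set is nonempty so that a root vertex $w$ can be chosen. The quantified minimum then passes through cleanly because the optimal resolving set is realized as ``containing $w$'' for at least one choice of $w$, and no resolving set can do better than $\dim(G)$ globally.
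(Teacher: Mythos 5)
Your proof is correct and follows the same two-inequality argument as the paper: the first direction because $IMD(T(v))$ always returns the size of some resolving set, and the second by picking a vertex $w$ in a minimum resolving set and invoking Lemma~\ref{correct} to get $IMD(T(w)) \leq \dim(G)$. No substantive difference from the paper's proof.
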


\begin{proof}
For any input, $IMD(T(v))$ outputs the size of a resolving set of $G$. So, $ \min_{v \in G} \{IMD(T(v))\} \geq \dim(G)$. Let $S$ be a minimum resolving set of $G$ and let $v \in S$. By Lemma~\ref{correct}, $IMD(T(v))$ outputs the minimum size of a resolving set containing $v$ so $ \min_{v \in G} \{IMD(T(v))\} \leq \dim(G)$ which complete the proof. \qed
\end{proof}

In particular, $n$ executions of the $IMD$ algorithm with different inputs are enough to compute the metric dimension. Lemma~\ref{tree_dec} ensures that we can find for any vertex $v$ of $G$ a nice clique tree in linear time, the last part is to compute the complexity of the $IMD$ algorithm.

To get the announced complexity, we add a first step to the $IMD$ algorithm: for each bag $X$, we compute $\ddeux{X} \cap T(X)$ and $\ddeux{X} \cap (G \setminus T(X))$. This computation can be done in $O(n^2)$ times (recall that $T$ has a linear number of bag by Lemma~\ref{tree_dec}). Note also that the size of $\ddeux{X} \cap T(X)$ and $\ddeux{X} \cap (G \setminus T(X))$ depends only of $|X_i|$.

To compute the complexity, we need to compute the number of instances and the time to solve an instance. For simplicity we let $\alpha(k):= 2^k\cdot 2^{2^k}\cdot 2^{2^k}\cdot 2^{4^{2k}}$.

\begin{lemme}\label{complexity_instance}
Let $I $ be any instance for a node $i$ and assume $\dim(I')$ is known for every instance $I'$ compatible with $I$ for every child of $i$ . Then $\dim (I)$ can be computed in time $O(f(|X_i|))$ for a computable function $f$.
\end{lemme}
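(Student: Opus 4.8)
The statement to prove is Lemma~\ref{complexity_instance}: for any instance $I$ of a node $i$, once $\dim(I')$ is known for all compatible children instances, $\dim(I)$ can be computed in time $O(f(|X_i|))$ for some computable $f$. The plan is to bound, separately for each of the four node types, (a) the number of candidate children instances that must be enumerated and (b) the cost of testing compatibility and combining values, and to observe that both quantities depend only on $|X_i|$ (hence on $\omega$, since $|X_i|\le\omega$). The key observation that makes everything finite in $|X_i|$ is that every component of an instance lives in a universe whose cardinality is a function of $k:=|X_i|$ alone: $S_I\subseteq X_i$ has at most $2^k$ choices, $\int I$ and $\ext I$ are subsets of $\{0,1\}^k$ so each ranges over at most $2^{2^k}$ possibilities, and $\paire I\subseteq [0,3]^k\times[0,3]^k$ ranges over at most $2^{4^{2k}}$ possibilities. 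Thus the total number of \emph{instances} for a fixed bag is at most $\alpha(k)=2^k\cdot 2^{2^k}\cdot 2^{2^k}\cdot 2^{4^{2k}}$, and in particular the number of children instances to inspect is bounded by $\alpha(k)$ (or $\alpha(k)^2$ for join nodes, since one enumerates pairs).

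I would then go node by node. For a \textbf{leaf node}, Lemma~\ref{calcul_leaf} gives $\dim(I)$ by a constant-size case distinction, so the cost is $O(1)$. For a \textbf{forget node}, by Lemma~\ref{node_forget_main} we take a minimum of $\dim(I_1)$ over all compatible $I_1\in\mathcal{F}_F(I)$; I would enumerate all $\le\alpha(k+1)$ candidate instances $I_1$ for the child (the child bag has size $k+1$), test each of the compatibility conditions \textbf{(F1)}–\textbf{(F4)} directly from Definition~\ref{compatible_forget}, and keep the minimum known value. Each compatibility test manipulates sets of vectors of size bounded by a function of $k$, and checking conditions such as \textbf{(F4)} requires deciding, for a pair $(\vec r,\vec t)$, whether suitable witnesses $x,y$ exist — this is exactly where the precomputed sets $\ddeux{X}\cap T(X)$ and $\ddeux{X}\cap(G\setminus T(X))$ from the algorithm's first step are used, so the test reduces to membership queries rather than a graph search. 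The \textbf{introduce node} is handled identically using Lemma~\ref{node_introduce_main}: enumerate candidate $I_1$, classify each as type~1 or type~2 via \textbf{(I1)}–\textbf{(I5)}, and take the minimum of $\dim(I_1)$ (type~1) and $\dim(I_1)+1$ (type~2). For the \textbf{join node}, Lemma~\ref{node_pair_main} requires minimizing $\dim(I_1)+\dim(I_2)-|S_I|$ over \emph{pairs} $(I_1,I_2)\in\mathcal{F}_J(I)$; here I would enumerate all $\le\alpha(k)^2$ pairs, test \textbf{(J1)}–\textbf{(J5)}, and take the minimum.

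The only mildly delicate point is that conditions \textbf{(J4)} and \textbf{(J5)} involve the auxiliary sets $C_1,C_2,D_1,D_2$ and the resolving relation of Definition~\ref{def_resolve_vec}. I would note that deciding whether a fixed vector $\vec u$ resolves a fixed pair $(\vec r,\vec t)$ is an $O(k)$ computation (two minima over $k$ coordinates), that $\ddeux{X_{i_1}}$ and $\ddeux{G\setminus X_{i_1}}$ have size bounded by a function of $k$ and are available from the precomputation, and hence that constructing $C_\ell,D_\ell$ and checking the set inclusion in \textbf{(J4)} costs at most $g(k)$ for a computable $g$; the same holds for the quantified condition \textbf{(J5)}. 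Consequently each of the four node types is handled in time bounded by some computable function of $k=|X_i|$ — taking $f$ to be the maximum of these per-type bounds (dominated by the join case, roughly $\alpha(k)^2\cdot g(k)$) gives the claimed $O(f(|X_i|))$ and completes the proof. The main obstacle is purely bookkeeping: verifying that every compatibility predicate is decidable by operations on objects whose size is a function of $k$ alone, which is precisely why the precomputation of the $\ddeux{X}$ sets is inserted before the enumeration.
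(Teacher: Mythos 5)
Your proposal is correct and follows essentially the same approach as the paper: enumerate all candidate children instances (bounded by $\alpha(|X_i|)$, or pairs thereof for join nodes), test each compatibility condition in time depending only on $|X_i|$ using the precomputed sets $\ddeux{X}\cap T(X)$ and $\ddeux{X}\cap(G\setminus T(X))$, and take the minimum via the combination formulas. The paper's proof is somewhat terser but relies on exactly the same counting and the same precomputation.
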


\begin{proof}
If $i$ is a leaf node then $\dim (I)$ can be computed in constant time by Lemma~\ref{calcul_leaf}. Otherwise, let us prove that one can compute for all the instances on the child nodes (at most two child nodes) all the compatible instances.
Given a $5$-uplet  $(X_i,S_I, \int I,\ext I,\paire{I})$, checked if it is an instance according to Definition~\ref{def_instance} can be done in time $O(|I|\cdot g(|X_i|))$ where $g$ gives the size of $\ddeux{X_i} \cap T(X_i)$ plus the size $\ddeux{X_i} \cap (G \setminus T(X_i))$. The number of such $5$-uplet $(X_i,S_I, \int I,\ext I,\paire{I})$ is bounded by $\alpha(|X_i|)$. Thus, identifying the instances among all the $5$-uplet ca be done in time depending only of $|X_i|$.

Checking the compatibility can be done in a time that only depends on $|X_i|$. Condition \textbf{(J5)} can be checked in time $O(|X_i|^2 \cdot |I|)$ to check for each pair of vectors if a vector of $\ext{I}$ resolves it. 
Condition \textbf{(I5)} can be checked in time $O(|X_i|)$ and condition \textbf{(F4)} in time $O(|X_i|^2)$. The other compatibility conditions can be checked in time $O(|I|)$ and by Definition~\ref{def_instance}, $|I|$ is bounded by a function of $|X_i|$.
Then, computing the minimum using the formulas of Lemmas~\ref{node_pair_main},~\ref{node_introduce_main} and~\ref{node_forget_main} can be done in time $O(\alpha(X_i))$. Ultimately the computation of $\dim(I)$ is done in time bounded by a function of $X_i$. \qed
\end{proof}

\begin{cor}\label{complexiteIMD}
The algorithm for $IMD$ runs in time $O(n(T)^2 + n(T) \cdot f(\omega))$ where $n(T)$ is the number of vertices of the input tree $T$ and $f=O(k^2 \cdot 2^{O(4^{2^k})})$ is a function that only depends on the size of a maximum clique $\omega$.
\end{cor}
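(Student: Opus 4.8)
The plan is to bound separately the two contributions to the running time: the one-time preprocessing that is global to the tree, and the per-node work accumulated over all bags. By Lemma~\ref{tree_dec}, the input nice clique tree $T$ has $n(T) = O(n)$ nodes, so bounding the per-node cost by $f(\omega)$ and multiplying by the number of nodes will give the $n(T)\cdot f(\omega)$ term, while the preprocessing will account for the $n(T)^2$ term.

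First I would handle the preprocessing step described just before Lemma~\ref{complexity_instance}: for each bag $X$ one computes $\ddeux{X}\cap T(X)$ and $\ddeux{X}\cap(G\setminus T(X))$. As noted in the text, this is done in total time $O(n^2)$ by a traversal that computes the relevant distance vectors; since $n(T)=O(n)$ this is $O(n(T)^2)$. Crucially, the \emph{sizes} of these two sets depend only on $|X|\le\omega$, because a distance vector of a vertex within distance $2$ of a clique $X$ of size $k$ is determined by its offset pattern, and there are only $f(k)$ such patterns. This observation is what lets every later per-node quantity be bounded by a function of $\omega$ alone.

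Next I would invoke Lemma~\ref{complexity_instance} to bound the work at a single node. That lemma already shows $\dim(I)$ can be computed in time $O(f(|X_i|))$ once the children's values are known, and the proof there bounds the number of candidate $5$-uplets by $\alpha(|X_i|)=2^{k}\cdot 2^{2^k}\cdot 2^{2^k}\cdot 2^{4^{2k}}$ and the cost of checking each instance and each compatibility condition by a function of $|X_i|$ only. The dominant term in $\alpha(k)$ is $2^{4^{2k}}$ coming from $\paire{I}\subseteq[0,3]^{k}\times[0,3]^k$, and folding in the polynomial checking overheads (the $O(|X_i|^2\cdot|I|)$ cost of \textbf{(J5)}, etc.) yields a per-node bound of the stated form $f(k)=O(k^2\cdot 2^{O(4^{2^k})})$. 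The only subtlety I would be careful about is that, at a \emph{join} node, the minimisation in Lemma~\ref{node_pair_main} ranges over \emph{pairs} of compatible instances, so the naive cost is $\alpha(k)^2$; this is still a function of $k$ alone, so after absorbing it into the $O$-notation inside the double exponent it does not change the shape of $f$.

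Finally I would assemble the total: since every node is either a leaf, an introduce, a forget, or a join node, and each is processed in time $f(\omega)$, summing over the $O(n(T))$ nodes gives $O(n(T)\cdot f(\omega))$; adding the preprocessing gives the claimed $O(n(T)^2 + n(T)\cdot f(\omega))$. The main obstacle I anticipate is not the accounting but making the per-node bound \emph{clean}: one must verify that every ingredient appearing in the compatibility definitions (Definitions~\ref{compatible_pair}, \ref{compatible_introduce}, \ref{compatible_forget})—in particular the sets $C_1,C_2,D_1,D_2$ and the resolution checks against $\ext{I}$ and $\int{I}$—can be computed and tested using only the precomputed $\ddeux{\cdot}$ data and quantities of size $f(\omega)$, with no hidden dependence on $n$. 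Once that is confirmed, the final estimate follows by routine bookkeeping, and combining with Lemma~\ref{multi} (which requires $n$ runs of $IMD$, each on a tree of size $O(n)$) recovers the $O(n^3 + n^2\cdot f(\omega))$ bound of Theorem~\ref{thm:main}.
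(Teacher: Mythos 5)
Your proposal follows essentially the same route as the paper: the $O(n^2)$ preprocessing of the $\ddeux{\cdot}$ sets accounts for the quadratic term, and Lemma~\ref{complexity_instance} together with the bound $\alpha(\omega)$ on the number of instances per node gives the $n(T)\cdot f(\omega)$ term. Your extra remark that join nodes range over pairs of compatible instances (cost $\alpha(\omega)^2$, absorbed into $f$) is a useful precision the paper leaves implicit, but it does not change the argument.
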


\begin{proof}
By definition of the treewidth, for any bag $X$ of $T$, $|X| \leq \omega$. The first step of computation to get $\ddeux{X_i} \cap T(X_i)$ and $\ddeux{X_i} \cap (G \setminus T(X_i))$ takes time $O(n^2)$. Then, the number of instances to compute for each vertex of $T$ is bounded by $\alpha(\omega)$ and each instance $I$ can be computed in time bounded $O(\omega^2 \cdot |I|)$ by Lemma~\ref{complexity_instance}. \qed
\end{proof}

We now have all the ingredients to prove Theorem~\ref{thm:main}:
%\begin{theorem}
%Let $G$ be a chordal graph of treewidth $\omega$. The metric dimension of $G$ can be computed in time $O(n^3+n^2 \cdot f(\omega))$ for a computable function $f$.
%\end{theorem}

\begin{proof}
For each vertex $v$ of $G$, one can compute a nice clique tree of size at most $7n$ according to Lemma~\ref{tree_dec}. Given this clique tree, the $IMD$ algorithm outputs the size of a smallest resolving set containing $v$ by Lemma~\ref{correct} in time $O(n(T)^2 + n(T) \cdot f(\omega))$ for a computable function $f$ according to Corollary~\ref{complexiteIMD}. Repeat this for all vertices of $G$ permits to compute the metric dimension of $G$ by Lemma~\ref{multi} in time $O(n^3+n^2 \cdot f(\omega))$.\qed
\end{proof}

\bibliography{biblio} 
\bibliographystyle{alpha} 

\end{document}